\DeclareMathOperator*{\argmax}{argmax}
\renewcommand{\hat}{\widehat}
\renewcommand{\tilde}{\widetilde}
\newcommand{\bE}{\mathbb E}
\theoremstyle{definition}
\newtheorem{theorem}{Theorem}[section]
\newtheorem{assumption}{Assumption}[section]
\newtheorem{corollary}{Corollary}[section]
\newtheorem{lemma}{Lemma}[section]
\newtheorem{remark}{Remark}[section]
\numberwithin{equation}{section}
\title{Estimating Marginal Treatment Effects under Unobserved Group Heterogeneity}
\author{Tadao Hoshino\thanks{School of Political Science and Economics, Waseda University, 1-6-1 Nishi-waseda, Shinjuku-ku, Tokyo 169-8050, Japan. Email: \href{mailto:thoshino@waseda.jp}{thoshino@waseda.jp}.} \
	and Takahide Yanagi\thanks{Graduate School of Economics, Kyoto University, Yoshida Honmachi, Sakyo, Kyoto, 606-8501, Japan. Email: \href{mailto:yanagi@econ.kyoto-u.ac.jp}{yanagi@econ.kyoto-u.ac.jp}.}}
\date{This version: May 2022 \hspace{1cm} First version: January 2020}
\begin{document}

\onehalfspacing

\maketitle

\begin{abstract}
	This paper studies treatment effect models in which individuals are classified into unobserved groups based on heterogeneous treatment rules.
	Using a finite mixture approach, we propose a marginal treatment effect (MTE) framework in which the treatment choice and outcome equations can be heterogeneous across groups.
	Under the availability of instrumental variables specific to each group, we show that the MTE for each group can be separately identified.
	Based on our identification result, we propose a two-step semiparametric procedure for estimating the group-wise MTE.
	We illustrate the usefulness of the proposed method with an application to economic returns to college education.
		
	\bigskip
		
	\noindent \textbf{Keywords}: endogeneity, finite mixture, instrumental variables, marginal treatment effects, unobserved heterogeneity.
\end{abstract}
	
\newpage 
	
	
\section{Introduction}

Assessing heterogeneity in treatment effects is an important issue for precise treatment evaluation.
The marginal treatment effect (MTE) framework (\citealp{heckman1999local, heckman2005structural}) has been increasingly popular in the literature as it provides us with rich information about the treatment heterogeneity not only in terms of observed individual characteristics, but also in terms of unobserved individual cost of the treatment.
Moreover, once the MTE is estimated, it can be used to build other treatment parameters such as the average treatment effect (ATE) and the local average treatment effect (LATE).
For recent developments on the MTE approach, see, for example, \cite{cornelissen2016late}, \citet{lee2018identifying}, \citet{mogstad2018using}, and \cite{mogstad2018identification}.

While the conventional treatment evaluation methods can address heterogeneity across observable groups of individuals, many applications may exhibit ``unobserved'' group-wise heterogeneity in treatment effects for various reasons.
For example, the presence of multiple treatment eligibility criteria may create unobserved groups.
As a typical application, consider evaluating the causal effect of college education.
Since schools typically offer a variety of admissions options such as entrance exams and sports referrals, this process classifies individuals into several groups, and the admission criteria to which each individual has applied is typically unknown to researchers.
Such differences in admissions requirements may result in heterogeneous treatment effects of college education.
Another potential reason for the presence of unobserved group heterogeneity is that the population may be composed of groups with different preference patterns.
For instance, consider estimating the causal effect of foster care for abused children, as in \citet{doyle2007child}.
Here, the treatment variable of interest is whether the child is put into foster care by the child protection investigator.
The author discusses a possibility that the child protection investigators may have different preference patterns that place relative emphasis on child protection.
These examples suggest unobserved group patterns in the treatment choice process that may lead to some heterogeneity in treatment effects.
\bigskip

In this paper, we study endogenous treatment effect models in which individuals are grouped into latent subpopulations, where the presence of the latent groups is accounted for by a finite mixture model.
Finite mixture approaches have been successfully used in various fields to analyze data from heterogeneous subpopulations (\citealp{mclachlan2004finite}).
For example, many empirical studies on economics employ finite mixture approaches to address unobserved group heterogeneity (e.g., \citealp{keane1997career}; \citealp{cameron1998life}).
However, the use of finite mixture models in treatment evaluation has been considered only in a few specific applications (e.g., \citealp{harris2009impact}; \citealp{munkin2010disentangling}; \citealp{deb2018heterogeneous}; \citealp{samoilenko2018using}).
Compared with these studies, our modeling approach is applicable to various contexts and formally builds on \citeauthor{rubin1974estimating}'s (\citeyear{rubin1974estimating}) causal model by directly extending it to finite mixture models.

For this model, we develop identification and estimation procedure for the MTE parameters that can be unique to each latent group.
The proposed group-wise MTE is a novel framework in the literature, which should be informative for understanding the heterogeneous nature of treatment effects by capturing both group-level and individual-level unobserved heterogeneity simultaneously.
Importantly, as we discuss below, the presence of unobserved group heterogeneity threatens the validity of the conventional instrumental variables (IVs)-based causal inference methods, such as the conventional MTE approach and the two-stage least squares approach (2SLS) for estimating the LATE.
Specifically, we demonstrate that the presence of unobserved heterogeneous groups may invalidate the \textit{monotonicity} condition (e.g., \citealp{imbens1994identification,heckman2018unordered}).\footnote{
	Recently, there has been an increasing number of studies that deal with situations where the conventional monotonicity condition would not be satisfied (e.g., \citealp{lee2018identifying, mogstad2020causal, mogstad2020policy, mountjoy2019community, hoshino2021treatment}).
	Among them, our model setup is closely related to the one in \cite{mogstad2020causal, mogstad2020policy} in that these papers explicitly allow for heterogeneity in the treatment choice action associated with multiple IVs.
}
This result implies that the conventional MTE parameter may not even be well-defined under unobserved group-wise heterogeneity.

Our identification strategy builds on the method of local IV by \cite{heckman1999local}.
Our main identification result requires three key conditions.
The first condition is that there exists a valid group-specific continuous IV.
As in the standard IV estimation, each group-specific IV must satisfy that the IV is independent of unobserved variables and that it is a determinant of the treatment.
The second condition is the exogeneity of group membership; that is, group membership is conditionally independent of the unobserved variables affecting treatment choices.
Since the second condition may be demanding in some applications, we also provide supplementary identification results when group membership is endogenous.
The third condition is the identifiability of the ``first-stage'' treatment choice equation as in the conventional MTE analysis.
Since nonparametrically identifying finite-mixture binary response models is extremely challenging, restricting our attention to finite-mixture Probit models, we provide sufficient conditions for the identification of the treatment choice model.\footnote{
	Several recent studies consider nonparametric identification of finite mixture models.
	For example, \citet{bonhomme2016non} present an identification result based on repeated measurements data satisfying some independence property, and \citet{kitamura2018nonparametric} develop nonparametric identification for a regression model with additive error.
	However, to the best of our knowledge, no existing results are applicable to our case.
	Nonparametric identification analysis on finite-mixture binary response models would be of great interest, but it still remains an open research question.
}

Based on our constructive identification results, we propose a two-step semiparametric estimator for the group-wise MTEs.
In the first step, we estimate a finite-mixture treatment choice model using a parametric maximum likelihood (ML) method.
In the second step, the MTE parameters are estimated using a series approximation method.
Under certain regularity conditions, we show that the proposed MTE estimator is consistent and asymptotically normally distributed.

As an empirical illustration, we investigate the effects of college education on annual income using the data for labor in Japan.
We focus on heterogeneity caused by the following two latent groups: the first comprises individuals whose college enrollment decisions are mainly affected by regional educational characteristics (group 1), while the second is a group of individuals who are mainly affected by at-home study environment (group 2).
More specifically, we employ regional characteristics, such as the local college enrollment rate and the rate of workforce participation for high school graduates, as the primary IVs for group 1.
For group 2, we create variables that measure the quality of study environment at home, and use these as the IVs.
Our empirical results indicate that for group 1, the treatment effect of college education is significantly positive if the (unobserved) cost of going on to a college is small.
In contrast, we cannot find such heterogeneity for group 2.

\paragraph{Organization of the paper.}

Section \ref{sec:identMTE} introduces our model and presents our main identification result for the group-wise MTE.
In Section \ref{sec:estimation}, we discuss the estimation procedure for the MTE parameters and prove its asymptotic properties.
Section \ref{sec:extension} provides two additional discussions: first, on the identification of MTE when group membership is endogenous and, second, on the identification of LATE when only binary IVs are available.
Section \ref{sec:simulation} gives the results of Monte Carlo experiments.
Section \ref{sec:empirical} presents the empirical illustration, and Section \ref{sec:conclusion} concludes the paper.


\section{Identification of MTE under Unobserved Group Heterogeneity}\label{sec:identMTE}
\subsection{The model}\label{subsec:model}

In this section, we introduce our treatment effect model that allows for the presence of an unknown mixture of multiple subpopulations.
Throughout the paper, we assume that the number of groups is finite and known, which is denoted as $S \in \mathbb{N}$.
Each individual belongs to only one of the $S$ groups, and the group the individual belongs to, which we denote by $s \in \{1, \dots, S\}$, is a latent variable unknown to us.
Our goal is to measure the causal effect of a treatment variable $D \in \{0, 1\}$ on an outcome variable $Y \in \mathbb{R}$ for each group separately.
Let $Y^{(d)}$ be the potential outcome when $D = d$.
Then, the observed outcome can be written as $Y = D Y^{(1)} + (1 - D) Y^{(0)}$.
Suppose that the potential outcome equation is given by
\begin{align}\label{eq:model-Y}
	Y^{(d)} = \mu^{(d)} \left( X, s, \epsilon^{(d)} \right),
\end{align}
where $X \in \mathbb{R}^{\mathrm{dim}(X)}$ is a vector of observed covariates, $\epsilon^{(d)} \in \mathbb{R}$ is an unobserved error term, and $\mu^{(d)}$ is an unknown structural function.
This model specification is fairly general in that the functional form of $\mu^{(d)}$ is fully unrestricted and the distribution of the treatment effect $Y^{(1)} - Y^{(0)}$ can be heterogeneous across different groups.

Based on the latent index framework by \cite{heckman1999local,heckman2005structural}, we characterize our treatment choice model as follows: 
\begin{align}\label{eq:treatment}
	D = 
	\begin{cases}
	\mathbf{1}\left\{\mu^D_1(Z_1) \ge \epsilon_1^D \right\} & \text{if} \;\; s = 1\\ 
	\multicolumn{2}{c}{\vdots} \\
	\mathbf{1}\left\{\mu^D_S(Z_S) \ge \epsilon_S^D \right\} & \text{if} \;\; s = S
	\end{cases}
\end{align}
where $\mathbf{1}\{\cdot\}$ is an indicator function that takes one if the argument inside is true and zero otherwise, for $j \in \{ 1, \ldots , S \}$, $Z_j \in \mathbb{R}^{\mathrm{dim}(Z_j)}$ is a vector of IVs that may contain elements of $X$, $\epsilon_j^D \in \mathbb{R}$ is an unobserved continuous random variable, and $\mu_j^D$ is an unknown function. 
We allow for arbitrary dependence between $\epsilon_j^D$'s.
Assume that for all $j$, the error $\epsilon_j^D$ is independent of $Z_j$'s conditional on $X$. 
Moreover, we require that each $Z_j$ includes at least one group-specific continuous variable to ensure that the function $\mu_j^D(Z_j)$ does not degenerate to a constant after conditioning the values of $(X, Z_1, \ldots, Z_{j-1}, Z_{j+1}, \ldots, Z_S)$.

To proceed, let $F_j(\cdot | X)$ be the conditional cumulative distribution function (CDF) of $\epsilon_j^D$ given $X$.
Further, let $P_j \coloneqq F_j( \mu^D_j(Z_j) | X )$ and $V_j \coloneqq F_j ( \epsilon_j^D | X )$. 
By construction, each $V_j$ is distributed as $\text{Uniform}[0,1]$ conditional on $X$.
Using these definitions, we can rewrite \eqref{eq:treatment} as follows: $D = \mathbf{1}\left\{P_j \ge V_j \right\}$ if $s = j$.

\begin{remark}[Monotonicity]\label{rem:monotonicity}
    The presence of group heterogeneity may lead to the failure of the monotonicity condition in \cite{imbens1994identification}, which requires that shifts in the IVs determine the direction of change in the treatment choices uniformly in all individuals.
    To see this, for simplicity, consider a case with $S = 2$ and suppose that $\mu^D_j(Z_j) = Z_1 \gamma_{1j} + Z_{2,j}\gamma_{2 j}$ for $j \in \{ 1, 2 \}$, where $Z_1$ is a common IV, $Z_{2,j}$ is an IV specific to group $j$, and $Z_j = (Z_1, Z_{2,j})$.
    Suppose that $\gamma_{11} < 0$ and $\gamma_{12} > 0$.
    Then, an increase in $Z_1$ makes the individuals in group 1 (group 2) less (more) likely to take the treatment, implying that the monotonicity condition does not hold.
    As a result, the conventional IV-based methods relying on the monotonicity condition do not exhibit desirable causal interpretations.
    Note, however, that the monotonicity condition is still satisfied in terms of the group-specific IVs $(Z_{2,1}, Z_{2,2})$.
    Thus, if we run a 2SLS method using $(Z_{2,1}, Z_{2,2})$ only, we would obtain some causal effects averaged over the groups, as will be demonstrated in Subsection \ref{subsec:late}.
    However, this approach may overlook the possibility of heterogeneous treatment effects across groups.
\end{remark}

\begin{remark}[Partial monotonicity]\label{rem:pmonotonicity}
    \cite{mogstad2020causal} have proposed a weaker version of \citeauthor{imbens1994identification}'s (\citeyear{imbens1994identification}) monotonicity with multiple IVs, the so-called \textit{partial monotonicity}.
	The partial monotonicity condition only requires component-wise monotonicity with all other IVs being fixed.
	Here, consider the same treatment choice model as given in Remark \ref{rem:monotonicity}, and write the potential treatment as $D(z_1, z_{2,1}, z_{2,2})$. 
	Then, the partial monotonicity requires, for instance, that $D(z_1, z_{2,1}, z_{2,2}) \le D(z_1', z_{2,1}, z_{2,2})$ for $z_1 \neq z_1'$ for all individuals.
	However, this does not hold clearly when the sign of $\gamma_{11}$ is opposite to that of $\gamma_{12}$.
	Thus, although the partial monotonicity permits preference heterogeneity in the relative importance of different IVs (see Section 3 in \cite{mogstad2020causal}), it cannot account for unobserved group-wise heterogeneity as in this case, while ours can (at the cost of assuming the existence of group-specific IVs).
\end{remark}

For the treatment choice model in \eqref{eq:treatment}, we can interpret its meaning in several ways.
The first interpretation is that there are actually multiple different treatment eligibility rules prescribed by policy makers.
In the example of college enrollment, there are typically several different types of admissions processes for each school, for example, paper-based entrance exams, sports referrals, and so on.
Such a situation would correspond to this first type of interpretation.
Another interpretation is that there are several types of treatment preference patterns.
For example, consider again $D = 1$ if an individual goes to college and $D = 0$ otherwise.
Suppose that a common instrumental variable $Z$ is the introduction of a physical education requirement in colleges along with mandatory augmented athletics facilities.\footnote{
		This example scenario is borrowed from \cite{heckman2005structural}, Subsection 6.3.
		}
When we specify the functional form of $\mu^D_j$ as in Remark \ref{rem:monotonicity}, we imagine that some people dislike physical education ($\gamma_{z1} < 0$) while others like it ($\gamma_{z2} > 0$).
In this situation, we can view the treatment choice model \eqref{eq:treatment} as a binary response model with a discrete random coefficient.

\begin{remark}[Statistically choosing an optimal $S$]\label{rem:choiceS}
		It is well known that identifying the true number of latent groups is a non-standard issue.
		Most existing studies addressing this issue rely on some specific distributional assumptions or the availability of multiple outcomes (e.g., \citealp{chen2001modified}; \citealp{zhu2004hypothesis}; \citealp{woo2006robust}; \citealp{kasahara2014non}; \citealp{kitamura2018nonparametric}), and they are not directly applicable to our study.
		Our treatment choice model is essentially different from these studies in that a model with larger than $S$ groups is ``not'' a generalization of an $S$-group model due to the presence of group-specific variables.
		However, the former model always encompasses the latter in the usual mixture framework.
		Although developing a more general testing procedure for $S$ which can be used in our framework is an important open question, since our primary interest is in the identification and estimation of treatment effects, answering it is beyond the scope of this paper.
\end{remark}

\subsection{Identification of MTE}\label{subsec:identification}

Our main identification results are based on the following assumptions.

\begin{assumption}\label{as:IV}\hfil
\begin{enumerate}[(i)]
    \item The IVs $\mathbf Z = (Z_1, \ldots, Z_S)$ are independent of $(\epsilon^{(d)}, \epsilon^D_j, s)$ given $X$ for all $d$ and $j$.
    \item For each $j$, $Z_j$ has at least one group-specific continuous variable that is not included in $X$ and the IVs for the other groups.
\end{enumerate}
\end{assumption}

\begin{assumption}\label{as:membership}\hfil
\begin{enumerate}[(i)]
    \item The membership variable $s$ is conditionally independent of $\epsilon^D_j$ given $X$ for all $j$.
    \item For each $j$, there exists a constant $\pi_j \in (0,1)$ such that $\Pr(s = j | X) = \pi_j$.
\end{enumerate}
\end{assumption}

Assumption \ref{as:IV}(i) is an exclusion restriction requiring that the IVs are conditionally independent of all unobserved random variables including the latent group membership.
Assumption \ref{as:IV}(ii) is somewhat demanding in that we require prior knowledge as to which variables may be relevant/irrelevant to the membership of each group.
A similar assumption can be found in the econometrics literature on finite mixture models (e.g., \citealp{compiani2016using}).
In practice, which IVs belong to which group needs to be determined on a case-by-case basis, using background knowledge and/or some economic theoretical framework underlying the presumed group structure in the data.\footnote{
	Note that the groups are distinguished only by their susceptibility to different IVs.
	We are generally unable to know if these groups correctly represent the groups originally presumed by researchers.
	Therefore, we need to make a comprehensive interpretation of the estimated groups by combining our prior knowledge of the data and the estimation results for the IVs.
	To obtain a sharper interpretation for the groups, some researchers consider a model in which the group membership probability is a function of individual characteristics, similar to our model in \eqref{eq:member}.
	Such a model is useful, but it does not allow us to know exactly who belongs to which group.
	This is a common limitation in the finite mixture framework.
	}
If researchers do not have such information, simply comparing the information criterion values of the models with different IVs might be practically useful (see also the results in Table \ref{table.AIC.BIC} in Subsection \ref{subsec:misspec}).
Note that in the absence of group-specific covariates, nonparametric identification of the treatment choice model would be in general infeasible, since when conditioned on the covariates, the observable choice probability is just a mixture of Bernoulli distributions, which is another Bernoulli distribution.
The continuity of group-specific IV is necessary to nonparametrically identify the MTE parameter.
As will be shown in Theorem \ref{thm:ident-MTR1}, the MTE parameter for group $j$ can be identified through a partial derivative with respect to $P_j$ conditioned on $(X, \{P_h\}_{h \neq j})$, which cannot be well-defined if all the IVs $Z_j$ are discrete.\footnote{
	Even when all group-specific IVs are discrete, if additional parametric functional form restrictions are imposed, it would be possible to identify MTE, in a similar manner to \cite{brinch2017beyond}.
	Investigation of such an approach is left for future research.
	}
The continuity assumption is also utilized to establish the identification of the finite-mixture Probit model, where the partial derivative of the choice probability with respect to $Z_j$ plays a key role in our identification strategy (see Appendix \ref{subsec:mixture}).

The conditional independence in Assumption \ref{as:membership}(i) excludes some types of endogenous group formation by imposing that group membership  $s$ does not affect the conditional distribution of $\epsilon^D_j$ given $X$.\footnote{
	Assumption \ref{as:membership}(i) differs from the \textit{no-additive-interaction} assumption of unobserved confounders for identifying the ATE in the IV estimation.
	The assumption sates that, in our treatment choice context,
	\begin{align} \label{eq:no-additive-interaction}
		\bE[ D | \bm{Z} = \bm{z}, X, s ] - \bE[ D | \bm{Z} = \bm{z}', X, s ] = \bE[ D | \bm{Z} = \bm{z}, X] - \bE[ D | \bm{Z} = \bm{z}', X]
	\end{align}
	for $\bm{z}, \bm{z}' \in \text{supp}[\bm{Z} | X]$, as in Assumption 5(a) of \citealp{wang2018bounded}.
	Under Assumptions \ref{as:IV}(i) and \ref{as:membership}, we observe that $\bE[ D | \bm{Z} = \bm{z}, X, s = j] = \Pr[ \mu_j^D(z_j) \ge \epsilon_j^D | X ]$ and $\bE[ D | \bm{Z} = \bm{z}, X] = \sum_{h=1}^S \Pr[ \mu_h^D(z_h) \ge \epsilon_h^D | X ] \cdot \pi_h$, which implies that \eqref{eq:no-additive-interaction} does not hold unless $S = 1$.
	As a result, their identification results do not recover the ATE in our situation.
	This is because $s$ interacts nonlinearly with $\bm{Z}$ in the finite mixture treatment choice equation \eqref{eq:treatment}.
	In addition, note that the no-additive-interaction assumption for the outcome equation (Assumption 5(b) of \citealp{wang2018bounded}) is not satisfied in our situation, as $s$ can be arbitrarily correlated with $Y^{(1)} - Y^{(0)}$.
	}
In other words, the expected ``potential'' treatment status when in group $j$ does not vary with the actual group membership $s$.
More specifically, in the above-mentioned example of college enrollment with multiple eligibility rules, this assumption can be interpreted as that, conditional on the individual characteristics $X$, the probability of enrolling in a college for each group is independent of the admission process the individuals actually take.
Thus, if unobserved confounders affect both group membership and treatment choice probability (e.g., unobservable individual talent), Assumption \ref{as:membership}(i) may not hold.
As will be shown in Subsection \ref{subsec:hetero} (see also Appendix \ref{subsubsec:endmember}), the assumption can be dropped at the cost of additional analytical complexity, and we can still establish some identification results of MTE parameters.
In Assumption \ref{as:membership}(ii), we assume homogeneous membership probability for each group, which is commonly used in the literature on finite mixture models.
This assumption is made only for simplicity, and the theorem shown below still holds without modifications even when the membership probability is a function of $X$.
Moreover, we will show  in Subsection \ref{subsec:hetero} that the group-wise MTE can be identified even when the membership probability depends on other covariates besides $X$.

To interpret these assumptions in a practical scenario, consider our empirical application in Section \ref{sec:empirical}, in which $Y$ is annual income, $D$ indicates college enrollment, and $X$ includes demographic variables such as age and sex.
Suppose that we have two latent groups with different treatment preference patterns: the individuals in group 1 determine whether to go to college or not mainly by their neighborhood's educational characteristics and those in group 2 decide it by the quality of own study environment.
Specifically, $Z_1$ includes variables such as local college enrollment rate, and $Z_2$ includes, for example, the number of books in their home when they were 15.
Assumption \ref{as:IV} is satisfied if these IVs are not direct determinants of annual income and if they are also irrelevant to the other unobserved determinants and the preference type conditional on the observed demographic variables.
Assumption \ref{as:membership} requires that the preference type is unrelated to both the demographic variables and the unobserved factors of college enrollment.

A directed acyclic graph (DAG) in Figure \ref{fig:dag1} is helpful for further understanding Assumptions \ref{as:IV} and \ref{as:membership}.
The solid and dashed arrows indicate the effects of the observed and unobserved variables, respectively.
Assumption \ref{as:IV}(i) rules out the direct pathways between $Z_j$ and $(\epsilon^{(d)}, \epsilon_j^D, s)$.
The arrow from $Z_j$ to $D$ is necessary for Assumption \ref{as:IV}(ii).
Under Assumption \ref{as:membership}(i), there is no direct pathway between $s$ and $\epsilon_j^D$.
In this figure, for simplicity, we suppress $X$, which can be associated with all variables except for $s$ (Assumption \ref{as:membership}(ii)).

\begin{figure}[!h]
	\begin{center}
		\includegraphics[width = 6cm, bb = 0 0 115 88]{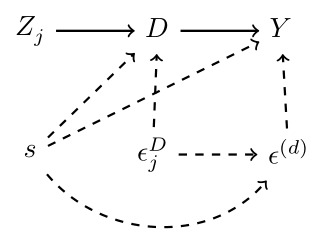}
		\caption{DAG under Assumptions \ref{as:IV} and \ref{as:membership}}
		\label{fig:dag1}
	\end{center}
\end{figure}

To identify the treatment effects of interest, we first need to identify the treatment choice model \eqref{eq:treatment}.
Although there has been a long history of research on the identification of finite mixture models, only few studies address binary outcome regression models (e.g., \citealp{follmann1991identifiability}; \citealp{butler1997consistency}).
Moreover, these previous results are typically based on the availability of repeated measurements data, which cannot be directly applied to our situation.
Therefore, in Appendix \ref{subsubsec:exomember}, we present a new identification result for a finite-mixture treatment choice Probit model.
Since our major focus is on the identification of treatment effect parameters, the details are omitted here, and we hereinafter treat $P_j$'s and $\pi_j$'s as known objects.

The MTE parameter specific to group $j$ is defined as
\begin{align} \label{eq:groupMTE}
    \text{MTE}_j(x,p) \coloneqq m_j^{(1)}(x,p) - m_j^{(0)}(x,p),
\end{align}
where $m_j^{(d)}(x,p) \coloneqq \bE [ Y^{(d)} | X = x, s = j, V_j =p]$ is the marginal treatment response (MTR) function specific to group $j$ and $d \in \{0, 1\}$.\footnote{
	Note that we can consider other types of MTE parameters than \eqref{eq:groupMTE}.
	For example, an MTE that uses $P_j$ instead of conditioning on $X$ has recently gained attention  because of its  good computational properties (cf. \citealp{zhou2019marginal}), but it is not pursued in this study considering the page limitation.
}
This group-wise MTE parameter captures the unobserved heterogeneity in the treatment effects with respect to both $s$ and $V_j$.
Similar to the conventional MTE framework, we can infer whether there is individual unobserved treatment selectivity by examining how $\text{MTE}_j(x,p)$ varies with $p$.
Specifically, if $\text{MTE}_j(x,p)$ is declining in $p$, it indicates that individuals who belong to group $j$ and are more likely to choose $D = 1$ tend to obtain larger gains from the treatment.
The case of no unobserved heterogeneity corresponds to a constant MTE function (cf. \citealp{carneiro2011estimating,mogstad2018identification}).
Importantly, our MTE parameter allows us to examine this group-wisely.

Once the group-wise MTEs are identified for all $x$ and $p$, we can recover many other treatment parameters.
For example, $\text{CATE}_j(x) = \int_0^1 \text{MTE}_j(x,v)\mathrm{d}v$, where $\text{CATE}_j(x) = \bE [ Y^{(1)} - Y^{(0)} | X = x, s = j]$ is the group-wise conditional average treatment effect.
Furthermore, the group-wise ATE: $\text{ATE}_j = \bE [ Y^{(1)} - Y^{(0)} | s = j]$ can be obtained by $\text{ATE}_j = \int \text{CATE}_j(x)f_X(x)\mathrm{d}x$, where $f_X$ is the marginal density function of $X$ (note that Assumption \ref{as:membership}(ii) implies $f_X(\cdot | s = j) = f_X(\cdot)$).
Then, the ATE for the whole population is simply given by $\text{ATE} = \sum_{j=1}^S \pi_j \text{ATE}_j$.
We can also identify the so-called policy relevant treatment effect (PRTE); see Remark \ref{rem:policy} below and Appendix \ref{subsec:prte}.

\begin{remark}[Policy evaluation]\label{rem:policy}
	When policymakers seek a more effective treatment assignment policy, estimating $\text{MTE}_j(x, v)$ should help them understand what type of individuals would benefit from the treatment.
	For instance, if the MTE increases in $v$ for a particular group, the treatment brings a significant benefit especially to individuals in this group who are less likely to be treated.
	In this case, an assignment policy that enlarges the size of the treatment group would be desirable for this group.
	
	When the policymakers cannot directly interfere in the treatment assignment process but can introduce a	policy that affects the probability of treatment, we may consider the PRTE, as in \cite{heckman2005structural}; e.g., in our empirical context, a policy that builds a new university.
	The PRTE parameter is defined as follows:
	\begin{align*}
		\text{PRTE}_j \coloneqq \bE [Y^\star | X = x, s = j] - \bE [Y | X = x, s = j].
	\end{align*}
	Here, $Y^\star$ denotes the counterfactual outcome after the new policy.
	For identification of $\text{PRTE}_j$, see Appendix \ref{subsec:prte}.
	For example, suppose that the PRTEs of two different policies are evaluated and it is revealed that one is actually harmful to most groups and significantly beneficial for only one particular group, while the other is weakly beneficial for all groups.
	Then, we may conclude that the latter policy is ``safer'' than the former in a minimax sense.  
\end{remark}

Below, we show that the MTR functions can be identified through the partial derivatives of the following functions:
\begin{align*}
    \psi_1(x, \mathbf{p})
    \coloneqq \bE [ DY | X = x, \mathbf P = \mathbf p],
    \qquad 
    \psi_0(x, \mathbf{p})
    \coloneqq \bE [ (1- D)Y | X = x, \mathbf P = \mathbf p],
\end{align*}
where $\mathbf P = (P_1, \ldots, P_S)$ and $\mathbf p = (p_1, \ldots, p_S)$.
Note that $\psi_d(x, \mathbf{p})$ can be directly identified from the data on $\text{supp}[X, \mathbf P | D = d]$, where $\text{supp}[X, \mathbf P | D = d]$ denotes the joint support of $(X, \mathbf P)$ given $D = d$.

\begin{theorem}\label{thm:ident-MTR1}
	Suppose that Assumptions \ref{as:IV} and \ref{as:membership} hold.
	If $m_j^{(1)}(x, \cdot)$ and $m_j^{(0)}(x, \cdot)$ are continuous, we have
	\begin{align*}
	    m_j^{(1)}(x, p_j) = \frac{1}{\pi_j}\frac{\partial \psi_1(x, \mathbf p) }{\partial p_j}, 
	    \qquad 
	    m_j^{(0)}(x, p_j) = - \frac{1}{\pi_j}\frac{\partial \psi_0(x, \mathbf p) }{\partial p_j}.
	\end{align*}
\end{theorem}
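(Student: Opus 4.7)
The plan is to mimic the classical local IV strategy of \cite{heckman1999local}, but carried out conditional on the latent group and then averaged via the total probability formula. I describe the argument for $m_j^{(1)}$; the one for $m_j^{(0)}$ is completely parallel with $D$ replaced by $1 - D$. The first step is to use $DY = DY^{(1)}$ and the law of total expectation to write
\begin{align*}
\psi_1(x, \mathbf{p}) = \sum_{k=1}^S \Pr(s = k \mid X = x, \mathbf{P} = \mathbf{p}) \, \bE\bigl[ D\, Y^{(1)} \,\big|\, X = x, \mathbf{P} = \mathbf{p}, s = k \bigr].
\end{align*}
Since $\mathbf{P}$ is a deterministic function of $(X, \mathbf{Z})$ and Assumption \ref{as:IV}(i) gives $\mathbf{Z} \perp s \mid X$, we have $\mathbf{P} \perp s \mid X$, so Assumption \ref{as:membership}(ii) collapses the membership probability to the constant $\pi_k$.

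Next, on the event $\{s = k\}$ the treatment rule reduces to $D = \mathbf{1}\{V_k \le P_k\}$. Using Assumption \ref{as:IV}(i) again, conditional on $(X = x, s = k)$ the IV vector $\mathbf{Z}$ is independent of $(\epsilon^{(1)}, \epsilon_k^D)$, and hence independent of $(Y^{(1)}, V_k)$, since $V_k = F_k(\epsilon_k^D \mid X)$ and $Y^{(1)} = \mu^{(1)}(X, s, \epsilon^{(1)})$. Thus $\mathbf{P}$ may be replaced by the deterministic vector $\mathbf{p}$ without changing the conditional distribution of $(V_k, Y^{(1)})$, giving
\begin{align*}
\bE\bigl[ D\, Y^{(1)} \,\big|\, X = x, \mathbf{P} = \mathbf{p}, s = k \bigr] = \bE\bigl[ \mathbf{1}\{V_k \le p_k\}\, Y^{(1)} \,\big|\, X = x, s = k \bigr].
\end{align*}
Assumption \ref{as:membership}(i) guarantees that $V_k \mid X, s = k$ remains uniform on $[0,1]$, so iterated expectations rewrite the right-hand side as $\int_0^{p_k} m_k^{(1)}(x, v)\, dv$. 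Summing gives $\psi_1(x, \mathbf{p}) = \sum_{k=1}^S \pi_k \int_0^{p_k} m_k^{(1)}(x, v)\, dv$, in which only the $j$th term depends on $p_j$.

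The final step is to differentiate in $p_j$; continuity of $m_j^{(1)}(x, \cdot)$ together with the fundamental theorem of calculus yields $\partial \psi_1 / \partial p_j = \pi_j\, m_j^{(1)}(x, p_j)$, which is the first claim. The same reasoning applied to $1 - D = \mathbf{1}\{V_j > P_j\}$ on $\{s = j\}$ produces $\psi_0(x, \mathbf{p}) = \sum_k \pi_k \int_{p_k}^1 m_k^{(0)}(x, v)\, dv$, whose derivative in $p_j$ carries the extra minus sign. The main obstacle I anticipate is the bookkeeping in the second paragraph: I must isolate exactly which pieces of Assumptions \ref{as:IV} and \ref{as:membership} are used to (a) turn $\Pr(s = k \mid X, \mathbf{P})$ into $\pi_k$, (b) treat the conditioning event $\{\mathbf{P} = \mathbf{p}\}$ as exogenous within the group $\{s = k\}$ so that $P_k$ may be substituted by $p_k$ inside the indicator, and (c) preserve the uniform distribution of $V_k$ under the additional conditioning on $s = k$, since any slippage in these three points would destroy the clean integral representation on which differentiation relies.
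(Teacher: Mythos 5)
Your proposal is correct and follows essentially the same route as the paper's proof: both reduce $\psi_1(x,\mathbf p)$ via iterated expectations and the exogeneity/membership assumptions to $\sum_{k}\pi_k\int_0^{p_k} m_k^{(1)}(x,v)\,\mathrm{d}v$ and then differentiate in $p_j$. The only cosmetic difference is that you condition on $s=k$ first and then handle the indicator $\mathbf{1}\{V_k\le p_k\}$, whereas the paper decomposes jointly over $(s=j, D=1)$; the three bookkeeping points you flag are exactly the steps the paper justifies with Assumptions \ref{as:IV}(i), \ref{as:membership}(i), and \ref{as:membership}(ii).
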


As shown in the proof of Theorem \ref{thm:ident-MTR1}, we can find the equality $\psi_1(x, \mathbf{p}) = \sum_{j = 1}^S \pi_j \int_0^{p_j}m_j^{(1)}(x,v) \mathrm{d} v$.
Then, the first result is straightforward by partially differentiating both sides of this with respect to $p_j$.
The second result can be obtained similarly.
Here, Assumption \ref{as:IV}(ii) plays an essential role in enabling us to shift $P_j$ only with the other components of $\mathbf{P}$ being fixed.
Notice that the results in Theorem \ref{thm:ident-MTR1} give us a testable implication for the validity of Assumption \ref{as:IV}(ii), as kindly pointed out by a referee.
That is, for $\mathbf{p}$ and $\mathbf{p}'$, where $\mathbf{p}'$ differs from $\mathbf{p}$ only in terms of group $k$'s specific IV, the equality $\partial \psi_d(x, \mathbf{p})/\partial p_j = \partial \psi_d(x, \mathbf{p}')/ \partial p_j$ may not hold for $j \neq k$ without this assumption.
However, to verify this equality from data, we need to estimate the partial derivatives of $\psi_d$ nonparametrically, which should be difficult in practice because of the curse of dimensionality (and even if the equality is confirmed, it only gives a necessary condition for Assumption \ref{as:IV}(ii)).
Note also that since our proposed MTE estimator is defined directly based on Assumption \ref{as:IV}(ii), it does not contain any information to verify $\partial \psi_d(x, \mathbf{p})/\partial p_j = \partial \psi_d(x, \mathbf{p}')/ \partial p_j$. 
To assess the impact of the violation of Assumption \ref{as:IV}(ii), we conducted a small numerical analysis.
This demonstrates that our MTE estimator exhibits poor performance in the absence of group-specific IVs, which is consistent with our theory.
For more details, see Appendix \ref{sec:appendix:simulation}.

\begin{remark}[Necessity of identifying the treatment choice model]
	Similar to the original LIV method in \citet{heckman1999local,heckman2005structural}, identifying the treatment choice probability is essential to our case as well.
	To see this, consider a simple scenario in which $Z_j$ is a scalar continuous IV.
	Further, assume that $p_j(x, z_j) = F_j(\mu_j^D(z_j) | X = x)$ is strictly monotonic in $z_j$.
	It then holds that $\bE [DY|X = x, \mathbf{Z} = \mathbf{z}] = \psi_1(x, \mathbf{p}(x, \mathbf{z}))$, where $\mathbf{p}(x, \mathbf{z}) = (p_1(x, z_1), \dots, p_S(x, z_S))$.
	The equality $\psi_1(x, \mathbf{p}) = \sum_{j = 1}^S \pi_j \int_0^{p_j}m_j^{(1)}(x,v) \mathrm{d} v$ leads to
	\begin{align*}
		\frac{\partial}{\partial z_j} \bE [DY|X = x, \mathbf{Z} = \mathbf{z}] = \pi_j m_j^{(1)}(x, p_j) \left[ \frac{\partial}{\partial z_j} p_j(x, z_j) \right]. 
	\end{align*}
	Hence, as long as $\partial p_j(x, z_j) / \partial z_j$ is unknown, the presence of group-specific IVs alone is not sufficient to infer the MTR function.
	Note that when one knows the sign of $\partial p_j(x, z_j) / \partial z_j$ a priori, since $\pi_j$ is positive, it is possible to know the sign of the MTR without estimating the treatment choice model. 
\end{remark}


\section{Estimation and Asymptotics} \label{sec:estimation}

This section considers the estimation of the group-wise MTE parameters using an independent and identically distributed (IID) sample $\{(Y_i, D_i, X_i, \mathbf Z_i): 1 \le i \le n\}$.
Throughout this section, Assumptions \ref{as:IV} and \ref{as:membership} are assumed to hold.

\subsection{Two-step series estimation}\label{subsec:estimation}

\paragraph{First-stage: ML estimation}
We consider the following parametric model specification:
\begin{align}\label{eq:paramodel-D}
	D = \mathbf{1}\left\{Z_j^\top \gamma_j \ge \epsilon_j^D \right\} \; \text{with probability $\pi_j > 0$, for $j \in \{ 1,\dots,S \}$.}
\end{align}
We assume that $\epsilon_j^D$ is independent of $(X, \mathbf Z)$, and the CDF $F_j$ of $\epsilon_j^D$ is a known function such as the standard normal CDF.
Define $\gamma = \left(\gamma_1^\top, \ldots, \gamma_S^\top \right)^\top$ and $\pi = (\pi_1, \ldots, \pi_S)^\top$.
Then, the conditional likelihood function for an observation $i$ when $s_i = j$ is given by $L_i(\gamma | s_i = j) \coloneqq F_j( Z_{ji}^\top \gamma_j )^{D_i} [ 1 - F_j( Z_{ji}^\top \gamma_j )]^{1 - D_i}$.
Thus, the ML estimator for $\gamma$ and $\pi$ can be obtained by
\begin{align*}
    (\hat \gamma_n, \hat \pi_n) \coloneqq \argmax_{(\tilde \gamma, \tilde \pi) \in \Gamma \times \mathcal{C}_S} \sum_{i = 1}^n \log\left( \sum_{j = 1}^S \tilde \pi_j L_i(\tilde \gamma | s_i = j)\right),
\end{align*}
where $\Gamma \subset \mathbb{R}^{\sum_{j = 1}^S \dim(Z_j)}$ and $\mathcal{C}_S \coloneqq \{\tilde \pi \in (0, 1)^S: \sum_{j = 1}^S \tilde \pi_j = 1\}$ are the parameter spaces.
We then obtain the estimator of $P_j = F_j( Z_j^\top \gamma_{j} )$ as $\hat P_j = F_j ( Z_j^\top \hat \gamma_{n, j} )$.
In the numerical studies below, we use the expectation-maximization (EM) algorithm to solve this ML problem following the literature on finite mixture models (e.g., \citealp{dempster1977maximum}; \citealp{mclachlan2004finite}; \citealp{train2008algorithms}).

\paragraph{Second-stage: series estimation}

For the potential outcome equation, we assume the following linear model for convenience, which is a popular setup in the literature (see, e.g., \citealp{carneiro2009estimating}):
\begin{align}\label{eq:paramodel-Y}
    Y^{(d)} 
    = \sum_{j = 1}^S \mathbf{1}\{ s = j\} X^\top \beta_j^{(d)} + \epsilon^{(d)},
    \quad
    \text{for $d \in \{0,1\}$.}
\end{align}
Here, the coefficient $\beta_j^{(d)}$ may depend on the group membership $j$ to account for potentially heterogeneous effects of $X$.
Assume that $X$ is conditional-mean independent of $(\epsilon^{(d)}, \epsilon_j^D, s)$ for all $d$ and $j$.
Then, we have
\begin{align*}
    \text{MTE}_j(x,p)
    = m_j^{(1)}(x,p) - m_j^{(0)}(x,p)
    = x^\top (\beta_j^{(1)} - \beta_j^{(0)}) + \bE [\epsilon^{(1)} - \epsilon^{(0)} | s = j, V_j = p],
\end{align*}
with $m_j^{(d)}(x,p) = x^\top \beta_j^{(d)} + \bE [\epsilon^{(d)} | s = j, V_j = p]$.
By the same argument as in the proof of Theorem \ref{thm:ident-MTR1}, we can show that there exist univariate functions $g_j^{(1)}$ and $g_j^{(0)}$ for each $j$ satisfying
\begin{align*}
    \bE [\epsilon^{(1)} | s = j, V_j \le p_j] = \frac{1}{p_j} \int_0^{p_j} \bE [\epsilon^{(1)} | s = j, V_j = v]\mathrm{d}v
    & \eqqcolon \frac{g_j^{(1)}(p_j)}{p_j}, \\
    \bE [\epsilon^{(0)} | s = j, V_j > p_j] = \frac{1}{1 - p_j} \int_{p_j}^1 \bE [\epsilon^{(0)} | s = j, V_j = v]\mathrm{d}v
    & \eqqcolon \frac{g_j^{(0)}(p_j)}{1 - p_j}.
\end{align*}
Then, letting $\nabla g_j^{(d)}(p) \coloneqq \partial g_j^{(d)}(p) / \partial p$, we observe that $\nabla g_j^{(1)}(p) = \bE [\epsilon^{(1)} | s = j, V_j = p]$ and $\nabla g_j^{(0)}(p) = - \bE [\epsilon^{(0)} | s = j, V_j = p]$.
Hence, it follows that
\begin{align*}
    m_j^{(1)}(x, p) = x^{\top} \beta_j^{(1)} + \nabla g_j^{(1)}(p),
    \qquad
    m_j^{(0)}(x, p) = x^{\top} \beta_j^{(0)} - \nabla g_j^{(0)}(p).
\end{align*}
Further, by the law of iterated expectations, 
\begin{align*}
    \psi_1(x, \mathbf p)
    & = \sum_{j = 1}^S \bE [Y^{(1)}|X = x, s = j, V_j \le p_j] \cdot \pi_j p_j
    = \sum_{j = 1}^S \left( (x\cdot \pi_j p_j )^\top \beta_j^{(1)} + \pi_j g_j^{(1)}(p_j) \right),\\
    \psi_0(x, \mathbf p)
    & = \sum_{j = 1}^S \bE [Y^{(0)}|X = x, s = j, V_j > p_j] \cdot \pi_j (1 - p_j)
    = \sum_{j = 1}^S \left( (x\cdot \pi_j (1 - p_j) )^\top \beta_j^{(0)} + \pi_j g_j^{(0)}(p_j) \right).
\end{align*}
Hence, we obtain the following partially linear additive regression models:
\begin{align}\label{eq:additive1}
    DY 
    & = \sum_{j = 1}^S (X\cdot \pi_j P_j)^\top \beta_j^{(1)} + \sum_{j = 1}^S \pi_j g_j^{(1)}(P_j) + \varepsilon^{(1)},\\
    \label{eq:additive0}
    (1 - D)Y 
    & = \sum_{j = 1}^S (X\cdot \pi_j (1 - P_j))^\top \beta_j^{(0)} + \sum_{j = 1}^S \pi_j g_j^{(0)}(P_j) + \varepsilon^{(0)},
\end{align}
where $\bE [\varepsilon^{(d)} | X, \mathbf P] = 0$ by the definition of $\psi_d$ for both $d \in \{0, 1\}$.
To estimate the coefficients $\beta_j^{(d)}$'s and the functions $g_j^{(d)}$'s, we use the series (sieve) approximation method such that $g_j^{(d)}(p) \approx b_K(p)^\top \alpha_j^{(d)}$ with a $K \times 1$ vector of basis functions $b_K(p) = (b_{1K}(p), \dots, b_{KK}(p))^\top$ and corresponding coefficients $\alpha_j^{(d)}$.

To proceed, letting $d_{SXK} \coloneqq S(\text{dim}(X) + K)$, define $\underset{d_{SXK}\times 1}{\theta^{(d)}} \coloneqq (\beta_1^{(d)\top}, \ldots , \beta_S^{(d)\top}, \alpha_1^{(d)\top}, \ldots , \alpha_S^{(d)\top} )^\top$ for $d \in \{0, 1\}$, and
\begin{alignat*}{2}
    \underset{d_{SXK} \times 1}{R_K^{(1)}} 
    & \coloneqq (\pi_1 P_1 X^\top, \ldots ,\pi_S P_S X^\top, 
    & \pi_1 b_K(P_1)^\top, \ldots , \pi_S b_K(P_S)^\top)^\top, \\
    \underset{d_{SXK} \times 1}{R_K^{(0)}}
    & \coloneqq (\pi_1 (1 - P_1) X^\top, \ldots , \pi_S (1 - P_S) X^\top, 
    & \pi_1 b_K(P_1)^\top, \ldots , \pi_S b_K(P_S)^\top)^\top.
\end{alignat*}
Then, we can approximate the regression models \eqref{eq:additive1} and \eqref{eq:additive0}, respectively, by
\begin{align}\label{eq:regression}
    DY \approx R_K^{(1)\top}\theta^{(1)} + \varepsilon^{(1)}, 
    \qquad 
    (1 - D) Y \approx R_K^{(0)\top}\theta^{(0)} + \varepsilon^{(0)},
\end{align}
which implies that we can estimate $\theta^{(d)}$ by 
\begin{align*}
    \tilde \theta_n^{(1)} \coloneqq \left(\sum_{i = 1}^n R_{i,K}^{(1)}R_{i,K}^{(1)\top} \right)^{-}\sum_{i = 1}^n R_{i,K}^{(1)}D_iY_i, \qquad 
    \tilde \theta_n^{(0)} \coloneqq \left(\sum_{i = 1}^n R_{i,K}^{(0)}R_{i,K}^{(0)\top} \right)^{-}\sum_{i = 1}^n R_{i,K}^{(0)} (1 - D_i) Y_i,
\end{align*}
where $A^{-}$ is a generalized inverse of a matrix $A$.
Note, however, that $\tilde \theta_n^{(d)}$'s are infeasible since $P_j$'s and $\pi_j$'s are unknown in practice.
Then,  define $\hat R_K^{(d)}$ analogously as above but replacing $P_j$'s and $\pi_j$'s with their estimators obtained in the first step.
The feasible estimators can be obtained by
\begin{align*}
    \hat \theta_n^{(1)} \coloneqq \left(\sum_{i = 1}^n \hat R_{i,K}^{(1)} \hat R_{i,K}^{(1)\top} \right)^{-}\sum_{i = 1}^n \hat R_{i,K}^{(1)}D_iY_i, \qquad 
    \hat \theta_n^{(0)} \coloneqq \left(\sum_{i = 1}^n \hat R_{i,K}^{(0)} \hat R_{i,K}^{(0)\top} \right)^{-} \sum_{i = 1}^n \hat R_{i,K}^{(0)} (1 - D_i) Y_i.
\end{align*}
The feasible estimator of $g_j^{(d)}(p)$ is given by $\hat g_j^{(d)}(p) = b_K(p)^\top \hat \alpha_{n,j}^{(d)}$, and the infeasible estimator is given by $\tilde g_j^{(d)}(p) = b_K(p)^\top \tilde \alpha_{n,j}^{(d)}$.
Letting $\nabla b_K(p) \coloneqq \partial b_K(p)/\partial p$, we can also estimate $\nabla g_j^{(d)}(p)$ by $\nabla \hat g_j^{(d)}(p) \coloneqq \nabla b_K(p)^\top \hat \alpha_{n,j}^{(d)}$ and $\nabla \tilde g_j^{(d)}(p) \coloneqq \nabla b_K(p)^\top \tilde \alpha_{n,j}^{(d)}$.
Thus, we obtain the estimators of the MTR functions as follows:
\begin{align*}
    &\tilde m_j^{(1)}(x, p) = x^\top \tilde \beta_{n,j}^{(1)} + \nabla \tilde g_j^{(1)}(p),
    \qquad \hat m_j^{(1)}(x, p) = x^\top \hat \beta_{n,j}^{(1)} + \nabla \hat g_j^{(1)}(p), \\
    & \tilde m_j^{(0)}(x, p) = x^\top \tilde \beta_{n,j}^{(0)} - \nabla \tilde g_j^{(0)}(p),
    \qquad \hat m_j^{(0)}(x, p) = x^\top \hat \beta_{n,j}^{(0)} - \nabla \hat g_j^{(0)}(p).
\end{align*}
Finally, the feasible estimator of the MTE is given by $\hat{\text{MTE}}_j(x,p) = \hat m_j^{(1)}(x, p) - \hat m_j^{(0)}(x, p)$, and the infeasible estimator is given by $\tilde{\text{MTE}}_j(x,p) = \tilde m_j^{(1)}(x, p) - \tilde m_j^{(0)}(x, p)$.

\subsection{Asymptotic properties}\label{subsec:asymptotics}

This section presents the asymptotic properties of the proposed MTE estimators for the model given by equations \eqref{eq:paramodel-D} and \eqref{eq:paramodel-Y}.
In the following, for a vector or matrix $A$, we denote its Frobenius norm as $\| A \| = \sqrt{\text{tr}\{ A^\top A \}}$ where $\text{tr}\{ \cdot \}$ is the trace.
For a square matrix $A$, $\lambda_{\text{min}}(A)$ and $\lambda_{\text{max}}(A)$ denote the smallest and the largest eigenvalues of $A$, respectively.

\begin{assumption}\label{as:iid}
	The data $\{ (Y_i, D_i, X_i, \mathbf{Z}_i) : 1 \le i \le n \}$ are IID.
\end{assumption}

\begin{assumption}\label{as:parametricML} \hfil
	\begin{enumerate}[(i)]
		\item $\text{supp}[Z_j]$ is a compact subset of $\mathbb{R}^{\dim(Z_j)}$ for all $j$.
		\item The random variables $\epsilon^D = (\epsilon_1^D, \dots, \epsilon_S^D)$ are continuously distributed on the whole $\mathbb{R}^S$ and independent of $(X, \mathbf{Z})$.
		Each $\epsilon_j^D$ has a twice continuously differentiable known CDF $F_j$ with bounded derivatives.
		\item $\| \hat \gamma_n - \gamma \| = O_P(n^{-1/2})$ and $\| \hat \pi_n - \pi \| = O_P(n^{-1/2})$.
	\end{enumerate}
\end{assumption}

\begin{assumption}\label{as:outcome}
	\hfil
	\begin{enumerate}[(i)]
		\item $\text{supp}[X]$ is a compact subset of $\mathbb{R}^{\dim(X)}$. 
		$\bE[\epsilon^{(d)} | X, \epsilon_j^D, s = j] = \bE[\epsilon^{(d)} | \epsilon_j^D, s = j]$ for all $d$ and $j$.
		\item The random variable $\epsilon^{(d)}$ is independent of $(X, \mathbf{Z})$ for all $d$.
	\end{enumerate}
\end{assumption}

The IID sampling condition in Assumption \ref{as:iid} is standard in the literature.
For Assumption \ref{as:parametricML}(iii), if the parameters $\gamma$ and $\pi$ are identifiable, the $\sqrt{n}$-consistency of the ML estimators is straightforward; see Appendix \ref{subsec:mixture} for a special case where $\epsilon_j^D$'s are distributed as the standard normal.
Note that combining Assumptions \ref{as:parametricML}(i)-(iii) imply $\max_{1 \le i \le n} |\hat P_{ji} - P_{ji}| = O_P(n^{-1/2})$ for all $j$.
In Assumption \ref{as:outcome}, we assume the (conditional-mean) independence between the observables and the unobservables.

Let $\nabla^a g_j(p) \coloneqq \partial^a g_j(p) / (\partial p)^a$ and $\nabla^a b_K(p) \coloneqq (\partial^a b_{1K}(p) / (\partial p)^a, \dots, \partial^a b_{KK}(p) / (\partial p)^a)^\top$ for a non-negative integer $a$.
Further, define $\Psi_K^{(d)} \coloneqq \bE \left[ R_K^{(d)}R_K^{(d)\top} \right]$ and $\Sigma_K^{(d)} \coloneqq \bE \left[ (\xi_K^{(d)})^2 R_K^{(d)}R_K^{(d)\top} \right]$, where $\xi_K^{(d)} \coloneqq e^{(d)} + B_K^{(d)}$, and $e^{(d)}$ and $B_K^{(d)}$ are unobserved error terms in the series regressions whose definitions are given in \eqref{eq:definitionBe} in Appendix \ref{sec:proofs}.

\begin{assumption}\label{as:series}\hfil
	\begin{enumerate}[(i)]
		\item For each $d$ and $j$, $g^{(d)}_j(p)$ is $r$-times continuously differentiable for some $r \ge 1$, and there exist positive constants $\mu_0$ and $\mu_1$ and some $\alpha_j^{(d)} \in \mathbb{R}^K$ satisfying $\sup_{p \in [0,1]} | b_K(p)^\top \alpha_j^{(d)} - g^{(d)}_j(p) | = O(K^{-\mu_0})$ and $\sup_{p \in [0,1]} | \nabla b_K(p)^\top \alpha_j^{(d)} - \nabla g^{(d)}_j(p) | = O(K^{-\mu_1})$.
		\item $b_K(p)$ is twice continuously differentiable and satisfies $\zeta_0(K) \sqrt{(K \log K)/ n} \to 0$, $\zeta_1(K) \sqrt{K/n} \to 0$ and $\zeta_2(K) / \sqrt{n} \to 0$, where $\zeta_{l}(K) \coloneqq \max_{0 \le a \le l} \sup_{p \in [0,1]} \| \nabla^a b_K(p) \|$.
	\end{enumerate}
\end{assumption}

\begin{assumption}\label{as:eigen}\hfil
	\begin{enumerate}[(i)]
		\item For $d \in \{ 0, 1 \}$, there exist constants $\underbar{c}_{\Psi}$ and $\bar{c}_{\Psi}$ such that $0 < \underbar{c}_{\Psi} \le \lambda_{\text{min}} \left( \Psi_K^{(d)} \right) \le \lambda_{\text{max}} \left( \Psi_K^{(d)} \right) \le \bar{c}_{\Psi} < \infty$, uniformly in $K$.
		\item For $d \in \{ 0, 1 \}$, there exist constants $\underbar{c}_{\Sigma}$ and $\bar{c}_{\Sigma}$ such that $0 < \underbar{c}_\Sigma \le \lambda_{\text{min}}\left( \Sigma_K^{(d)} \right) \le \lambda_{\text{max}}\left( \Sigma_K^{(d)} \right) \le \bar{c}_\Sigma < \infty$, uniformly in $K$.
	\end{enumerate}	
\end{assumption}

\begin{assumption}\label{as:error}
	$\bE [(e^{(d)})^4 | D, X, \mathbf Z, s = j] < \infty$ for all $d$ and $j$.
\end{assumption}

Assumption \ref{as:series}(i) imposes smoothness conditions on the function $g_j^{(d)}$.
These conditions are standard in the literature on series approximation methods.
For example, Lemma 2 in \cite{holland2017penalized} shows that Assumption \ref{as:series}(i) is satisfied by B-splines of order $k$ for $k - 2 \ge r$ such that $\mu_0 = r$ and $\mu_1 = r - 1$.
This assumption requires $K$ to increase to infinity for unbiased estimation while Assumption \ref{as:series}(ii) requires that $K$ should not diverge too quickly.
It is well known that $\zeta_l(K) = O(K^{(1/2) + l})$ for B-splines (e.g., \citealp{newey1997convergence}).
Thus, when B-spline basis functions are employed, Assumption \ref{as:series}(ii) is satisfied if $K^5 / n \to 0$.
Assumption \ref{as:eigen} ensures that the matrices $\Psi_K^{(d)}$ and $\Sigma_K^{(d)}$ are positive definite for all $K$ so that their inverses exist.
Assumption \ref{as:error} is used to derive the asymptotic normality of our estimator in a convenient way.

Finally, we introduce the selection matrices $\underset{\dim(X) \times d_{SXK}}{\mathbb{S}_{X, j}}$ and $\underset{K \times d_{SXK}}{\mathbb{S}_{K, j}}$ such that $\mathbb{S}_{X, j} \theta^{(d)} = \beta_j^{(d)}$ and $\mathbb{S}_{K, j} \theta^{(d)} = \alpha_j^{(d)}$ for each $d$ and $j$.
The next theorem gives the asymptotic normality for the infeasible estimator.

\begin{theorem}\label{thm:normality}
    Suppose that Assumptions \ref{as:iid} -- \ref{as:error} hold.
	For a given $p \in \text{supp}[P_j | D = 1] \cap \text{supp}[P_j | D = 0]$, if $\| \nabla b_K(p) \| \to \infty$, $\sqrt{n} K^{-\mu_0} \to 0$, and $\sqrt{n} K^{-\mu_1} / \| \nabla b_K(p) \| \to 0$ hold, then
	\begin{align*}
    	\frac{\sqrt{n}\left( \tilde{\text{MTE}}_j(x,p) - \text{MTE}_j(x,p) \right)}{\sqrt{\left[ \sigma_{K,j}^{(1)}(p) \right]^2 + 2cov_{K}(p) + \left[ \sigma_{K,j}^{(0)}(p) \right]^2}} \overset{d}{\to} N(0,1),
	\end{align*}
	where
	\begin{align*}
	\sigma_{K,j}^{(d)}(p) 
	& \coloneqq \sqrt{
		\nabla b_K(p)^\top  \mathbb{S}_{K,j} \left[ \Psi_K^{(d)} \right]^{-1} \Sigma_K^{(d)} \left[ \Psi_K^{(d)} \right]^{-1} \mathbb{S}_{K,j}^\top \nabla b_K(p)
	}, \;\; \text{ for $d \in \{0,1\}$},\\
	cov_{K}(p)
	& \coloneqq \nabla b_K(p)^\top  \mathbb{S}_{K,j} \left[ \Psi_K^{(0)} \right]^{-1} C_K \left[ \Psi_K^{(1)} \right]^{-1} \mathbb{S}_{K,j}^\top \nabla b_K(p), \;\; \text{ and } \;\; C_K \coloneqq \bE \left[ \xi_K^{(0)}\xi_K^{(1)} R_K^{(0)}R_K^{(1)\top} \right].
	\end{align*}
\end{theorem}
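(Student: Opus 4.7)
The plan is to treat the infeasible estimator as a pair of correlated series (sieve) regressions, linearize the MTE error about the pseudo-true parameters, reduce to a sample average, and apply a Lyapunov CLT—following the roadmap of \citet{newey1997convergence} but adapted to carry the cross-equation covariance through a Cramér--Wold reduction.

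First I would linearize. Using the selection matrices, write $\tilde{\text{MTE}}_j(x,p) - \text{MTE}_j(x,p) = A_K^{(1)\top}(\tilde\theta_n^{(1)} - \theta^{(1)}) + A_K^{(0)\top}(\tilde\theta_n^{(0)} - \theta^{(0)}) + R_{K,n}$, where $A_K^{(d)\top} := (-1)^{d+1}\,x^\top \mathbb{S}_{X,j} + \nabla b_K(p)^\top \mathbb{S}_{K,j}$ and $R_{K,n} := \sum_d [\nabla b_K(p)^\top \alpha_j^{(d)} - \nabla g_j^{(d)}(p)] = O(K^{-\mu_1})$ is the purely deterministic sieve-approximation bias on the derivatives of $g_j^{(d)}$. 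Because the asymptotic standard deviation is of order $\|\nabla b_K(p)\|$ under Assumption \ref{as:eigen}, the hypothesis $\sqrt{n}K^{-\mu_1}/\|\nabla b_K(p)\| \to 0$ renders $R_{K,n}$ negligible after standardization.

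Next I substitute the OLS expression $\tilde\theta_n^{(d)} - \theta^{(d)} = \hat\Psi_n^{(d),-}\,n^{-1}\sum_i R_{i,K}^{(d)}\xi_{i,K}^{(d)}$ with $\xi_K^{(d)} = e^{(d)} + B_K^{(d)}$. A standard matrix-concentration argument using $\zeta_0(K)\sqrt{(K \log K)/n} \to 0$ and Assumption \ref{as:eigen}(i) gives $\|\hat\Psi_n^{(d)} - \Psi_K^{(d)}\| = o_P(1)$ in operator norm, so $\hat\Psi_n^{(d),-}$ may be replaced with $[\Psi_K^{(d)}]^{-1}$ up to a higher-order remainder, and the $B_K^{(d)}$ component of $\xi_K^{(d)}$ contributes a bias of order $\sqrt{n}K^{-\mu_0}$, negligible by hypothesis. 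What remains is the linear statistic $n^{-1/2}\sum_i U_{K,i}$ with $U_{K,i} := \sum_d A_K^{(d)\top}[\Psi_K^{(d)}]^{-1}R_{i,K}^{(d)} e_i^{(d)}$. By the definitions of $\Sigma_K^{(d)}$ and $C_K$, its variance expands into a sum of quadratic forms in $A_K^{(d)}$; because the $x^\top \mathbb{S}_{X,j}$ piece of $A_K^{(d)}$ has bounded norm while $\mathbb{S}_{K,j}^\top \nabla b_K(p)$ has norm $\|\nabla b_K(p)\| \to \infty$, the dominant part of $\mathrm{Var}(U_{K,1})$ collapses precisely to $[\sigma_{K,j}^{(1)}(p)]^2 + 2\,cov_K(p) + [\sigma_{K,j}^{(0)}(p)]^2$.

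Asymptotic normality then follows from Lyapunov's CLT applied to the standardized sum, with the cross-equation covariance $cov_K(p)$ emerging naturally by Cramér--Wold on linear combinations indexed by $d \in \{0,1\}$. The Lyapunov ratio reduces, via Hölder together with the uniform bound $\|R_{i,K}^{(d)}\| \le C\zeta_0(K)$ and the uniform positive definiteness of $\Psi_K^{(d)}$, to a quantity of order $n^{-1/2}\zeta_0(K)^{c}$ for some fixed $c$, which vanishes under Assumption \ref{as:series}(ii) and the fourth-moment control in Assumption \ref{as:error}. The main obstacle—and where the condition $\|\nabla b_K(p)\| \to \infty$ is used essentially—is precisely the separation of scales argument showing that the finite-dimensional $x^\top \mathbb{S}_{X,j}$ contributions to both the variance and the cross-variance are dominated by the derivative-of-$g$ contributions, so that the limiting standardization depends only on $\nabla b_K(p)^\top \mathbb{S}_{K,j}$; verifying this cleanly, while simultaneously keeping the normalizing variance bounded away from zero and the operator-norm approximation of $\hat\Psi_n^{(d)}$ valid, requires careful and joint use of the two-sided eigenvalue bounds in Assumption \ref{as:eigen} and the rate conditions in Assumption \ref{as:series}(ii).
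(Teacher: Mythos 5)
Your overall roadmap (linearize about the pseudo-true sieve coefficients, discard the finite-dimensional $x^\top \mathbb{S}_{X,j}$ contribution via the scale separation $\| \nabla b_K(p) \| \to \infty$, reduce to a centered sample average, and apply Lyapunov's CLT with a Cram\'er--Wold step to carry the cross-equation covariance) matches the paper's strategy, which runs the argument through Lemma \ref{lem:MTRnormal}. However, there is a genuine error in your treatment of $B_K^{(d)}$. You classify ``the $B_K^{(d)}$ component of $\xi_K^{(d)}$'' as a bias of order $\sqrt{n}K^{-\mu_0}$ and drop it, retaining only $e_i^{(d)}$ in your linear statistic $U_{K,i}$. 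You have conflated $B_K^{(d)}$ with the sieve approximation residual $r_K^{(d)} = \sum_j \delta_j^{(d)}[g_j^{(d)}(P_j) - b_K(P_j)^\top \alpha_j^{(d)}]/P_j$ (resp.\ its $d=0$ analogue): it is $r_K^{(d)}$ that is the deterministic-magnitude $O(K^{-\mu_0})$ term killed by $\sqrt{n}K^{-\mu_0} \to 0$. By contrast, $B_K^{(d)} = (T_K^{(d)} - R_K^{(d)})^\top \theta^{(d)}$ arises from replacing the unobserved indicator $\delta_j^{(d)}$ by $\pi_j P_j$ (for $d=1$) or $\pi_j(1-P_j)$ (for $d=0$); it is a bounded, conditionally mean-zero random variable whose contribution to the estimator, $\mathbb{S}_{K,j}[\Psi_{nK}^{(d)}]^{-1}\mathbf{R}_K^{(d)\top}\mathbf{B}_K^{(d)}/n$, is $O_P(\sqrt{K/n})$ --- exactly the same order as the $e^{(d)}$ contribution (Lemma \ref{lem:matLLN}(iii) and (v)). It therefore enters the asymptotic variance at first order, which is why $\Sigma_K^{(d)}$ and $C_K$ are defined with $\xi_K^{(d)} = e^{(d)} + B_K^{(d)}$ rather than with $e^{(d)}$.

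The omission is not cosmetic: it destroys the very covariance term the theorem is about. Since $e^{(1)}$ carries a factor $D$ and $e^{(0)}$ a factor $1-D$, one has $e^{(0)}e^{(1)} = 0$ identically, so with $U_{K,i}$ built from $e_i^{(d)}$ alone the cross-equation covariance would be exactly zero and your claimed variance could not ``collapse precisely'' to $[\sigma_{K,j}^{(1)}(p)]^2 + 2\,cov_K(p) + [\sigma_{K,j}^{(0)}(p)]^2$ with $C_K = \bE[\xi_K^{(0)}\xi_K^{(1)}R_K^{(0)}R_K^{(1)\top}]$. The non-degenerate covariance comes entirely from the $B_K$ components: as the paper computes, $\bE[\xi_K^{(0)}\xi_K^{(1)} \mid X, \mathbf{Z}] = -R_K^{(0)\top}\theta^{(0)} \cdot R_K^{(1)\top}\theta^{(1)} \neq 0$ in general. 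To repair the argument, keep $\xi_{i,K}^{(d)} = e_i^{(d)} + B_{i,K}^{(d)}$ in $U_{K,i}$, handle $r_K^{(d)}$ separately as the $O(K^{-\mu_0})$ term, and verify the Lyapunov condition using the boundedness of $B_{i,K}^{(d)}$ (via \eqref{eq:uniformbounded}) together with Assumption \ref{as:error}; the rest of your plan then goes through as in the paper.
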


As shown in Lemma \ref{lem:MTRnormal}, $\sigma_{K,j}^{(d)}(p)$ corresponds to the asymptotic standard deviation of the MTR estimator for $D = d$.
Further, $cov_{K}(p)$ is the asymptotic covariance between the MTR estimators for $D = 1$ and $D = 0$, which is supposed to be non-zero in our case.
This non-zero covariance originates from replacing the unobserved membership indicator $\mathbf{1} \{ s = j \}$ with the membership probability $\pi_j$.

As a corollary of Theorem \ref{thm:normality}, the asymptotic properties of the feasible MTE estimator can be derived relatively easily with the following additional assumption.

\begin{assumption}\label{as:deriv}
	$\sup_{p \in [0, 1]} | \nabla b_K(p)^\top \alpha | = O(\sqrt{K}) \cdot \sup_{p \in [0, 1]} | b_K(p)^\top \alpha |$ for any $\alpha \in \mathbb{R}^K$.
\end{assumption}

\citet{chen2018optimal} show that this assumption holds true for B-splines and wavelets.

To proceed, let $\mathcal{C}(\mathcal{D})$ be the set of uniformly bounded continuous functions defined on $\mathcal{D}$.
Further, let $T$ be a generic random vector where $\text{supp}[T]$ is compact and $\dim(T)$ is finite.
We define the linear operator $\mathcal{P}_{n,K,j}^{(d)}$ that maps a given function $q \in \mathcal{C}(\text{supp}[T])$ to the sieve space defined by $b_K$ as follows:
\begin{align*}
    \mathcal{P}_{n,K,j}^{(d)}q &\coloneqq b_K(\cdot)^\top \mathbb{S}_{K, j} \left[ \Psi_{nK}^{(d)} \right]^{-1}  \frac{1}{n} \sum_{i = 1}^n R_{i, K}^{(d)} q(T_i),
\end{align*}
where $\Psi_{nK}^{(d)} \coloneqq n^{-1} \sum_{i=1}^n R_{i, K}^{(d)} R_{i, K}^{(d) \top}$.
The functional form of $q$ may implicitly depend on $n$.
The operator norm of $\mathcal{P}_{n,K,j}^{(d)}$ is defined as
\begin{align*}
	\| \mathcal{P}_{n,K,j}^{(d)} \|_{\infty} \coloneqq \sup \left\{ \sup_{p \in [0, 1]} \left| \left( \mathcal{P}_{n,K,j}^{(d)}q\right)(p) \right| : q \in \mathcal{C}(\text{supp}[T]), \sup_{t \in \text{supp}[T]} |q(t)| = 1 \right\},
\end{align*}
which is typically of order $O_P(1)$ for splines and wavelets under some regularity conditions.\footnote{
    A set of easy-to-check conditions ensuring this is to verify the conditions in Lemma 7.1 in \citet{chen2015optimal} and show that $\mathbb{S}_{K, j} \left[ \Psi_{nK}^{(d)} \right]^{-1}$ is stochastically bounded in the $\ell$-infinity norm.
    See also Appendix B.5 of \cite{hoshino2021treatment}.
}

\begin{corollary} \label{cor:normality}
    Suppose that the assumptions in Theorem \ref{thm:normality} hold.
    If Assumption \ref{as:deriv}, $\zeta_0(K) \zeta_1(K) / \sqrt{n} = O(1)$, and $(\| \mathcal{P}_{n,K,j}^{(d)}\|_{\infty} + 1) \sqrt{K} / \| \nabla b_K(p) \| \to 0$ for both $d \in \{ 0, 1 \}$ hold additionally, then
	\begin{align*}
    	\frac{\sqrt{n}\left( \hat{\text{MTE}}_j(x,p) - \text{MTE}_j(x,p) \right)}{\sqrt{\left[ \sigma_{K,j}^{(1)}(p) \right]^2 + 2cov_{K}(p) + \left[ \sigma_{K,j}^{(0)}(p) \right]^2}} \overset{d}{\to} N(0,1).
	\end{align*}
\end{corollary}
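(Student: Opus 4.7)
The plan is to reduce Corollary \ref{cor:normality} to Theorem \ref{thm:normality} by showing that replacing $(P_j, \pi_j)$ with the first-stage estimators $(\hat P_j, \hat \pi_j)$ contributes only a negligible perturbation. Writing
\[
\sqrt{n}\bigl(\hat{\text{MTE}}_j(x,p) - \text{MTE}_j(x,p)\bigr) = \sqrt{n}\bigl(\tilde{\text{MTE}}_j(x,p) - \text{MTE}_j(x,p)\bigr) + \sqrt{n}\bigl(\hat{\text{MTE}}_j(x,p) - \tilde{\text{MTE}}_j(x,p)\bigr),
\]
Theorem \ref{thm:normality} handles the first summand. Assumption \ref{as:eigen} guarantees that the denominator $\tau_K(p) \coloneqq \sqrt{[\sigma_{K,j}^{(1)}(p)]^2 + 2\,cov_K(p) + [\sigma_{K,j}^{(0)}(p)]^2}$ satisfies $\tau_K(p) \asymp \|\nabla b_K(p)\|$ because the inner sandwich matrices have eigenvalues bounded away from zero and infinity. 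Consequently, the corollary follows from Slutsky's theorem once we establish $\sqrt{n}(\hat{\text{MTE}}_j(x,p) - \tilde{\text{MTE}}_j(x,p)) = o_P(\|\nabla b_K(p)\|)$.

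The next step is to translate the first-stage rates $\|\hat\gamma_n - \gamma\| + \|\hat\pi_n - \pi\| = O_P(n^{-1/2})$ into uniform control over $\hat R_{i,K}^{(d)} - R_{i,K}^{(d)}$. Assumption \ref{as:parametricML} implies $\max_{i,j}|\hat P_{ji} - P_{ji}| = O_P(n^{-1/2})$, and a mean-value expansion gives $\max_i\|b_K(\hat P_{ji}) - b_K(P_{ji})\| = O_P(\zeta_1(K)/\sqrt{n})$. The growth condition $\zeta_0(K)\zeta_1(K)/\sqrt n = O(1)$ combined with Assumption \ref{as:eigen}(i) then yields $\hat\Psi_{nK}^{(d)} - \tilde\Psi_{nK}^{(d)} = o_P(1)$ in operator norm, so that $[\hat\Psi_{nK}^{(d)}]^{-1}$ is well-defined and close to $[\tilde\Psi_{nK}^{(d)}]^{-1}$ with high probability.

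The third step re-expresses the difference $\hat g_j^{(d)}(p) - \tilde g_j^{(d)}(p)$ as the action of the sieve projection operator $\mathcal{P}_{n,K,j}^{(d)}$ on a first-stage residual of uniform order $n^{-1/2}$, plus asymptotically negligible higher-order terms. The operator-norm bound $\|\mathcal{P}_{n,K,j}^{(d)}\|_\infty = O_P(1)$ then gives $\sup_p|\hat g_j^{(d)}(p) - \tilde g_j^{(d)}(p)| = O_P(\|\mathcal{P}_{n,K,j}^{(d)}\|_\infty / \sqrt{n})$, which, combined with Assumption \ref{as:deriv} applied to the vector $\hat\alpha_{n,j}^{(d)} - \tilde\alpha_{n,j}^{(d)}$, yields $\sup_p|\nabla b_K(p)^\top(\hat\alpha_{n,j}^{(d)} - \tilde\alpha_{n,j}^{(d)})| = O_P(\sqrt{K}\,\|\mathcal{P}_{n,K,j}^{(d)}\|_\infty / \sqrt{n})$. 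The parametric component satisfies $\sqrt{n}(\hat\beta_{n,j}^{(d)} - \tilde\beta_{n,j}^{(d)}) = O_P(1)$ by an analogous but simpler argument. Summing these contributions,
\[
\sqrt{n}\bigl|\hat{\text{MTE}}_j(x,p) - \tilde{\text{MTE}}_j(x,p)\bigr| = O_P\!\left((\|\mathcal{P}_{n,K,j}^{(0)}\|_\infty + \|\mathcal{P}_{n,K,j}^{(1)}\|_\infty + 1)\sqrt{K}\right),
\]
which, divided by $\|\nabla b_K(p)\|$, vanishes by the third assumed rate condition, completing the reduction.

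The main obstacle is the third step: one has to linearize $b_K(\hat P_{ji})$ around $P_{ji}$ in a manner that factors the resulting first-order perturbation through the operator $\mathcal{P}_{n,K,j}^{(d)}$, so that the crude bound $\zeta_1(K) = O(K^{3/2})$ on derivatives of the basis is replaced by the much milder $\sqrt{K}$ that Assumption \ref{as:deriv} permits. Handling this without losing extra powers of $K$, while simultaneously controlling the remainder terms arising from $\hat\Psi_{nK}^{(d)} - \tilde\Psi_{nK}^{(d)}$ and from the quadratic part of the Taylor expansion, is the delicate technical core.
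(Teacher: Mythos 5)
Your proposal is correct and follows essentially the same route as the paper: the paper's proof of the corollary rests on Lemma B.4(ii), which linearizes the feasible estimator to the same leading term as the infeasible one by decomposing the perturbation from the first stage into the terms $\mathfrak{A}_{1n,j},\mathfrak{A}_{2n,j},\mathfrak{A}_{3n,j}$, and the delicate term $\mathfrak{A}_{3n,j}$ is controlled exactly as you describe --- by using Assumption \ref{as:deriv} to trade $\nabla b_K$ for $\sqrt{K}\cdot b_K$ and then factoring the first-stage residual through the sieve operator $\mathcal{P}_{n,K,j}^{(d)}$ (via Lemma B.3), so that the rate condition $(\|\mathcal{P}_{n,K,j}^{(d)}\|_{\infty}+1)\sqrt{K}/\|\nabla b_K(p)\|\to 0$ kills it. The only cosmetic difference is that you bound $\hat{\text{MTE}}_j-\tilde{\text{MTE}}_j$ directly while the paper bounds $\hat m_j^{(d)}-m_j^{(d)}$ against the common leading term; these are equivalent.
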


As shown above, the feasible MTE estimator has the same asymptotic distribution as the infeasible estimator.
This is because the asymptotic distributions of the MTE estimators are determined only by the second-stage series estimator of the nonparametric component $\nabla g_j^{(d)}(p)$ whose convergence rate is slower than the first-stage ML estimator and the series estimator of the parametric component $\beta_j^{(d)}$.

The standard errors of the MTE estimators can be straightforwardly computed by replacing unknown terms in $\sigma_{K,j}^{(d)}(p)$ and $cov_K(p)$ with their empirical counterparts: $\hat \Psi_{nK}^{(d)} \coloneqq \frac{1}{n} \sum_{i=1}^n \hat R_{i, K}^{(d)} \hat R_{i, K}^{(d) \top}$, $\hat \Sigma_{nK}^{(d)} \coloneqq \frac{1}{n} \sum_{i=1}^n (\hat \xi_{i, K}^{(d)})^2 \hat R_{i, K}^{(d)} \hat R_{i, K}^{(d) \top}$, and $\hat C_{nK} \coloneqq \frac{1}{n} \sum_{i=1}^n \hat \xi_{i, K}^{(0)} \hat \xi_{i, K}^{(1)} \hat R_{i, K}^{(0)} \hat R_{i, K}^{(1) \top}$, where $\hat \xi_{i, K}^{(1)} \coloneqq D_i Y_i - \hat R_{i, K}^{(1)\top} \hat \theta_n^{(1)}$ and $\hat \xi_{i, K}^{(0)} \coloneqq (1 - D_i) Y_i - \hat R_{i, K}^{(0)\top} \hat \theta_n^{(0)}$.

\section{Additional Discussion and Extensions} \label{sec:extension}

\subsection{Endogenous group membership with covariate-dependent membership probability}\label{subsec:hetero}

We consider relaxing Assumption \ref{as:membership} by allowing the membership variable $s$ to be dependent of both $\epsilon_j^D$ and a vector of covariates $W$.
For simplicity, we focus on the case of $S = 2$ only.
Suppose that $s$ and $D$ are determined by the following model:
\begin{align}\label{eq:member}
	\begin{split}
		s
		& = 2 - \mathbf{1}\{\pi(W) \ge U\}, \\
		D 
		& = \mathbf{1}\{P_j \ge V_j\} \qquad \text{if $s = j$ for $j \in \{1,2\}$},
	\end{split}
\end{align} 
where $U$ is an unobserved continuous random variable distributed as $\text{Uniform}[0,1]$ conditional on $X$, and $\pi$ is an unknown function that takes values on $[0,1]$.
The other components are the same as those in \eqref{eq:treatment}.
We assume that $W$ is conditionally independent of $U$ given $X$.
Then, $\Pr(s = 1|X, W) = \pi(W)$.
In addition, we assume that $W$ contains at least one continuous variable that is not included in $(X, Z_1, Z_2)$.

In this setup, we re-define the MTE parameter and the MTR function, respectively, as follows:
\begin{align} \label{eq:covariate-dependent-MTE}
\begin{split}
	\text{MTE}_j(x,q,p )
	& \coloneqq m_j^{(1)}(x,q,p) - m_j^{(0)}(x,q,p),\\
	m_j^{(d)}(x, q, p)
	& \coloneqq \bE [ \mu^{(d)} ( x, j, \epsilon^{(d)} ) | X = x, U = q, V_j = p].
\end{split}
\end{align}
Further, letting $Q \coloneqq \pi(W)$, we define 
\begin{align*}
	\psi_1(x, q, p_1, p_2) &
	\coloneqq \bE [ DY | X = x, Q = q, P_1 = p_1, P_2 = p_2],\\
	\psi_0(x, q, p_1, p_2) &
	\coloneqq \bE [ (1- D)Y | X = x, Q = q, P_1 = p_1, P_2 = p_2],\\
	\rho_{dj}(x, q, p_1, p_2) &
	\coloneqq \Pr( D = d, s = j | X = x, Q = q, P_1 = p_1, P_2 = p_2 ) \quad \text{for $d \in \{ 0, 1 \}$ and $j \in \{ 1, 2 \}$}.
\end{align*}
For identification of these functions, we first need to establish the identification of all model parameters in \eqref{eq:member}.
In Appendix \ref{subsubsec:endmember}, we present a supplementary identification result for an endogenous finite-mixture treatment choice model where the error terms are assumed to be jointly normal.
For notational simplicity, we denote the cross-partial derivatives with respect to $(q, p_j)$ by $\nabla_{qp_j}$; for instance, $\nabla_{qp_1} \psi_1(x, q, p_1, p_2) = \partial^2 \psi_1(x, q, p_1, p_2) / (\partial q \partial p_1)$.

We introduce the following identification conditions that can be visually understood using the DAG in Figure \ref{fig:dag2}.

\begin{assumption}\label{as:W}\hfil
	\begin{enumerate}[(i)]
		\item $(W, Z_1, Z_2)$ are independent of $(\epsilon^{(d)}, \epsilon^D_j, U)$ given $X$ for all $d$ and $j$.
		\item Each of $W$, $Z_1$, and $Z_2$ contains at least one continuous variable that is not included in $X$ and the rest.
		\item For each $j$, $(U, V_j)$ is continuously distributed on $[0, 1]^2$ with  conditional density $f_{UV_j}(\cdot, \cdot|X)$ given $X$.
	\end{enumerate}
\end{assumption}

\begin{figure}[!h]
	\begin{center}
		\includegraphics[width = 6cm, bb = 0 0 163 107]{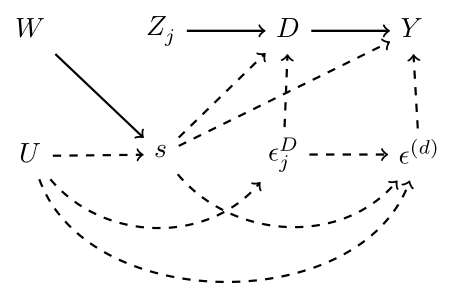}
		\caption{DAG under Assumption \ref{as:W}}
		\label{fig:dag2}
	\end{center}
\end{figure}

\begin{theorem}\label{thm:ident-MTR2}
	Suppose that Assumption \ref{as:W} holds.
	If $m_j^{(d)}(x, \cdot, \cdot)$ and $f_{UV_j}(\cdot, \cdot|X=x)$ are continuous, we have
	\begin{align*}
		m_j^{(d)}(x, q, p_j) = \frac{\nabla_{qp_j} \psi_d(x, q, p_1, p_2)}{\nabla_{qp_j} \rho_{dj}(x, q, p_1, p_2)},
		\quad \text{and} \quad
		f_{UV_j}(q, p_j | X = x) = (-1)^{d + j} \cdot \nabla_{qp_j} \rho_{dj}(x, q, p_1, p_2).
	\end{align*}
\end{theorem}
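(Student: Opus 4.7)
The plan is to mirror the proof of Theorem \ref{thm:ident-MTR1}: express both $\psi_d$ and $\rho_{dj}$ as iterated integrals of the joint conditional density $f_{U V_j}(\cdot, \cdot \mid X = x)$, then collapse the cross partial $\nabla_{q p_j}$ to a point evaluation via Leibniz's rule. The new wrinkle compared to Theorem \ref{thm:ident-MTR1} is that the group indicator $\mathbf{1}\{s = j\}$ is itself a threshold event in $U$, via $\mathbf{1}\{s = 1\} = \mathbf{1}\{U \le Q\}$ and $\mathbf{1}\{s = 2\} = \mathbf{1}\{U > Q\}$; this is precisely why one needs a cross derivative in $(q, p_j)$ rather than a single derivative in $p_j$.

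The first step is to exploit the exclusion restriction. Because $P_j$ is a measurable function of $(X, Z_j)$, $Q = \pi(W)$ is a measurable function of $W$, and $(U, V_1, V_2, \epsilon^{(0)}, \epsilon^{(1)})$ are independent of $(W, Z_1, Z_2)$ given $X$ by Assumption \ref{as:W}(i), conditioning on $(X = x, Q = q, P_1 = p_1, P_2 = p_2)$ leaves the joint law of the latent variables equal to its law given $X = x$ alone; Assumption \ref{as:W}(ii) provides enough free continuous variation in $(W, Z_1, Z_2)$ to guarantee that the required partial derivatives are well defined on the relevant support. Combined with $DY = D Y^{(1)}$, $(1 - D) Y = (1 - D) Y^{(0)}$, and $\mathbf{1}\{D = 1, s = j\} = \mathbf{1}\{V_j \le P_j\} \cdot \mathbf{1}\{s = j\}$, the tower property then yields
\begin{align*}
    \psi_1(x, q, p_1, p_2) &= \int_0^q \int_0^{p_1} m_1^{(1)}(x, u, v)\, f_{U V_1}(u, v \mid x)\, dv\, du \\
    &\quad + \int_q^1 \int_0^{p_2} m_2^{(1)}(x, u, v)\, f_{U V_2}(u, v \mid x)\, dv\, du,
\end{align*}
together with analogous formulas for $\psi_0$ (with $\int_{p_j}^1$ in place of $\int_0^{p_j}$) and for $\rho_{dj}$ (obtained by dropping the factor $m_j^{(d)}$).

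Applying $\nabla_{q p_j}$ to these expressions under the stated continuity of $m_j^{(d)}(x, \cdot, \cdot)$ and $f_{U V_j}(\cdot, \cdot \mid X = x)$ kills the term that does not depend on $p_j$ and collapses the remaining double integral by Leibniz's rule. For instance, for $(d, j) = (1, 1)$ one gets $\nabla_{q p_1} \psi_1 = m_1^{(1)}(x, q, p_1)\, f_{U V_1}(q, p_1 \mid x)$ and $\nabla_{q p_1} \rho_{11} = f_{U V_1}(q, p_1 \mid x)$, whose ratio is $m_1^{(1)}(x, q, p_1)$. The three remaining cases $(d, j) \in \{(1, 2), (0, 1), (0, 2)\}$ are just sign bookkeeping: each $\int_q^1$ contributes a factor of $-1$ under $\partial / \partial q$ and each $\int_{p_j}^1$ contributes a factor of $-1$ under $\partial / \partial p_j$, so the net sign attached to $f_{U V_j}(q, p_j \mid x)$ in $\nabla_{q p_j} \rho_{d j}$ is exactly $(-1)^{d + j}$, matching the second claim; the same sign appears in $\nabla_{q p_j} \psi_d$ and cancels in the ratio defining $m_j^{(d)}$. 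The main obstacle is the conditional-independence step: carefully verifying that conditioning on the composite indices $(Q, P_1, P_2)$ transmits no information about the latent errors beyond what $X$ already contains. Once that is in place, the differentiation argument is purely mechanical.
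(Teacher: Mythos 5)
Your proposal is correct and follows essentially the same route as the paper's own proof: express $\psi_d$ and $\rho_{dj}$ as double integrals of $m_j^{(d)}(x,u,v)\,f_{UV_j}(u,v\mid X=x)$ (respectively of the density alone) using the exclusion restriction and the law of iterated expectations, then apply the Leibniz rule to the cross partial $\nabla_{qp_j}$; your sign bookkeeping for the four $(d,j)$ cases matches the paper's (which writes out only $(d,j)=(1,1)$ and declares the rest analogous).
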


The following theorem shows that the group-wise MTE parameter $\text{MTE}_j(x, p)$ defined in \eqref{eq:groupMTE} can be recovered by the weighted average of $\text{MTE}_j(x, q, p)$ in \eqref{eq:covariate-dependent-MTE}.

\begin{theorem}\label{thm:ident-g-MTE}
	Suppose that Assumption \ref{as:W} holds.
	Then, we have
	\begin{align*}
		\text{MTE}_j(x, p) = \int_0^1 \text{MTE}_j(x, u, p) \omega_j(x, u, p) \mathrm{d}u,
		\quad 
		\text{for $j \in \{ 1, 2 \}$}, 
	\end{align*}
	where 
	\begin{align*}
		\omega_1(x, u, p) & \coloneqq \frac{ \Pr(u \le Q | X = x)   f_{UV_1}(u,p|X = x) }{\int_0^1 \Pr(u' \le Q | X = x)  f_{UV_1}(u', p| X = x) \mathrm{d}u'}, \\
		\omega_2(x, u, p) & \coloneqq \frac{\Pr(u > Q|X = x) f_{UV_2}(u, p|X = x)  }{\int_0^1 \Pr(u' > Q | X = x) f_{UV_2}(u',p|X = x)  \mathrm{d}u'}.
	\end{align*}
\end{theorem}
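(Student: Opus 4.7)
The plan is to prove the identity by applying the law of iterated expectations, conditioning additionally on the latent variable $U$, and then exploiting the independence structure in Assumption \ref{as:W} to turn the resulting inner conditional expectation into $m_j^{(d)}(x,u,p)$ and the outer weighting density into $\omega_j(x,u,p)$. Since the MTR functions enter $\text{MTE}_j$ linearly, it suffices to establish the decomposition
\begin{align*}
	m_j^{(d)}(x, p) = \int_0^1 m_j^{(d)}(x, u, p) \, \omega_j(x, u, p) \, \mathrm{d}u
\end{align*}
for each $d \in \{0,1\}$ and $j \in \{1,2\}$, and then subtract the $d=0$ case from the $d=1$ case.

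First, I would rewrite $\{s = 1\} = \{Q \ge U\}$ and $\{s = 2\} = \{Q < U\}$ using \eqref{eq:member}, so that
\begin{align*}
	m_1^{(d)}(x, p) = \bE\!\left[ \mu^{(d)}(X, 1, \epsilon^{(d)}) \,\big|\, X = x,\, Q \ge U,\, V_1 = p \right],
\end{align*}
and analogously for $m_2^{(d)}$. I would then condition additionally on $U = u$ and integrate out. The inner conditional expectation is $\bE[\mu^{(d)}(x,1,\epsilon^{(d)}) \mid X = x,\, Q \ge u,\, V_1 = p,\, U = u]$. Here Assumption \ref{as:W}(i) is essential: since $W$ is independent of $(\epsilon^{(d)}, \epsilon_1^D, U)$ given $X$, the conditioning event $\{Q \ge u\}$ (which depends only on $W$) drops out of the distribution of $\epsilon^{(d)}$ given $(X, U, V_1)$. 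Hence this inner conditional expectation collapses precisely to $m_1^{(d)}(x, u, p)$ as defined in \eqref{eq:covariate-dependent-MTE}.

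The second task is to identify the conditional density of $U$ given $(X=x, s=1, V_1=p)$ as the weight $\omega_1(x,u,p)$. Using the same independence, the joint density of $(U, V_1)$ with the event $\{s = 1\} = \{Q \ge U\}$ given $X=x$ factors as $\Pr(Q \ge u \mid X = x) \cdot f_{UV_1}(u, p \mid X = x)$; dividing by its integral over $u \in [0,1]$ normalizes this into a conditional density, which is exactly $\omega_1(x,u,p)$. The analogous calculation for $j = 2$ uses $\{s = 2\} = \{Q < U\}$ and yields $\omega_2(x,u,p)$. Combining the two pieces via Fubini gives the stated integral representation, and differencing $d = 1$ and $d = 0$ yields the theorem.

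The main obstacle I anticipate is bookkeeping around the conditioning: one must be careful that (a) conditioning on $V_j = p$ is equivalent to conditioning on $\epsilon_j^D = F_j^{-1}(p \mid X)$, so that the independence of $W$ from $(\epsilon^{(d)}, \epsilon_j^D, U)$ given $X$ in Assumption \ref{as:W}(i) really lets one drop $\{Q \gtrless u\}$ from the inner expectation, and (b) the conditional density $f_{UV_j}(\cdot,\cdot \mid X)$ from Assumption \ref{as:W}(iii) is well-defined and continuous, so Fubini applies and the normalizing integral in $\omega_j$ is finite and positive. These are essentially regularity checks rather than substantive difficulties, and no new identification machinery is needed beyond the independence and continuity assumed in Assumption \ref{as:W}.
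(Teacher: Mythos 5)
Your proposal is correct and rests on the same ingredients as the paper's proof: the law of iterated expectations, the rewriting of $\{s=1\}$ as $\{Q \ge U\}$ (and $\{s=2\}$ as $\{Q < U\}$), Bayes' rule to obtain the weighting density, and Assumption \ref{as:W}(i) to factor the event involving $Q$ out of the inner expectation. The one structural difference is the choice of conditioning variable in the iterated-expectations step: the paper conditions on $Q$ first, which forces it to unpack the inner expectation as an integral over $U$, compute $f_Q(q \mid X=x, V_1=p, s=1)$ by Bayes, and then collapse the resulting double integral over $(q,u)$ via cancellation of the $\Pr(U \le q \mid \cdot)$ factors and Fubini; you condition on $U$ directly, so the event $\{Q \ge u\}$ drops out of the inner expectation by independence and the single-integral representation with weight $\omega_1(x,u,p) = \Pr(Q \ge u \mid X=x) f_{UV_1}(u,p \mid X=x)$ (normalized) appears immediately, with no Fubini step needed. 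Your route is therefore a mild streamlining rather than a different argument; both proofs use exactly the same assumptions, and your bookkeeping concerns (that $f_{V_1}(p \mid X=x)=1$ by the uniform normalization, and that the normalizing constant $\Pr(s=1 \mid X=x, V_1=p)$ is positive and finite) are indeed the only regularity points to verify.
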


\subsection{Absence of continuous IVs}\label{subsec:late}

In some empirical situations, only discrete IVs are available, and many are binary.
In this subsection, we focus on a situation where only binary IVs are available and show that some LATE parameters are still identifiable by the Wald estimand.
For expositional simplicity, the condition $X = x$ is suppressed throughout this subsection.

Let $Z$ be an $S \times 1$ vector of binary instruments $Z = (Z_1, \dots, Z_S) \in \mathcal{Z}$, where $\mathcal{Z} \coloneqq \text{supp}[Z]$.
In this subsection, we allow $Z$ to contain some overlapping elements.
In an extreme case, when there is only one instrument common for all groups, we have $\mathcal{Z} = \{ \mathbf{0}_S, \mathbf{1}_S \}$, where $\mathbf{0}_S$ and $\mathbf{1}_S$ are $S \times 1$ vectors of zeros and ones, respectively.
Let $Y^{(d, z)}$ be the potential outcome when $D = d$ and $Z = z$.
Similarly, denote the potential treatment status when $s = j$ and $Z = z$ as $D_j^{(z)}$.

\begin{assumption}\label{as:late}\hfil
\begin{enumerate}[(i)]
    \item For each $d$ and $j$, $Y^{(d, z)} = Y^{(d)}$ and $D_j^{(z)} = D_j^{(z_j)}$ for any $z = (z_1, \dots, z_S) \in \mathcal{Z}$.
    \item $Z$ is independent of $(D_j^{(0)}, D_j^{(1)}, Y^{(0)}, Y^{(1)}, s)$ for all $j$.
    \item $\Pr(D_j^{(1)} = 1, D_j^{(0)} = 0 | s = j) > 0$ and $\Pr(D_j^{(1)} = 0, D_j^{(0)} = 1 | s = j) = 0$ for all $j$.
\end{enumerate}
\end{assumption}

Assumption \ref{as:late}(i) can be violated when the instruments affect the outcome directly or when the group-$j'$-specific instrument $Z_{j'}$ affects the treatment status of the individuals in group $j$ ($j \neq j'$).
Note that, under this condition, it holds that $D = D_j^{(z_j)}$ conditional on $s = j$ and $Z = z$.
Assumption \ref{as:late}(ii) is essentially the same as Assumption \ref{as:IV}(i).
Assumption \ref{as:late}(iii) is similar to the monotonicity condition in \citet{imbens1994identification}.
Following the literature, the individuals with $D_j^{(1)} = D_j^{(0)} = 1$ can be referred to as \textit{$Z_j$-always-takers}; those with $D_j^{(1)} = D_j^{(0)} = 0$ are \textit{$Z_j$-never-takers}; those with $D_j^{(1)} > D_j^{(0)}$ are \textit{$Z_j$-compliers}; and those with $D_j^{(1)} < D_j^{(0)}$ are \textit{$Z_j$-defiers}.
Hence, the condition (iii) ensures that for all $j$, there are $Z_j$-compliers but no $Z_j$-defiers in group $j$.

\begin{theorem}\label{thm:late}
    Suppose that Assumption \ref{as:late} holds.
    Then, the weighted average of the group-wise LATEs can be identified as below:
    \begin{align*}
        \frac{\bE [Y|Z = \mathbf{1}_S] - \bE [Y|Z = \mathbf{0}_S]}{\bE [D|Z = \mathbf{1}_S] - \bE [D|Z = \mathbf{0}_S]}
        = \sum_{j = 1}^S \bE [Y^{(1)} - Y^{(0)} | \text{$Z_j$-compliers}, s = j] \frac{\Pr(\text{$Z_j$-compliers}, s = j)}{\sum_{h=1}^S  \Pr(\text{$Z_h$-compliers}, s = h) }.
    \end{align*}
    Moreover, if $\mathbf{e}_j \in \mathcal{Z}$, where $\mathbf{e}_j$ is an $S \times 1$ unit vector with its $j$-th element equal to one, we can identify the LATE specific to group $j$:
    \begin{align*}
        \frac{\bE [Y|Z = \mathbf{e}_j] - \bE [Y|Z = \mathbf{0}_S]}{\bE [D|Z = \mathbf{e}_j] - \bE [D|Z = \mathbf{0}_S]}
        = \bE [Y^{(1)} - Y^{(0)} | \text{$Z_j$-compliers}, s = j].
    \end{align*}
\end{theorem}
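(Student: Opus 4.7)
The plan is to mimic the classical Imbens--Angrist (1994) LATE derivation, but carrying the latent group $s$ through the argument and exploiting the exclusion-restriction part of Assumption \ref{as:late}(i), which says that conditional on $s=j$, the treatment response depends only on $Z_j$. First I would condition on $s$ in both the numerator and denominator of the Wald estimand. Since Assumption \ref{as:late}(ii) implies $Z \perp s$, for any $z \in \mathcal{Z}$,
\begin{align*}
\bE[Y|Z=z] = \sum_{j=1}^S \Pr(s=j)\,\bE[Y\mid Z=z, s=j].
\end{align*}
Using $Y = DY^{(1)} + (1-D)Y^{(0)}$ together with Assumption \ref{as:late}(i), conditional on $s=j$ we have $D = D_j^{(z_j)}$, so
\begin{align*}
\bE[Y\mid Z=z, s=j] = \bE\bigl[D_j^{(z_j)}Y^{(1)} + (1 - D_j^{(z_j)})Y^{(0)} \,\big|\, s=j\bigr],
\end{align*}
where I also use that $Z$ is independent of $(Y^{(0)}, Y^{(1)}, D_j^{(0)}, D_j^{(1)})$ given $s=j$ (a consequence of Assumption \ref{as:late}(ii)).

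Next I take the difference between $z = \mathbf 1_S$ and $z = \mathbf 0_S$. Within group $j$ the $z_j$-coordinate flips from $0$ to $1$, so
\begin{align*}
\bE[Y\mid Z=\mathbf 1_S, s=j] - \bE[Y\mid Z=\mathbf 0_S, s=j] = \bE\bigl[(D_j^{(1)} - D_j^{(0)})(Y^{(1)} - Y^{(0)}) \,\big|\, s=j\bigr].
\end{align*}
Under Assumption \ref{as:late}(iii) the random variable $D_j^{(1)} - D_j^{(0)}$ takes values in $\{0,1\}$ and equals $1$ exactly on $Z_j$-compliers within group $j$. Thus the last display equals $\Pr(Z_j\text{-compliers}\mid s=j)\cdot \bE[Y^{(1)}-Y^{(0)}\mid Z_j\text{-compliers}, s=j]$. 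Summing over $j$ with weights $\Pr(s=j)$ and recognising $\Pr(s=j)\Pr(Z_j\text{-compliers}\mid s=j) = \Pr(Z_j\text{-compliers}, s=j)$ yields the numerator. Applying the same decomposition to $\bE[D\mid Z=z]$ gives a denominator equal to $\sum_{h=1}^S \Pr(Z_h\text{-compliers}, s=h)$, and taking the ratio produces the first claim.

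For the second claim, I repeat the same conditioning-on-$s$ decomposition with $z = \mathbf e_j$ against $z = \mathbf 0_S$. The key observation is that for any group $h \neq j$, the $h$-th coordinate of both $\mathbf e_j$ and $\mathbf 0_S$ is $0$, so by Assumption \ref{as:late}(i) the within-group-$h$ terms $\bE[Y\mid Z=\mathbf e_j, s=h] - \bE[Y\mid Z=\mathbf 0_S, s=h]$ vanish, and similarly for $D$. Only the $j$-th group contributes, producing an unweighted LATE for group $j$. The main obstacle is bookkeeping: ensuring that the exclusion restriction in Assumption \ref{as:late}(i) is properly applied so that for $h \neq j$ the treatment response is unchanged across the two instrument values, and that Assumption \ref{as:late}(ii) is strong enough to let me pull $Z$ out of every conditional expectation once $s$ is fixed. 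Beyond this, the argument is a direct extension of the classical LATE calculation applied group by group.
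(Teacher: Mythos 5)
Your proposal is correct and follows essentially the same route as the paper: decompose the numerator and denominator by group membership using the joint independence in Assumption \ref{as:late}(ii), reduce $D$ to $D_j^{(z_j)}$ via the exclusion restriction in \ref{as:late}(i), and apply the classical Imbens--Angrist argument within each group, with the no-defiers condition \ref{as:late}(iii) isolating the complier contribution. The only cosmetic difference is that the paper enumerates always-takers, never-takers, and compliers explicitly and cancels the first two, whereas you reach the same place via the identity $\bE[(D_j^{(1)}-D_j^{(0)})(Y^{(1)}-Y^{(0)})\mid s=j]$.
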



\section{Monte Carlo Experiments} \label{sec:simulation}

We conduct a set of Monte Carlo experiments to evaluate the finite sample performance of our estimators.
Here, we provide the main simulation results only, and some supplementary experiments are provided in Appendix \ref{sec:appendix:simulation}.

Setting $S = 2$ with the membership probabilities $(\pi_1, \pi_2) = (0.6, 0.4)$, the treatment variable is generated by $D = \mathbf{1}\{Z_j^\top \gamma_j \ge \epsilon_j^D\}$ for $s = j$, where $Z_j = (1, X_1, \zeta_j)^\top$, $X_1 \sim N(0, 1)$, $\zeta_j \sim N(0, 1)$, and $\epsilon_j^D \sim N(0, 1)$ for both $j \in \{ 1, 2 \}$.
We set $\gamma_1 = (\gamma_{11}, \gamma_{12}, \gamma_{13})^\top = (0, -0.5, 0.5)^\top$ and $\gamma_2 = (\gamma_{21}, \gamma_{22}, \gamma_{23})^\top = (0, 0.5, -0.5)^\top$.
The potential outcomes are generated by $Y^{(d)} = \sum_{j \in \{1,2\}} \mathbf{1}\{s = j\} X^\top \beta_j^{(d)} + \epsilon^{(d)}$, where $X = (1, X_1)^\top$, $\epsilon^{(0)} = \sum_{j \in \{1,2\}}\mathbf{1}\{s = j\}V_j + \eta^{(0)}$, $\epsilon^{(1)} = \sum_{j \in \{1,2\}}\mathbf{1}\{s = j\} V_j^2 + \eta^{(1)}$, and $\eta^{(d)} \sim N(0, 0.5^2)$ for both $d \in \{ 0, 1 \}$.
Here, $V_j = \Phi(\epsilon_j^D)$, and $\Phi$ denotes the standard normal CDF.
The coefficients are set to $\beta_1^{(0)} = (-1, 1)^\top$, $\beta_2^{(0)} = (1, 2)^\top$, $\beta_1^{(1)} = (1, -1)^\top$, and $\beta_2^{(1)} = (2, 1)^\top$.
For each setup, we considered two sample sizes, $n \in \{ 1000, 4000 \}$.

For the first-stage estimation of the finite mixture Probit model, we use the EM algorithm.
The second-stage MTE estimation is carried out using both infeasible and feasible estimators.
We employ B-splines of order $3$ for the basis functions.
The number of inner knots of B-splines, say $\tilde K$, is set to $\tilde K = 1$ when $n = 1000$ and $\tilde K \in \{1,2\}$ when $n = 4000$.
To stabilize the series regression, ridge regression with penalty $10 n^{-1}$ is also considered for comparison.
The simulation results reported below are based on 1,000 Monte Carlo replications.

Table \ref{table:mc1}(a) shows the bias and root mean squared error (RMSE) of estimating the group-wise MTE for both groups with $x = 0.5$ and $v \in \{0.2, 0.4, 0.6, 0.8\}$ (labelled respectively as MTE1.1, and MTE1.2, and so on for group 1, and similarly labelled for group 2).
Overall, the performance of the feasible estimator is almost the same as that of the infeasible estimator.
This is consistent with our asymptotic theory.
The bias of our estimator is satisfactorily small, except for cases close to the boundary.
The RMSE quickly decreases as the sample size increases, as long as the number of basis terms remain unchanged, as expected.
Theoretically, we need to employ a larger number of basis terms as the sample size increases, however using $\tilde K = 2$ seems too flexible and the increase in the variance is rather problematic even for $n=4000$ (of course, this result is more or less specific to our choice of the functional form for MTE).
The RMSE for group 1 tends to be smaller than that for group 2, probably because of the difference in their group sizes.
Introducing a penalty term can improve the overall RMSE; hence, we recommend employing ridge regression in practice with a moderate sample size.

Table \ref{table:mc1}(b) presents the simulation results for the ML estimation of the finite mixture Probit model.
For all estimation parameters, the bias is satisfactorily small, especially when $n = 4000$.
The RMSE is approximately halved when the sample size increases from $1000$ to $4000$, implying $\sqrt{n}$-consistency of our ML estimator.

\begin{table}[!p]
	\caption{Simulation results} \label{table:mc1}
	\begin{subtable}{\textwidth}
		\caption{MTE estimation}
		{\footnotesize
			\begin{center}
				\begin{tabular}{rrrrcrrrrcrrrr}
					\hline\hline
					& \multicolumn{3}{c}{\bfseries }&\multicolumn{1}{c}{\bfseries }&\multicolumn{4}{c}{\bfseries Group 1}&\multicolumn{1}{c}{\bfseries }&\multicolumn{4}{c}{\bfseries Group 2}\tabularnewline
					\cline{6-9} \cline{11-14}
					& \multicolumn{1}{c}{$n$}&\multicolumn{1}{c}{$\tilde K$}&\multicolumn{1}{c}{ridge}&\multicolumn{1}{c}{}&\multicolumn{1}{c}{MTE1.1}&\multicolumn{1}{c}{MTE1.2}&\multicolumn{1}{c}{MTE1.3}&\multicolumn{1}{c}{MTE1.4}&\multicolumn{1}{c}{}&\multicolumn{1}{c}{MTE2.1}&\multicolumn{1}{c}{MTE2.2}&\multicolumn{1}{c}{MTE2.3}&\multicolumn{1}{c}{MTE2.4}\tabularnewline
					\hline
					\multicolumn{14}{l}{\bfseries Bias for the feasible estimator}\tabularnewline
					&$1000$&$1$&$0$&&$ 0.270$&$ 0.052$&$-0.142$&$ 0.118$&&$-0.208$&$ 0.049$&$ 0.211$&$ 0.323$\tabularnewline
					&$1000$&$1$&$1$&&$ 0.036$&$ 0.043$&$-0.068$&$-0.084$&&$-0.166$&$-0.006$&$ 0.070$&$ 0.055$\tabularnewline
					&$4000$&$1$&$0$&&$ 0.044$&$ 0.020$&$-0.055$&$ 0.011$&&$ 0.045$&$ 0.072$&$ 0.091$&$ 0.076$\tabularnewline
					&$4000$&$1$&$1$&&$-0.055$&$ 0.014$&$-0.037$&$-0.054$&&$-0.121$&$ 0.022$&$ 0.071$&$-0.011$\tabularnewline
					&$4000$&$2$&$0$&&$ 0.064$&$ 0.020$&$-0.063$&$-0.015$&&$ 0.024$&$ 0.113$&$ 0.040$&$ 0.224$\tabularnewline
					&$4000$&$2$&$1$&&$-0.067$&$ 0.031$&$-0.057$&$-0.080$&&$-0.124$&$ 0.069$&$ 0.030$&$ 0.002$\tabularnewline
					\hline
					\multicolumn{14}{l}{\bfseries Bias for the infeasible estimator}\tabularnewline
					&$1000$&$1$&$0$&&$ 0.049$&$ 0.005$&$-0.014$&$ 0.005$&&$-0.064$&$ 0.010$&$ 0.071$&$ 0.109$\tabularnewline
					&$1000$&$1$&$1$&&$-0.225$&$-0.053$&$-0.036$&$-0.169$&&$-0.281$&$-0.085$&$-0.046$&$-0.166$\tabularnewline
					&$4000$&$1$&$0$&&$ 0.019$&$ 0.012$&$-0.002$&$-0.023$&&$ 0.007$&$ 0.003$&$ 0.000$&$-0.001$\tabularnewline
					&$4000$&$1$&$1$&&$-0.103$&$ 0.002$&$ 0.010$&$-0.080$&&$-0.161$&$-0.027$&$-0.008$&$-0.102$\tabularnewline
					&$4000$&$2$&$0$&&$ 0.013$&$ 0.016$&$-0.027$&$-0.005$&&$-0.019$&$ 0.025$&$-0.040$&$ 0.020$\tabularnewline
					&$4000$&$2$&$1$&&$-0.145$&$ 0.053$&$-0.032$&$-0.075$&&$-0.204$&$ 0.024$&$-0.041$&$-0.114$\tabularnewline
					\hline
					\multicolumn{14}{l}{\bfseries RMSE for the feasible estimator}\tabularnewline
					&$1000$&$1$&$0$&&$ 1.552$&$ 0.897$&$ 0.969$&$ 2.060$&&$ 2.861$&$ 1.575$&$ 1.672$&$ 3.472$\tabularnewline
					&$1000$&$1$&$1$&&$ 0.707$&$ 0.602$&$ 0.606$&$ 0.837$&&$ 0.844$&$ 0.780$&$ 0.785$&$ 0.886$\tabularnewline
					&$4000$&$1$&$0$&&$ 0.716$&$ 0.392$&$ 0.435$&$ 0.923$&&$ 1.123$&$ 0.699$&$ 0.650$&$ 1.445$\tabularnewline
					&$4000$&$1$&$1$&&$ 0.492$&$ 0.322$&$ 0.347$&$ 0.670$&&$ 0.632$&$ 0.492$&$ 0.445$&$ 0.747$\tabularnewline
					&$4000$&$2$&$0$&&$ 0.872$&$ 0.638$&$ 0.730$&$ 1.136$&&$ 1.444$&$ 1.001$&$ 0.947$&$ 1.887$\tabularnewline
					&$4000$&$2$&$1$&&$ 0.508$&$ 0.483$&$ 0.588$&$ 0.722$&&$ 0.674$&$ 0.688$&$ 0.615$&$ 0.780$\tabularnewline
					\hline
					\multicolumn{14}{l}{\bfseries RMSE for the infeasible estimator}\tabularnewline
					&$1000$&$1$&$0$&&$ 1.446$&$ 0.695$&$ 0.775$&$ 1.894$&&$ 2.230$&$ 1.134$&$ 0.984$&$ 2.528$\tabularnewline
					&$1000$&$1$&$1$&&$ 0.651$&$ 0.483$&$ 0.514$&$ 0.938$&&$ 0.863$&$ 0.658$&$ 0.629$&$ 0.896$\tabularnewline
					&$4000$&$1$&$0$&&$ 0.685$&$ 0.339$&$ 0.380$&$ 0.930$&&$ 1.140$&$ 0.574$&$ 0.491$&$ 1.249$\tabularnewline
					&$4000$&$1$&$1$&&$ 0.492$&$ 0.292$&$ 0.318$&$ 0.719$&&$ 0.713$&$ 0.429$&$ 0.383$&$ 0.790$\tabularnewline
					&$4000$&$2$&$0$&&$ 0.788$&$ 0.754$&$ 0.861$&$ 1.152$&&$ 1.408$&$ 1.213$&$ 1.116$&$ 1.402$\tabularnewline
					&$4000$&$2$&$1$&&$ 0.507$&$ 0.563$&$ 0.670$&$ 0.819$&&$ 0.751$&$ 0.786$&$ 0.723$&$ 0.835$\tabularnewline
					\hline
				\end{tabular}
			\end{center}
			Note: The column labeled ``ridge'' indicates whether the ridge regression is used (1 for ``yes'' and 0 for ``no'').}
	\end{subtable}
	
	\bigskip \bigskip
	
	\begin{subtable}{\textwidth}
		\caption{ML estimation of the finite mixture Probit model}
		{\footnotesize
			\begin{center}
				\begin{tabular}{rrcrrrcrrrcrr}
					\hline\hline
					\multicolumn{2}{c}{\bfseries }&&\multicolumn{3}{c}{\bfseries Group 1}&&\multicolumn{3}{c}{\bfseries Group 2}&&\multicolumn{2}{c}{\bfseries Membership}\tabularnewline
					\cline{4-6} \cline{8-10} \cline{12-13}
					&\multicolumn{1}{c}{$n$}&&\multicolumn{1}{c}{$\gamma_{11}$}&\multicolumn{1}{c}{$\gamma_{12}$}&\multicolumn{1}{c}{$\gamma_{13}$}&&\multicolumn{1}{c}{$\gamma_{21}$}&\multicolumn{1}{c}{$\gamma_{22}$}&\multicolumn{1}{c}{$\gamma_{23}$}&&\multicolumn{1}{c}{$\pi_1$}&\multicolumn{1}{c}{$\pi_2$}\tabularnewline
					\hline
					\multicolumn{13}{l}{\textbf{Bias}}\tabularnewline
					&$1000$&&$ 0.003$&$0.004$&$0.168$&&$-0.025$&$0.045$&$-0.273$&&$-0.017$&$0.017$\tabularnewline
					&$4000$&&$-0.003$&$0.008$&$0.040$&&$ 0.002$&$0.000$&$-0.065$&&$-0.008$&$0.008$\tabularnewline
					\hline
					\multicolumn{13}{l}{\textbf{RMSE}}\tabularnewline
					&$1000$&&$ 0.373$&$0.420$&$0.408$&&$ 0.581$&$0.697$&$ 0.600$&&$ 0.140$&$0.140$\tabularnewline
					&$4000$&&$ 0.155$&$0.179$&$0.152$&&$ 0.231$&$0.283$&$ 0.233$&&$ 0.094$&$0.094$\tabularnewline
					\hline
				\end{tabular}
			\end{center}
		}
	\end{subtable}
\end{table}

\section{An Empirical Illustration: Economic Returns to College Education} \label{sec:empirical}

In this empirical analysis, we investigate the effects of college education on income in the Japanese labor market.
There are two sources used for data collection.
The primary data source is the Japanese Life Course Panel Survey 2007 (wave 1),\footnote{
    Acknowledgement: The data for this secondary analysis, "Japanese Life Course Panel Surveys, wave 1, 2007, of the Institute of Social Science, The University of Tokyo," was provided by the Social Science Japan Data Archive, Center for Social Research and Data Archives, Institute of Social Science, The University of Tokyo.
    }
which includes detailed information on Japanese workers aged 20 to 40, including their working condition, annual income, education level, and family member characteristics.
The outcome variable $Y$ of interest is the respondent's annual income, and the treatment variable is defined as follows: $D = 1$ if the respondent has a college degree (including junior colleges and technical colleges) or higher, $D = 0$ otherwise.
The second data source is the School Basic Survey conducted by the Ministry of Education, Japan.
Using this dataset, we collect information on regional university enrollment statistics for each prefecture where the respondents were living at the age of 15 to create IVs for the treatment. 
Table \ref{table:variables} below shows the list of variables used in this analysis.
After excluding observations with missing data and those with zero income, the analysis is performed on 3,318 individuals.

\begin{table}[!h]
\footnotesize{
\caption{Variables used}
\begin{center}
\begin{tabular}{lll}
\hline\hline
 & \multicolumn{1}{c}{Variable} & \\
 \hline
$Y$ & Annual income (in million JPY). & \\
$D$ & Dummy variable: the respondent has a college degree or higher. & \\
$X$ & Dummy variable: the respondent is currently living in an urban area. & \\
 & Dummy variable: the respondent is male. & \\
 & Age.* & \\
 & Working experience in years for the current job.* & \\
 & Dummy variable: the respondent has a permanent job. & \\
 & Log of average working hours per day. & \\
 & Dummy variable: the respondent works overtime almost everyday. & \\
 & Dummy variable: the respondent has professional skills. & \\
 & Dummy variable: the respondent is a public worker. & \\
 & Dummy variable: the respondent is in a managerial position. & \\
 & Dummy variable: the respondent is working in a large company. & \\
 & Dummy variable: the respondent is married. & \\
 & Dummy variable: the respondent's partner (if any) is a part-time worker. & \\
 & Self-reported health status (in five scales). & \\
 & The respondent's father's education in years. & \\
 & The respondent's mother's education in years. & (Shorthand)\\
$\mathbf{Z}$ 
 & Dummy variable: the respondent is male. & \textit{male}\\
 & Age. & \textit{age}\\
 & The respondent's father's education in years. & \textit{feduc}\\
 & The respondent's mother's education in years. & \textit{meduc}\\
 &&\\
 & \multicolumn{2}{c}{(Variables exculded from the outcome equation)} \\ 
 & Capacity: 10000 $\times$ (the number of universities)/(the number of high-school graduates).** & \textit{cap}\\
 & The rate of university enrollment for high school graduates.** & \textit{runiv}\\
 & The rate of workforce participation for high school graduates.** & \textit{rwork}\\
 & Sum of the dummies for the possession of (own room/student desk/encyclopedias/visual dictionaries/PC) & \\
 & \qquad when he/she was 15 years old. & \textit{homeenv}\\
 & Log of the number of books in the respondent's home when he/she was 15 years old. & \textit{books}\\
 & Economic condition of the respondent's household when he/she was 15 years old (in five scales). & \textit{econom}\\
 & The number of elder siblings. & \textit{nesibs}\\
 & The number of younger siblings. & \textit{nysibs}\\
 \hline
 & * The square of these variables are also included in the regressors. & \\
 & ** These variables are all at prefecture level, and they are as of when the respondent was 15 years old. & \\
\end{tabular}
\label{table:variables}
\end{center}}
\end{table}

In this analysis, we employ a finite mixture Probit model with two latent groups in which group 1 is composed of individuals whose college enrollment decisions are mainly affected by the regional educational characteristics, and group 2 comprises individuals who are mainly affected by their home educational environment.
For the IVs that should be included in $Z_1$ and $Z_2$, we identify them by running a stepwise AIC-based variable selection.
In doing this, we assume that $Z_1$ contains at least \textit{cap}, \textit{runiv}, and \textit{rwork} and that $Z_2$ contains at least \textit{homeenv} and \textit{books}.
Note that we do not rule out the possibility that these variables are common IVs (in fact, \textit{cap} is identified as a common IV).\footnote{
	We estimated the models with $S = 3$ or larger using the same AIC-based model selection procedure.
	However, the EM algorithm for such models did not converge within tolerable parameter values (the results are not reported here due to word limit constraints, but are available on request).
	This would indicate that the two-group model is appropriate for our dataset.
}

In Table \ref{table:treatsel}, we present the result of our mixture model and that of the standard Probit model without mixture for comparison.
First of all, our treatment choice model has a significantly smaller AIC value than the model with no mixture.
The estimation results indicate that approximately 80\% of our observations are classified into group 1, which is composed of individuals who are influenced not only by regional characteristics of university enrollment, but also by their family characteristics such as their parents' education level and the number of siblings.
The remaining 20\% belong to group 2, which is composed of individuals whose home study environment is major determinant of the college entrance decisions.

\begin{table}[!h]
\footnotesize{
\caption{Estimation results: treatment choice models}
\begin{center}
\begin{tabular}{lcccccc}
\hline\hline
 & \multicolumn{4}{c}{Mixture Model ($S=2$)}  & \multicolumn{2}{c}{Standard Probit Model} \\
 & \multicolumn{2}{c}{\textbf{Group 1}} &  \multicolumn{2}{c}{\textbf{Group 2}} &  & \\
 \cline{2-3}\cline{4-5}
 & Estimate & $t$-value & Estimate & $t$-value & Estimate & $t$-value\\
 \hline
$\pi$ & 0.799  & 39.332  & 0.201  &  &  &  \\
Intercept & -3.045  & -7.357  & -18.444  & -2.007  & -3.112  & -8.960  \\
\textit{cap} & -0.016  & -0.740  & -0.519  & -1.654  & -0.031  & -1.899  \\
\textit{runiv} & 0.989  & 1.877  &  &  & 0.731  & 1.701  \\
\textit{rwork} & -0.926  & -2.088  &  &  & -0.909  & -2.525  \\
\textit{homeenv} &  &  & 0.725  & 1.585  & 0.043  & 2.067  \\
\textit{books} &  &  & 4.082  & 2.028  & 0.122  & 7.721  \\
\textit{econom} &  &  & 2.064  & 1.830  & 0.083  & 2.576  \\
\textit{nesibs} & -0.167  & -4.169  &  &  & -0.133  & -4.080  \\
\textit{nysibs} & -0.091  & -2.040  & -1.201  & -1.559  & -0.099  & -2.939  \\
\textit{male} &  &  &  &  & -0.010  & -0.223  \\
\textit{age} & 0.013  & 1.867  &  &  & 0.013  & 2.272  \\
\textit{feduc} & 0.140  & 10.438  &  &  & 0.115  & 10.350  \\
\textit{meduc} & 0.094  & 5.612  &  &  & 0.073  & 5.307  \\
 &  &  &  &  &  & \\
Log Likelihood & \multicolumn{4}{c}{-1978.771}  & \multicolumn{2}{c}{-1991.632} \\
AIC & \multicolumn{4}{c}{3989.542}  & \multicolumn{2}{c}{4009.300} \\
Sample Size & \multicolumn{6}{c}{3318}\\
\hline
\end{tabular}
\label{table:treatsel}
\end{center}}
\end{table}

Following the suggestions from the Monte Carlo results in Appendix \ref{sec:simulation}, we employ the third-order B-splines with $\tilde K = 1$ and use the ridge regression with the penalty equal to $10n^{-1}$ for the estimation of the MTEs.\footnote{
	Note that introducing a sufficiently small penalty term does not alter our asymptotic results.
}

Figure \ref{fig:mte} shows our main results.
We find that for those who belong to group 1, the effect of a college degree on income is not significant if they have higher unobserved costs of pursuing higher education, while it is significantly positive if the cost is moderate.
This result indicates the presence of a certain treatment selectivity for returns in group 1.
Recall that regional educational characteristics and family characteristics are the main factors affecting the college enrollment status of the members in this group.
Thus, we can expect that their personal willingness or resistance toward higher education is somewhat heterogeneous within the group, depending on their surrounding environment.
Such heterogeneity may also affect the magnitude of the educational returns, leading to the nonlinear shape of the MTE curve for group 1, as depicted in the figure.
In contrast, for the members in group 2, those who are influenced by their own study environment, the MTE curve is relatively flat and moderately positively significant in almost the entire region of $p$.
This implies that their unobserved costs are less dramatically related to educational returns than those in group 1.
	
\begin{figure}[!h]
    \begin{center}
    \includegraphics[width=15cm, bb = 0 0 1232 541]{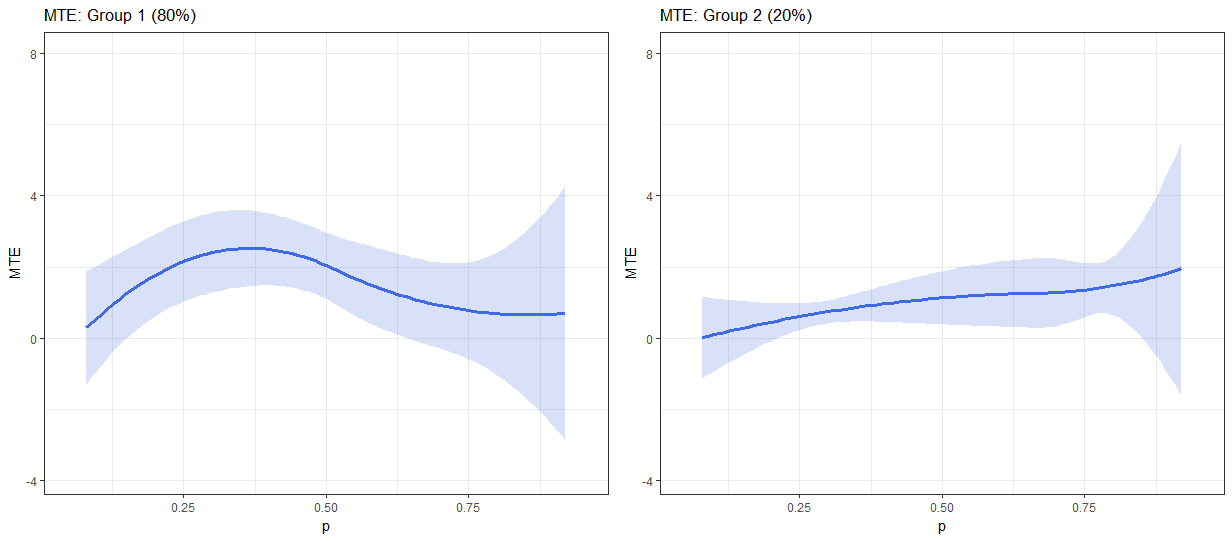}
    \caption{Estimated MTEs}
    \label{fig:mte}
    \end{center}
    \small{Note: In each panel, the solid line indicates the point estimate of the MTE evaluated at the sample mean of $X$, and the grayed area corresponds to the 95\% confidence interval.}
\end{figure}

\section{Conclusion} \label{sec:conclusion}

This paper considered identification and estimation of MTE when the data is composed of a mixture of latent groups.
We developed a general treatment effect model with unobserved group heterogeneity by extending the Rubin's causal model to finite mixture models.
We proved that the MTE for each latent group can be separately identified under the availability of group-specific continuous IVs.
Based on our constructive identification result, we proposed the two-step semiparametric procedure for estimating the group-wise MTEs and established its asymptotic properties.
An empirical application to the estimation of economic returns to college education indicates the usefulness of the proposed model.

\section*{Acknowledgments}

The authors thank the editor, two anonymous referees, Toru Kitagawa, Yasushi Kondo, Ryo Okui, Myung Hwan Seo, Katsumi Shimotsu, Hisatoshi Tanaka, Yuta Toyama, and seminar participants at Hitotsubashi University, Seoul National University, and Waseda University for their valuable comments.
This work was supported by JSPS KAKENHI Grant Numbers 15K17039, 19H01473, and 20K01597.

\small
\clearpage
\bibliography{ref}

\clearpage

\appendix
\begin{center}
	\Large{ \textbf{Supplementary Appendix (not for publication)} }
\end{center}
\section{Appendix: Proofs of Theorems} \label{sec:proofs}

This appendix collects the proofs of the theorems.
The lemmas used to prove Theorem \ref{thm:normality} and Corollary \ref{cor:normality} are relegated to Appendix \ref{sec:lemma}.
Below, we denote the conditional density of a random variable $T$ as $f_T(\cdot | \cdot)$.

\subsection{Proof of Theorem \ref{thm:ident-MTR1}}

We provide the proof for $m^{(1)}_j(x, p)$ only, as the proof for $m^{(0)}_j(x, p)$ is analogous.
First, observe that
\begin{align*}
    \psi_1(x, \mathbf p)
    & = \bE \left[ D Y^{(1)} \middle| X = x, \mathbf{P} = \mathbf{p} \right] \\
    & = \sum_{j = 1}^S \bE [Y^{(1)} |X = x, \mathbf{P} = \mathbf{p}, s =j, D = 1] \cdot \Pr(s = j, D = 1 | X = x, \mathbf{P} = \mathbf{p})\\
    & = \sum_{j = 1}^S \bE [Y^{(1)} |X = x, s = j, V_j \le p_j] \cdot \pi_j p_j,
\end{align*}
under Assumptions \ref{as:IV}(i) and \ref{as:membership}.
Further, it holds that
\begin{align*}
    \bE [Y^{(1)} |X = x, s = j, V_j \le p_j]  
    & = \int_0^{p_j} \bE [Y^{(1)} |X = x, s = j, V_j = v] \frac{f_{V_j}(v | X = x, s = j)}{\Pr(V_j \le p_j | X = x, s = j)} \mathrm{d}v\\ 
    & = \frac{1}{p_j} \int_0^{p_j} m_j^{(1)}(x,v) \mathrm{d}v,
\end{align*}
by Assumption \ref{as:membership}(i).
Therefore, $\psi_1(x, \mathbf p) = \sum_{j = 1}^S \pi_j \int_0^{p_j} m_j^{(1)}(x,v) \mathrm{d}v$, and the Leibniz integral rule leads to $\partial \psi_1(x, \mathbf p) / \partial p_j = \pi_j \cdot m_j^{(1)}(x,p_j)$.
This completes the proof. \qed

\bigskip

Here, we introduce additional notations for the subsequent discussions.
Let $\delta_j^{(1)} \coloneqq \mathbf{1}\{ s = j, V_j \le P_j \} = D \cdot \mathbf{1}\{s = j\}$ and $\delta_j^{(0)} \coloneqq\mathbf{1}\{ s = j, V_j > P_j \} =  (1 - D) \cdot \mathbf{1}\{s = j\}$.
Note that $D = \sum_{j = 1}^S \delta_j^{(1)}$ and $1 - D = \sum_{j = 1}^S \delta_j^{(0)}$.
By \eqref{eq:paramodel-Y}, we can write
\begin{align*}
	DY 
	= \sum_{j = 1}^{S} \delta_j^{(1)} X^\top \beta_j^{(1)} + D \epsilon^{(1)}
	&= \sum_{j = 1}^{S} \delta_j^{(1)} ( X^\top \beta_j^{(1)} + g_j^{(1)}(P_j) / P_j) + e^{(1)}\\
	&= \underbrace{\sum_{j = 1}^{S} \delta_j^{(1)} ( X^\top \beta_j^{(1)} + b_K(P_j)^\top \alpha_j^{(1)} / P_j)}_{\eqqcolon \: T_K^{(1)\top}\theta^{(1)}}  + r_K^{(1)} + e^{(1)}\\
	&= R_K^{(1)\top} \theta^{(1)} + r_K^{(1)} + \underbrace{B_K^{(1)}  + e^{(1)}}_{\eqqcolon \: \xi_K^{(1)}},
\end{align*}
where $r_K^{(1)} \coloneqq \sum_{j=1}^{S} \delta_j^{(1)} [g_j^{(1)}(P_j) - b_K(P_j)^\top \alpha_j^{(1)}] / P_j$,
\begin{align}\label{eq:definitionBe}
\begin{split}
    & B_K^{(1)} \coloneqq \left( T_K^{(1)} - R_K^{(1)} \right)^\top \theta^{(1)} = \sum_{j = 1}^S (\delta_j^{(1)} - \pi_j P_j) \left( X^\top \beta_j^{(1)} + b_K(P_j)^\top \alpha_j^{(1)} / P_j \right), \\
	& e^{(1)} \coloneqq D \epsilon^{(1)} - \sum_{j = 1}^S \delta_j^{(1)} g_j^{(1)}(P_j) / P_j = \sum_{j=1}^S \delta_j^{(1)} \left[ \epsilon^{(1)} - g_j^{(1)}(P_j) / P_j \right].
\end{split}
\end{align}
Let $\mathbf{Y}^{(1)} = (D_1 Y_1, \dots, D_n Y_n)^\top$, $\mathbf{R}_K^{(1)} = (R_{1, K}^{(1)}, \dots, R_{n, K}^{(1)})^\top$, $\mathbf{r}_K^{(1)} = (r_{1, K}^{(1)}, \dots, r_{n, K}^{(1)})^\top$, $\mathbf{B}_K^{(1)} = (B_{1, K}^{(1)}, \dots, B_{n, K}^{(1)})^\top$, $\mathbf{e}^{(1)} = (e_1^{(1)}, \dots, e_n^{(1)})^\top$, and $\bm{\xi}_K^{(1)} = (\xi_{1, K}^{(1)}, \dots, \xi_{n, K}^{(1)})^\top$.
The infeasible estimator for $\theta^{(1)}$ can be written as
\begin{align*}
	\tilde \theta_n^{(1)}
	&= \left( \mathbf{R}_K^{(1)\top} \mathbf{R}_K^{(1)} \right)^{-} \mathbf{R}_K^{(1)\top} \mathbf{Y}^{(1)}\\
	&= \theta^{(1)} + \left[ \Psi_{nK}^{(1)} \right]^{-1} \mathbf{R}_K^{(1)\top} \mathbf{r}_K^{(1)} / n + \left[ \Psi_{nK}^{(1)} \right]^{-1} \mathbf{R}_K^{(1)\top} \bm{\xi}_K^{(1)} / n.
\end{align*}
Similarly, noting that $DY = \hat R_K^{(1)\top}\theta^{(1)} + \hat \Delta_K^{(1)} + r_K^{(1)} + \xi_K^{(1)}$ with $\hat \Delta_K^{(1)} \coloneqq ( R_K^{(1)} - \hat R_K^{(1)} )^\top \theta^{(1)}$, the feasible estimator $\hat \theta_n^{(1)}$ can be written as
\begin{align*}
	\hat \theta_n^{(1)}
	= \theta^{(1)} + \left[ \hat \Psi_{nK}^{(1)} \right]^{-1} \hat{\mathbf{R}}_K^{(1)\top} \hat{\bm{\Delta}}_K^{(1)} / n + \left[ \hat \Psi_{nK}^{(1)} \right]^{-1} \hat{\mathbf{R}}_K^{(1)\top} \mathbf{r}_K^{(1)} / n + \left[ \hat \Psi_{nK}^{(1)} \right]^{-1} \hat{\mathbf{R}}_K^{(1)\top} \bm{\xi}_K^{(1)} / n,
\end{align*}
where $\hat{\mathbf{R}}_K^{(1)} = (\hat R_{1, K}^{(1)}, \dots, \hat R_{n, K}^{(1)})^\top$, and $\hat{\bm{\Delta}}_K^{(1)} = (\hat{\Delta}_{1, K}^{(1)}, \dots, \hat{\Delta}_{n, K}^{(1)})^\top$.
In the same manner, for the estimators of $\theta^{(0)}$, we have
\begin{align*}
    & \tilde \theta_n^{(0)} - \theta^{(0)}=  \left[ \Psi_{nK}^{(0)} \right]^{-1} \mathbf{R}_K^{(0)\top} \mathbf{r}_K^{(0)} / n + \left[ \Psi_{nK}^{(0)} \right]^{-1} \mathbf{R}_K^{(0)\top} \bm{\xi}_K^{(0)} / n, \\
	& \hat \theta_n^{(0)} - \theta^{(0)} = \left[ \hat \Psi_{nK}^{(0)} \right]^{-1} \hat{\mathbf{R}}_K^{(0)\top} \hat{\bm{\Delta}}_K^{(0)} / n + \left[ \hat \Psi_{nK}^{(0)} \right]^{-1} \hat{\mathbf{R}}_K^{(0)\top} \mathbf{r}_K^{(0)} / n + \left[ \hat \Psi_{nK}^{(0)} \right]^{-1} \hat{\mathbf{R}}_K^{(0)\top} \bm{\xi}_K^{(0)} / n,
\end{align*}
where the definitions of the newly introduced variables should be clear from the context.

\subsection{Proof of Theorem \ref{thm:normality}}

By \eqref{eq:sigmalbound} and \eqref{eq:MTRlinear}, we observe that
\begin{align*}
	& \sqrt{n}\left( \tilde{\text{MTE}}_j(x,p) - \text{MTE}_j(x,p) \right) \\
	& = \sqrt{n} \left( \tilde m^{(1)}_j(x, p) - m^{(1)}_j (x, p) \right) - \sqrt{n} \left( \tilde m^{(0)}_j(x, p) - m^{(0)}_j (x, p) \right) \\
	& = \nabla b_K(p)^\top  \mathbb{S}_{K,j} \left[ \Psi_K^{(1)} \right]^{-1} \mathbf{R}_K^{(1)\top} \bm{\xi}_K^{(1)} / \sqrt{n} + \nabla b_K(p)^\top  \mathbb{S}_{K,j} \left[ \Psi_K^{(0)} \right]^{-1} \mathbf{R}_K^{(0)\top} \bm{\xi}_K^{(0)} / \sqrt{n} + o_P(\| \nabla b_K(p) \|).
\end{align*}
Thus, as shown in Lemma \ref{lem:MTRnormal}(i), the term on the left-hand side is approximated by the sum of two asymptotically normal random variables with mean zero.
Note that unlike standard treatment effect estimators, the covariance of these two terms is not zero:
\begin{align*}
	\frac{1}{n} \sum_{l = 1}^n \sum_{m = 1}^n  \bE \left[ R_{l,K}^{(0)} R_{m,K}^{(1)\top} \xi_{l,K}^{(0)} \xi_{m,K}^{(1)} \bigg| \{X_i, \mathbf Z_i\}_{i = 1}^n \right] 
	& = \frac{1}{n} \sum_{l = 1}^n R_{l,K}^{(0)} R_{l,K}^{(1)\top} \bE \left[ \xi_{l,K}^{(0)} \xi_{l,K}^{(1)} \bigg| \{X_i, \mathbf Z_i\}_{i = 1}^n \right],
\end{align*}
by $\bE [\xi_K^{(d)} | X, \mathbf Z] = 0$ for both $d \in \{ 0, 1 \}$ and Assumption \ref{as:iid}, and
\begin{align*}
	\bE \left[ \xi_K^{(0)} \xi_K^{(1)} \bigr| X, \mathbf Z \right] 
	& = \bE \left[ (T_K^{(0)\top}\theta^{(0)} - R_K^{(0)\top}\theta^{(0)} + e^{(0)}) (T_K^{(1)\top}\theta^{(1)} - R_K^{(1)\top}\theta^{(1)} + e^{(1)}) \bigr| X, \mathbf Z \right] \\
	& = - R_K^{(0)\top}\theta^{(0)} \cdot R_K^{(1)\top}\theta^{(1)} \neq 0 \;\; \text{in general.} 
\end{align*}
The remainder of the proof is straightforward. 
\qed 

\subsection{Proof of Corollary \ref{cor:normality}}

By \eqref{eq:sigmalbound} and \eqref{eq:MTRlinear2}, we have
\begin{align*}
	& \sqrt{n}\left( \hat{\text{MTE}}_j(x,p) - \text{MTE}_j(x,p) \right) \\
	& = \nabla b_K(p)^\top  \mathbb{S}_{K,j} \left[ \Psi_K^{(1)} \right]^{-1} \mathbf{R}_K^{(1)\top} \bm{\xi}_K^{(1)} / \sqrt{n} + \nabla b_K(p)^\top  \mathbb{S}_{K,j} \left[ \Psi_K^{(0)} \right]^{-1} \mathbf{R}_K^{(0)\top} \bm{\xi}_K^{(0)} / \sqrt{n} + o_P(\| \nabla b_K(p) \|).
\end{align*}
The same argument as in the proof of Theorem \ref{thm:normality} leads to the desired result.
\qed

\subsection{Proof of Theorem \ref{thm:ident-MTR2}}

To save the space, we provide the proof for the case of $d = 1$ and $j = 1$ only.
Under Assumption \ref{as:W}, we have
\begin{align*}
	\rho_{11}(x, q, p_1, p_2) 
	= \Pr(U \le Q, V_1 \le P_1 | X = x, Q = q, P_1 = p_1, P_2 = p_2)
	= \int_0^{p_1} \int_0^q f_{U V_1}(u, v_1 | X = x) \mathrm{d}u \mathrm{d}v_1.
\end{align*}
On the other hand, by the law of iterated expectations, it holds that
\begin{align*}
	\psi_1(x, q, p_1, p_2) 
	& = \bE [Y^{(1)} | D = 1, s = 1, X = x, Q = q, P_1 = p_1, P_2 = p_2] \rho_{11}(x, q, p_1, p_2) \\
	& \qquad + \bE [Y^{(1)} | D = 1, s = 2, X = x, Q = q, P_1 = p_1, P_2 = p_2] \rho_{12}(x, q, p_1, p_2) \\
	& = \bE [ \mu^{(1)} ( x, 1, \epsilon^{(1)} )  | U \le q, V_1 \le p_1, X = x] \Pr(U \le q, V_1 \le p_1 | X = x) \\
	& \qquad + \bE [\mu^{(1)} ( x, 2, \epsilon^{(1)} )  | U > q, V_2 \le p_2, X = x] \Pr(U > q, V_2 \le p_2 | X = x) \\
	& = \int_0^{p_1} \int_0^q m_1^{(1)}(x, u, v_1) f_{U V_1}(u, v_1 | X = x) \mathrm{d}u \mathrm{d}v_1 + \int_0^{p_2} \int_q^1 m_2^{(1)}(x, u, v_2) f_{U V_2}(u, v_2 | X = x) \mathrm{d}u \mathrm{d}v_2.
\end{align*}
Thus, the Leibniz integral rule completes the proof.

\subsection{Proof of Theorem \ref{thm:ident-g-MTE}}

We only prove for the case of $s = 1$.
By the law of iterated expectations, we observe that 
\begin{align*}
    \text{MTE}_1(x, p) 
    &= \bE \left[ \bE [Y^{(1)} - Y^{(0)} | X = x, V_1 = p, Q, s = 1] \middle| X = x, V_1 = p, s = 1 \right]\\
    &= \int_0^1 \bE [Y^{(1)} - Y^{(0)} | X = x, V_1 = p, Q = q, s = 1] f_Q(q | X = x, V_1 = p,  s = 1) \mathrm{d}q.
\end{align*}
Further, by Assumption \ref{as:W}(i),
\begin{align*}
 	\bE [Y^{(1)} - Y^{(0)} | X = x, V_1 = p, Q = q, s = 1]
    & = \bE [\mu^{(1)} ( x, 1, \epsilon^{(1)} ) - \mu^{(0)} ( x, 1, \epsilon^{(0)} ) | X = x, V_1 = p, Q = q, U \le q]\\
    & = \frac{\int_0^1 \mathbf{1}\{u \le q\} \text{MTE}_1(x, u, p) f_U(u|X = x, V_1 = p) \mathrm{d}u}{\Pr(U \le q|X = x, V_1 = p)} \\
    & = \frac{\int_0^1 \mathbf{1}\{u \le q\} \text{MTE}_1(x, u, p) f_{UV_1}(u, p |X = x) \mathrm{d}u}{\Pr(U \le q|X = x, V_1 = p)} ,
\end{align*}
where the last equality follows from the fact that $V_1 \sim \text{Uniform}[0, 1]$ conditional on $X = x$.
On the other hand, Bayes' theorem implies that
\begin{align*}
    f_Q(q | X = x, V_1 = p,  s = 1)
    & = \frac{\Pr(U \le q | X = x, V_1 = p) f_Q(q | X = x)}{\Pr(s = 1|X = x, V_1 = p)},
\end{align*}
under Assumption \ref{as:W}(i).
By the law of iterated expectations and Assumption \ref{as:W}(i), we have
\begin{align*}
    \Pr(s = 1 | X = x, V_1 = p ) 
    &= \bE [ \Pr(s = 1 | X = x,  V_1 = p, Q) | X = x, V_1 = p ]\\
    &= \int_0^1 \Pr(U \le q | X = x, V_1 = p, Q = q) f_Q(q|X = x, V_1 = p) \mathrm{d}q\\
    &= \int_0^1 \int_0^1 \mathbf{1}\{u \le q\} f_U(u|X = x, V_1 = p) f_Q(q|X = x) \mathrm{d}q \mathrm{d}u \\
    & = \int_0^1 \Pr(u \le Q | X = x)  f_{UV_1}(u, p| X = x) \mathrm{d}u.
\end{align*}
Combining these results yields
\begin{align*}
	\text{MTE}_1(x, p)
	& = \frac{\int_0^1 \int_0^1 \text{MTE}_1(x, u, p) \mathbf{1}\{u \le q\}f_{UV_1}(u,p|X = x) f_Q(q|X = x)   \mathrm{d}q \mathrm{d}u}{\int_0^1 \Pr(u' \le Q | X = x)  f_{UV_1}(u', p| X = x) \mathrm{d}u'} \\
	& = \int_0^1 \text{MTE}_1(x, u, p) \frac{ \Pr(u \le Q | X = x)   f_{UV_1}(u,p|X = x) }{\int_0^1 \Pr(u' \le Q | X = x)  f_{UV_1}(u', p| X = x) \mathrm{d}u'}\mathrm{d}u .
\end{align*}
This completes the proof. \qed

\subsection{Proof of Theorem \ref{thm:late}}

We provide the proof of the first result only since the second result can be shown analogously.
First, observe that
\begin{align*}
    \bE [D|Z=\mathbf{1}_S]
    &= \sum_{j = 1}^S \bE [\mathbf{1}\{s = j\} D_j^{(1)} | Z = \mathbf{1}_S] \\
    &= \sum_{j = 1}^S \Big\{ \Pr(D_j^{(1)} = 1, D_j^{(0)} = 0, s = j) + \Pr(D_j^{(1)} = 1, D_j^{(0)} = 1, s = j) \Big\},
\end{align*}
where the first equality follows from Assumption \ref{as:late}(i), and the second follows from Assumption \ref{as:late}(ii).
Similarly, we can show that $\bE [D|Z=\mathbf{0}_S] = \sum_{j = 1}^S \Pr(D_j^{(1)} = 1, D_j^{(0)} = 1, s = j)$ under Assumption \ref{as:late}(iii).
Thus, we have
\[
    \bE [D|Z=\mathbf{1}_S] - \bE [D|Z=\mathbf{0}_S] = \sum_{j = 1}^S \Pr(\text{$Z_j$-compliers}, s = j).
\]

Next, observe that $\bE [Y|Z=\mathbf{1}_S] = \sum_{j=1}^S \bE [ \mathbf{1}\{s = j\} Y | Z = \mathbf{1}_S ]$, and that by the law of iterated expectations,
\begin{align*}
    \bE [\mathbf{1}\{s = j\} Y | Z = \mathbf{1}_S] 
    &= \bE [Y^{(1)}| D_j^{(1)} = 1, D_j^{(0)} = 1, s = j] \Pr(D_j^{(1)} =1, D_j^{(0)} = 1, s = j)\\
    & \qquad + \bE [Y^{(0)}|D_j^{(1)} = 0, D_j^{(0)} = 0, s = j] \Pr(D_j^{(1)} = 0, D_j^{(0)} = 0, s = j)\\
    & \qquad + \bE [Y^{(1)}|D_j^{(1)} = 1, D_j^{(0)} = 0, s = j] \Pr(D_j^{(1)} = 1, D_j^{(0)} = 0, s = j),
\end{align*}
by Assumption \ref{as:late}.
In the same manner, it holds that
\begin{align*}
    \bE [\mathbf{1}\{s = j\} Y | Z = \mathbf{0}_S] 
    &= \bE [Y^{(1)}| D_j^{(1)} = 1, D_j^{(0)} = 1, s = j] \Pr(D_j^{(1)} =1, D_j^{(0)} = 1, s = j)\\
    & \qquad + \bE [Y^{(0)}|D_j^{(1)} = 0, D_j^{(0)} = 0, s = j] \Pr(D_j^{(1)} = 0, D_j^{(0)} = 0, s = j)\\
    & \qquad + \bE [Y^{(0)}|D_j^{(1)} = 1, D_j^{(0)} = 0, s = j] \Pr(D_j^{(1)} = 1, D_j^{(0)} = 0, s = j).
\end{align*}
Thus, 
\[
    \bE [Y|Z=\mathbf{1}_S] - \bE [Y|Z=\mathbf{0}] = \sum_{j=1}^S \bE [Y^{(1)} - Y^{(0)}|\text{$Z_j$-compliers}, s = j] \Pr(\text{$Z_j$-compliers}, s = j).
\]
This completes the proof. \qed

\section{Appendix: Lemmas}\label{sec:lemma}
This appendix collects the lemmas that are used in the proof of Theorem \ref{thm:normality} and Corollary \ref{cor:normality}.
In the following, we present the results for $D = 1$ only, and those for $D = 0$ are similar and thus omitted to save space.
Below, we often use the notation $\mathbb{S}$ to denote either of $\mathbb{S}_{X, j}$ and $\mathbb{S}_{K, j}$, and we suppress the superscript $^{(1)}$ if there is no confusion.
For a matrix $A$, we denote $||A||_2 = \sqrt{\lambda_\text{max}(A^\top A)}$ as its spectral norm.

\begin{lemma}\label{lem:matLLN}
	Suppose that Assumptions \ref{as:iid} -- \ref{as:error} hold.
	\begin{enumerate}[(i)]
		\item $\left\| \Psi_{nK}^{(1)} - \Psi_K^{(1)} \right\|_2 = O_P (\zeta_0(K)\sqrt{(\log K)/ n})$.
		\item $\left\| \left[ \Psi_{nK}^{(1)} \right]^{-1} - \left[ \Psi_K^{(1)} \right]^{-1} \right\|_2 = O_P (\zeta_0(K)\sqrt{(\log K)/ n})$.
		\item $\left\| \mathbb{S} \left[ \Psi_{nK}^{(1)} \right]^{-1} \mathbf{R}_K^{(1)\top}\mathbf{B}_K^{(1)} / n \right\| = O_P\left(\sqrt{\text{tr}\left\{\mathbb{S}\mathbb{S}^\top \right\}/n}\right)$.
		\item $\left\| \mathbb{S} \left[ \Psi_{nK}^{(1)} \right]^{-1} \mathbf{R}_K^{(1)\top}\mathbf{r}_K^{(1)} / n \right\| = O_P(K^{-\mu_0})$.
		\item $\left\| \mathbb{S} \left[ \Psi_{nK}^{(1)} \right]^{-1} \mathbf{R}_K^{(1)\top}\mathbf{e}^{(1)} / n \right\| = O_P\left(\sqrt{\text{tr}\left\{\mathbb{S}\mathbb{S}^\top \right\}/n}\right)$.
	\end{enumerate}
\end{lemma}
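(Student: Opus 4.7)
The plan is to prove the five parts by combining a matrix concentration inequality, the conditional mean-zero structure of $B_K^{(1)}$ and $e^{(1)}$, and deterministic approximation bounds. Parts (i) and (ii) are standard. For (i), I would write $\Psi_{nK}^{(1)} - \Psi_K^{(1)}$ as an average of iid mean-zero matrices $R_{i,K}^{(1)}R_{i,K}^{(1)\top} - \Psi_K^{(1)}$, bound each summand in spectral norm by $\zeta_0(K)^2$ using the compactness of $\mathrm{supp}[X, \mathbf Z]$ (Assumptions \ref{as:parametricML}(i) and \ref{as:outcome}(i)) together with the definition of $\zeta_0(K)$, and apply a matrix Bernstein inequality. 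Part (ii) follows from the identity $[\Psi_{nK}^{(1)}]^{-1} - [\Psi_K^{(1)}]^{-1} = [\Psi_{nK}^{(1)}]^{-1}(\Psi_K^{(1)} - \Psi_{nK}^{(1)})[\Psi_K^{(1)}]^{-1}$, together with Assumption \ref{as:eigen}(i) bounding $\|[\Psi_K^{(1)}]^{-1}\|_2$ and part (i) ensuring $\|[\Psi_{nK}^{(1)}]^{-1}\|_2 = O_P(1)$.

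Parts (iii) and (v) share a common template whose crux is the identity $\bE[B_K^{(1)} \mid X, \mathbf Z] = 0 = \bE[e^{(1)} \mid X, \mathbf Z]$. For $B_K^{(1)}$, Assumptions \ref{as:IV}(i) and \ref{as:membership} yield $\bE[\delta_j^{(1)} \mid X, \mathbf Z] = \pi_j P_j$, so the difference $\delta_j^{(1)} - \pi_j P_j$ has conditional mean zero. For $e^{(1)}$, using Assumption \ref{as:outcome}(ii) and the definition $g_j^{(1)}(P_j) = \int_0^{P_j}\bE[\epsilon^{(1)}\mid s=j, V_j=v]\,\mathrm dv$, one obtains $\bE[\delta_j^{(1)}\epsilon^{(1)} \mid X, \mathbf Z] = \pi_j g_j^{(1)}(P_j) = \bE[\delta_j^{(1)} g_j^{(1)}(P_j)/P_j \mid X, \mathbf Z]$, so the per-group contributions cancel. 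With conditional mean zero in hand, I decompose the target as $\mathbb S[\Psi_K^{(1)}]^{-1}\mathbf R_K^{(1)\top}\bullet/n + \mathbb S([\Psi_{nK}^{(1)}]^{-1} - [\Psi_K^{(1)}]^{-1})\mathbf R_K^{(1)\top}\bullet/n$. For the leading term, iid together with conditional mean zero eliminates cross moments, yielding $\bE\|\mathbb S[\Psi_K^{(1)}]^{-1}\mathbf R_K^{(1)\top}\bullet/n\|^2 \le Cn^{-1}\mathrm{tr}(\mathbb S [\Psi_K^{(1)}]^{-1}\mathbb S^\top) \le Cn^{-1}\lambda_{\max}([\Psi_K^{(1)}]^{-1})\mathrm{tr}(\mathbb S \mathbb S^\top)$, where $C$ absorbs a uniform bound on $|B_K^{(1)}|$ for (iii) and uses Assumption \ref{as:error} to bound $\bE[(e^{(1)})^2\mid X,\mathbf Z]$ for (v). The remainder term, bounded crudely by $\|[\Psi_{nK}^{(1)}]^{-1} - [\Psi_K^{(1)}]^{-1}\|_2 \cdot \|\mathbf R_K^{(1)\top}\bullet/n\|$, is negligible under Assumption \ref{as:series}(ii).

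Part (iv) is the most elementary: Assumption \ref{as:series}(i) gives $\max_i |r_{i,K}^{(1)}| = O(K^{-\mu_0})$, so $\|\mathbf r_K^{(1)}\|^2 = O(nK^{-2\mu_0})$. Combining the quadratic-form bound $\|\mathbf R_K^{(1)\top}\mathbf r_K^{(1)}/n\|^2 \le \lambda_{\max}(\Psi_{nK}^{(1)})\|\mathbf r_K^{(1)}\|^2/n$ with $\|\mathbb S\|_2 \le 1$ and $\|[\Psi_{nK}^{(1)}]^{-1}\|_2 = O_P(1)$ gives the claimed $O_P(K^{-\mu_0})$ rate. The main obstacle throughout will be the $1/P_j$ factors hidden in $r_K^{(1)}$ and $B_K^{(1)}$: transferring the uniform approximation rate on $g_j^{(1)}$ to a uniform rate on $[g_j^{(1)}(p) - b_K(p)^\top \alpha_j^{(1)}]/p$ requires either $P_j$ bounded away from zero on the relevant support or a basis that respects $g_j^{(1)}(0) = 0$. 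I would handle this by restricting the sieve coefficients so that $b_K(0)^\top \alpha_j^{(1)} = 0$ (automatic for B-splines anchored at the origin), which lets the factor $p$ in the denominator be absorbed into a rescaled smoothness bound, preserving the $K^{-\mu_0}$ rate.
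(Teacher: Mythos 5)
Your proposal is correct and follows essentially the same route as the paper's proof: conditional mean-zero of $B_K^{(1)}$ and $e^{(1)}$ given $(X,\mathbf Z)$, a conditional second-moment/trace bound plus Markov's inequality for (iii) and (v), the sup-norm approximation bound for (iv), and a matrix concentration argument with the resolvent identity for (i)--(ii) (which the paper simply imports from Lemma A.1 of \citet{hoshino2021treatment}). Two small remarks. First, the paper computes the conditional variance keeping $\left[\Psi_{nK}^{(1)}\right]^{-1}$ inside (it is measurable given $\{X_i,\mathbf Z_i\}_{i=1}^n$), obtaining $\text{tr}\{\mathbb S \Psi_{nK}^{-1}\mathbb S^\top\}=O_P(\text{tr}\{\mathbb S\mathbb S^\top\})$ directly via part (ii) and Assumption \ref{as:eigen}(i), which avoids the extra remainder term your decomposition into $\left[\Psi_K^{(1)}\right]^{-1}$ plus a difference creates (though that remainder is indeed negligible under Assumption \ref{as:series}(ii), as you claim). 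Second, your concern about the $1/P_j$ factors is legitimate but already resolved by the stated assumptions: compactness of $\text{supp}[Z_j]$ (Assumption \ref{as:parametricML}(i)) together with $F_j$ being the CDF of a variable supported on all of $\mathbb R$ (Assumption \ref{as:parametricML}(ii)) gives $\min_{1\le i\le n} P_{ji}>0$ uniformly, which is exactly how the paper proceeds; no anchoring of the basis at the origin is needed.
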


\begin{proof}
	(i), (ii) The proofs are the same as those of Lemma A.1(i) and (iii) in \citet{hoshino2021treatment}.
	\bigskip
	
	
	(iii) Since $\bE [\delta_j^{(1)} | X, \mathbf Z] = \pi_j P_j$, we have $\bE [T_K^{(1)} |X, \mathbf Z] = R_K^{(1)}$ and $\bE [B_K^{(1)} | X, \mathbf Z] = 0$.
	Then, under Assumption \ref{as:iid}, $\bE [ B_{l,K}^{(1)} B_{k,K}^{(1)} | \{X_i, \mathbf Z_i \}_{i = 1}^n ]  = 0$ for any $l, k \in \{ 1,\dots, n \}$ such that $l \neq k$.
	Additionally, observe that $\max_{1 \le i \le n} | B_{i,K}^{(1)}| = O(1)$ holds from Assumptions \ref{as:parametricML}(i),(ii), and \ref{as:outcome}(i) and
	\begin{align}\label{eq:uniformbounded}
	\sup_{p \in [0,1]} \left| b_K(p)^\top \alpha_j^{(1)} \right|
	& \le \sup_{p \in [0,1]} \left| b_K(p)^\top \alpha_j^{(1)} - g^{(1)}_j(p)\right| + \sup_{p \in [0,1]} \left| g^{(1)}_j(p)\right| = O(K^{-\mu_0}) + O(1),
	\end{align}
	by Assumption \ref{as:series}(i).
	Thus, noting that $\bE [\mathbf{B}_K \mathbf{B}_K^\top | \{X_i, \mathbf Z_i \}_{i = 1}^n ]$ is a diagonal matrix whose diagonal elements are $O(1)$, we have 
	\begin{align*}
	    \bE \left[ \| \mathbb{S} \Psi_{nK}^{-1} \mathbf{R}_K^\top \mathbf{B}_K / n \|^2 | \{X_i, \mathbf Z_i \}_{i = 1}^n \right]
	    & = \text{tr} \{ \mathbb{S} \Psi_{nK}^{-1} \mathbf{R}_K^\top \bE [ \mathbf{B}_K \mathbf{B}_K^\top | \{X_i, \mathbf Z_i \}_{i = 1}^n ] \mathbf{R}_K \Psi_{nK}^{-1} \mathbb{S}^\top \} / n^2 \\
	    & \le O(1/n) \cdot \text{tr} \{\mathbb{S} \Psi_{nK}^{-1} \Psi_{nK} \Psi_{nK}^{-1} \mathbb{S}^\top \} = O_P( \text{tr} \{ \mathbb{S} \mathbb{S}^\top \} / n),
	\end{align*}
	where the last equality follows from Assumption \ref{as:eigen}(i) and result (ii).
	Then, the result follows from Markov's inequality.
	
	\bigskip
	
	(vi), (v) For (iv), since $\min_{1 \le i \le n} P_{ji} > 0$ for any $j$ under Assumptions \ref{as:parametricML}(i) and (ii), we have
	\begin{align*}
	    \max_{1 \le i \le n} |r_{i, K}^{(1)}|
	    \le O(1) \cdot \sum_{j=1}^S  \max_{1 \le i \le n} \left| g_j^{(1)}(P_{ji}) - b_K(P_{ji})^\top \alpha_j^{(1)} \right| 
	    = O(K^{-\mu_0})
	\end{align*}
    by Assumption \ref{as:series}(i).
	 For (v), note that 
	\begin{align}\label{eq:errormean1}
	\begin{split}
    	\bE [e^{(1)} | X, \mathbf Z]
    	& = \sum_{j = 1}^S \left[ \bE [\delta_j \epsilon^{(1)} | X, \mathbf Z] -  \bE [ \delta_j | X, \mathbf Z]g_j^{(1)}(P_j)/P_j\right] \\
    	& =  \sum_{j = 1}^S \left[ \bE [\epsilon^{(1)} | X, \mathbf Z,  s = j, V_j \le P_j]\cdot \pi_j P_j -  \pi_j g_j^{(1)}(P_j) \right] \\
    	& =  \sum_{j = 1}^S \left[ \pi_j \int_0^{P_j} \bE [\epsilon^{(1)} | s = j, V_j = v] \mathrm{d}v -  \pi_j g_j^{(1)}(P_j) \right] = 0,
    \end{split}
	\end{align}
    by Assumptions \ref{as:IV}(i), \ref{as:membership}, and \ref{as:outcome}(i), so that $\bE [\mathbf{e}^{(1)} \mathbf{e}^{(1)^\top} | \{ X_i, \mathbf{Z}_i \}_{i=1}^n]$ is a diagonal matrix whose diagonal elements are $O(1)$ by Assumptions \ref{as:iid} and \ref{as:error}.
    Then, the rest of the proofs are similar to the proof of Lemma A.2 in \citet{hoshino2021treatment}.
\end{proof}

\bigskip

To prove the next lemma, define
\begin{align*}
\begin{array}{lll}
    \underset{S\dim(X) \times 1}{\mathbf{X}_i} \coloneqq (X_i^\top, \dots, X_i^\top)^\top,
    &
    \multicolumn{2}{l}{\underset{SK \times 1}{\mathbf{b}_{i, K}} \coloneqq (b_K(P_{1i})^\top, \dots, b_K(P_{Si})^\top)^\top,}\\
    \underset{n \times S\dim(X)}{\mathbf{X}} = (\mathbf{X}_1, \dots, \mathbf{X}_n)^\top,
    &
    \underset{n \times SK}{\mathbf{b}_K} = (\mathbf{b}_{1,K}, \dots, \mathbf{b}_{n,K})^\top,
    &
    \underset{n \times d_{SXK}}{\mathbf{W}_K} \coloneqq (\mathbf{X}, \mathbf{b}_K),\\
    \multicolumn{2}{l}{\underset{d_{SXK} \times 1}{\varUpsilon_{i, K}^{(1)}} \coloneqq (\pi_1 P_{1i} \mathbf{1}_{\dim(X)}^\top, \dots, \pi_S P_{Si} \mathbf{1}_{\dim(X)}^\top, \pi_1 \mathbf{1}_K^\top, \dots, \pi_S \mathbf{1}_K^\top)^\top,}
    &
    \underset{n \times d_{SXK}}{\bm{\varUpsilon}_K^{(1)}} \coloneqq (\varUpsilon_{1, K}^{(1)}, \dots, \varUpsilon_{n, K}^{(1)})^\top,
\end{array}
\end{align*}
and define analogously $\hat{\mathbf{b}}_K$, $\hat{\mathbf{W}}_K \coloneqq (\mathbf{X}, \hat{\mathbf{b}}_K)$, and $\hat{\bm{\varUpsilon}}_K^{(1)}$.
Then, we can write $\mathbf{R}_K^{(1)} = \bm{\varUpsilon}_K^{(1)} \circ \mathbf{W}_K$ and $\hat{\mathbf{R}}_K^{(1)} = \hat{\bm{\varUpsilon}}_K^{(1)} \circ \hat{\mathbf{W}}_K$, where $\circ$ denotes the Hadamard product.

\bigskip

\begin{lemma}\label{lem:estmatLLN}
	Suppose that Assumptions \ref{as:iid} -- \ref{as:error} hold.
	\begin{enumerate}[(i)]
		\item $\left\| \hat{\Psi}_{nK}^{(1)} - \Psi_{nK}^{(1)} \right\|_2 = O_P(\zeta_1(K) / \sqrt{n})$.
		\item $\left\| \left[ \hat{\Psi}_{nK}^{(1)} \right]^{-1} - \left[ \Psi_{nK}^{(1)} \right]^{-1} \right\|_2 = O_P(\zeta_1(K) / \sqrt{n})$.
		\item $\left\| \mathbb{S} \left[ \hat{\Psi}_{nK}^{(1)} \right]^{-1} \hat{\mathbf{R}}_K^{(1)\top} \hat{\bm{\Delta}}_K^{(1)} / n \right\| = O_P(n^{-1/2})$.
		\item $\left\| \mathbb{S} \left[ \hat{\Psi}_{nK}^{(1)} \right]^{-1} \hat{\mathbf{R}}_K^{(1)\top} \mathbf{B}_K^{(1)} / n \right\| = O_P \left( \sqrt{\text{tr}\{\mathbb{S} \mathbb{S}^\top\} / n} \right) + O_P(\zeta_1(K) \sqrt{K} / n) + O_P(\zeta_2(K) / n)$.
		\item $\left\| \mathbb{S} \left[ \hat{\Psi}_{nK}^{(1)} \right]^{-1} \hat{\mathbf{R}}_K^{(1)\top} \mathbf{r}_K^{(1)} / n \right\| = O_P \left( K^{-\mu_0} \right)$.
		\item $\left\| \mathbb{S} \left[ \hat{\Psi}_{nK}^{(1)} \right]^{-1} \hat{\mathbf{R}}_K^{(1)\top} \mathbf{e}^{(1)} / n \right\| = O_P \left( \sqrt{\text{tr}\{\mathbb{S} \mathbb{S}^\top\} / n} \right)$.
	\end{enumerate}
\end{lemma}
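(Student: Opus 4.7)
The unifying strategy is to exploit the $\sqrt n$-rate of the first-stage estimator given by Assumption \ref{as:parametricML}(iii) together with a Taylor expansion of $b_K$. The key preliminary step is the uniform bound $\max_{1 \le i \le n} \|\hat R_{i,K}^{(1)} - R_{i,K}^{(1)}\| = O_P(\zeta_1(K)/\sqrt n)$, which I would obtain by expanding $b_K(\hat P_{ji}) - b_K(P_{ji}) = \nabla b_K(\tilde P_{ji})(\hat P_{ji} - P_{ji})$, using $\sup_p \|\nabla b_K(p)\| \le \zeta_1(K)$, boundedness of $X$ (Assumption \ref{as:outcome}(i)), the rate $\max_i|\hat P_{ji} - P_{ji}| = O_P(n^{-1/2})$ noted after Assumption \ref{as:parametricML}, and the $\sqrt n$-rate on $\hat \pi_n$.

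For (i), I would decompose $\hat\Psi_{nK}^{(1)} - \Psi_{nK}^{(1)} = n^{-1}\sum_i \bigl[(\hat R_i - R_i)(\hat R_i - R_i)^\top + (\hat R_i - R_i)R_i^\top + R_i(\hat R_i - R_i)^\top\bigr]$ and bound each in spectral norm via the key uniform bound together with $\|\Psi_{nK}^{(1)}\|_2 = O_P(1)$ (from Lemma \ref{lem:matLLN}(i) and Assumption \ref{as:eigen}(i)); the cross terms dominate and give the stated rate. Part (ii) follows from the identity $[\hat\Psi_{nK}^{(1)}]^{-1} - [\Psi_{nK}^{(1)}]^{-1} = [\hat\Psi_{nK}^{(1)}]^{-1}(\Psi_{nK}^{(1)} - \hat\Psi_{nK}^{(1)})[\Psi_{nK}^{(1)}]^{-1}$ together with the eigenvalue bound in Assumption \ref{as:eigen}(i) and Lemma \ref{lem:matLLN}(ii). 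For (iii), I would first show $\max_i |\hat\Delta_{i,K}^{(1)}| = O_P(n^{-1/2})$ by Taylor expanding the basis inside the definition of $\hat\Delta$, using $\sup_p|\nabla b_K(p)^\top \alpha_j^{(1)}| = O(1)$ (from Assumption \ref{as:series}(i) and boundedness of $\nabla g_j^{(1)}$), and then bound $\|\mathbb{S}[\hat\Psi_{nK}^{(1)}]^{-1}\hat{\mathbf R}_K^{(1)\top}\hat{\bm\Delta}_K^{(1)}/n\| \le \|\mathbb{S}\|_2 \cdot \|[\hat\Psi_{nK}^{(1)}]^{-1}\hat{\mathbf R}_K^{(1)\top}/\sqrt n\|_2 \cdot \|\hat{\bm\Delta}_K^{(1)}/\sqrt n\|$, where the middle factor equals $\sqrt{\lambda_\text{max}([\hat\Psi_{nK}^{(1)}]^{-1})} = O_P(1)$. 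Parts (v) and (vi) are handled by the split $\hat{\mathbf R}^\top V = \mathbf R^\top V + (\hat{\mathbf R} - \mathbf R)^\top V$ with $V \in \{\mathbf r_K^{(1)}, \mathbf e^{(1)}\}$, applying Lemma \ref{lem:matLLN}(iv),(v) to the first piece and the uniform bound above to the cross term; here $\mathbf r_K^{(1)}$ is uniformly $O(K^{-\mu_0})$ and $\mathbf e^{(1)}$ has zero conditional mean with bounded fourth moment (Assumption \ref{as:error}), so both cross terms are absorbed.

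The main obstacle is part (iv), where the extra rates $\zeta_1(K)\sqrt K/n$ and $\zeta_2(K)/n$ arise from a delicate second-order Taylor expansion of $b_K$ applied to the cross term $(\hat{\mathbf R}_K^{(1)} - \mathbf R_K^{(1)})^\top \mathbf B_K^{(1)} / n$. After one order of expansion the leading correction becomes a sum over $i$ involving $\nabla b_K(P_{ji})(\hat P_{ji} - P_{ji}) B_{i,K}^{(1)}$, whose conditional mean given $\{X_i,\mathbf Z_i\}_{i=1}^n$ vanishes because $\bE[B_{i,K}^{(1)}\mid X_i,\mathbf Z_i]=0$ as in the proof of Lemma \ref{lem:matLLN}(iii); a variance calculation combined with Assumption \ref{as:parametricML}(iii) and the structure of the selection matrix then yields the $O_P(\zeta_1(K)\sqrt K/n)$ contribution, with the $\sqrt K$ coming from the $K$-dimensional selection $\mathbb{S}_{K,j}$. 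The second-order Taylor remainder is uniformly dominated by $\zeta_2(K) \cdot O_P(n^{-1})$, producing the $O_P(\zeta_2(K)/n)$ piece. Carefully separating these two rates and verifying that the selection matrix only contributes $\sqrt{\text{tr}\{\mathbb{S}\mathbb{S}^\top\}}$ where expected — not an extra factor of $K$ — is the crux of the argument.
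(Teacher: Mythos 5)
Your overall architecture coincides with the paper's: a mean-value expansion of $b_K$ at the estimated propensity scores combined with the $\sqrt{n}$-rates from Assumption \ref{as:parametricML}(iii) to control $\hat{\mathbf{R}}_K^{(1)} - \mathbf{R}_K^{(1)}$, the resolvent identity for (ii), the bound $\| \hat{\bm{\Delta}}_K^{(1)} \| = O_P(1)$ together with the projection property of $\hat{\mathbf{R}}_K^{(1)} [\hat{\Psi}_{nK}^{(1)}]^{-1} \hat{\mathbf{R}}_K^{(1)\top} / n$ for (iii), and a decomposition into the infeasible terms of Lemma \ref{lem:matLLN} plus correction terms for (iv)--(vi). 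Parts (i)--(iii), (v), and (vi) are fine as sketched.

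The gap is in part (iv). You claim that the first-order correction $\sum_i \nabla b_K(P_{ji}) (\hat P_{ji} - P_{ji}) B_{i,K}^{(1)}$ has zero conditional mean given $\{X_i, \mathbf{Z}_i\}_{i=1}^n$ because $\bE[B_{i,K}^{(1)} \mid X_i, \mathbf{Z}_i] = 0$. That step fails as stated: $\hat P_{ji} - P_{ji}$ is a function of $\hat \gamma_{n,j}$, which is computed from the whole sample $\{D_i, \mathbf{Z}_i\}_{i=1}^n$, while $B_{i,K}^{(1)}$ is itself a function of $\delta_{ji}^{(1)} = D_i \mathbf{1}\{s_i = j\}$; conditional on $\{X_i, \mathbf{Z}_i\}_{i=1}^n$ the two factors are correlated, so the conditional expectation of the product does not factor and need not vanish. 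The repair --- which is what the paper does --- is to first Taylor-expand $\hat P_{ji} = F_j(Z_{ji}^\top \hat \gamma_{n,j})$ around $\gamma_j$, pull the estimation errors $(\hat \pi_{n,j} - \pi_j)$ and $(\hat \gamma_{n,j} - \gamma_j)$ outside the sum over $i$, bound them by $O_P(n^{-1/2})$, and only then apply a zero-mean concentration bound (the paper uses a matrix Bernstein inequality) to the remaining sums $\sum_i Z_{jhi} \pi_j f_j(Z_{ji}^\top \gamma_j) \nabla b_K(P_{ji}) B_{i,K}^{(1)}$, whose summands are genuinely mean zero because their coefficients are $(X_i, \mathbf{Z}_i)$-measurable. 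A related inaccuracy: the $O_P(\zeta_1(K)\sqrt{K}/n)$ term does not originate in the cross term $(\hat{\mathbf{R}}_K^{(1)} - \mathbf{R}_K^{(1)})^\top \mathbf{B}_K^{(1)} / n$ (that term contributes $O_P(\zeta_1(K)\sqrt{\log K}/n) + O_P(\zeta_2(K)/n)$), nor in the selection matrix; it comes from the inverse-replacement term $\mathbb{S} ( [\hat{\Psi}_{nK}^{(1)}]^{-1} - [\Psi_{nK}^{(1)}]^{-1} ) \mathbf{R}_K^{(1)\top} \mathbf{B}_K^{(1)} / n$, bounded by the part-(ii) rate times $\| \mathbf{R}_K^{(1)\top} \mathbf{B}_K^{(1)} / n \| = O_P(\sqrt{K/n})$. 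Your sketch omits this term from the decomposition of (iv) altogether, yet it is needed to reduce the leading piece to Lemma \ref{lem:matLLN}(iii).
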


\begin{proof}
	(i) By the triangle inequality, we have
	\begin{align}\label{eq:differencePsi}
	\begin{split}
		\left\| \hat{\Psi}_{nK} - \Psi_{nK} \right\|_2
		& \le \left\| \hat{\Psi}_{nK} - \Psi_{nK} \right\| \\
		& \le \left\| \left( \hat{\mathbf{R}}_K - \mathbf{R}_K \right)^\top \left( \hat{\mathbf{R}}_K - \mathbf{R}_K \right) / n \right\| + 2\left\| \mathbf{R}_K^\top \left( \hat{\mathbf{R}}_K - \mathbf{R}_K \right) / n \right\|.
	\end{split}
	\end{align}
	For the first term of \eqref{eq:differencePsi}, the mean value theorem and Assumption \ref{as:parametricML} lead to $b_K(\hat P_{ji}) - b_K(P_{ji}) = \nabla b_K(\bar P_{ji}) \cdot O_P(n^{-1/2})$ where $\bar P_{ji}$ is between $\hat P_{ji}$ and $P_{ji}$.
	Let $\nabla \bar{\mathbf{b}}_{i, K} = (\nabla b_K(\bar P_{1i})^\top, \dots, \nabla b_K(\bar P_{Si})^\top)^\top$ and $\nabla \bar{\mathbf{b}}_K = (\nabla \bar{\mathbf{b}}_{1, K}, \dots, \nabla \bar{\mathbf{b}}_{n, K})^\top$.
	By the triangle inequality and Assumptions \ref{as:parametricML}(iii) and \ref{as:outcome}(i), we have
	\begin{align} \label{eq:differenceR}
	\begin{split}
		\| \hat{\mathbf{R}}_K - \mathbf{R}_K \|
		& \le \left\| (\hat{\bm{\varUpsilon}}_K - \bm{\varUpsilon}_K) \circ \hat{\mathbf{W}}_K \right\| + \left\| \bm{\varUpsilon}_K \circ (\hat{\mathbf{W}}_K - \mathbf{W}_K) \right\|\\
		& \le O_P(n^{-1/2}) \cdot \left\{ \left\| \hat{\mathbf{b}}_K \right\| + \left\| \nabla \bar{\mathbf{b}}_K \right\| \right\}
		= O_P(\zeta_0(K)) + O_P(\zeta_1(K)) 
		= O_P(\zeta_1(K)).
	\end{split}
	\end{align}
	Thus, $\| ( \hat{\mathbf{R}}_K - \mathbf{R}_K )^\top ( \hat{\mathbf{R}}_K - \mathbf{R}_K ) / n \| \le \| \hat{\mathbf{R}}_K - \mathbf{R}_K \|^2 / n = O_P(\zeta_1^2(K) / n)$.
	For the second term of \eqref{eq:differencePsi}, we have $\| \mathbf{R}_K^\top ( \hat{\mathbf{R}}_K - \mathbf{R}_K ) / n \|^2 = \text{tr} \{ ( \hat{\mathbf{R}}_K - \mathbf{R}_K )^\top \mathbf{R}_K \mathbf{R}_K^\top ( \hat{\mathbf{R}}_K - \mathbf{R}_K ) \} / n^2 \le O_P(1/n) \cdot \| \hat{\mathbf{R}}_K - \mathbf{R}_K \|^2 = O_P(\zeta_1^2(K) / n)$ by Lemma \ref{lem:matLLN}(i) and \eqref{eq:differenceR}.
	Thus, the second term is of order $O_P(\zeta_1(K) / \sqrt{n})$, and we obtain the desired result.
		
	\bigskip
	
	(ii) The proof is the same as that of Lemma A.1(iii) in \citet{hoshino2021treatment}.
	
	\bigskip
	
	(iii) By Lemma \ref{lem:matLLN}(ii), result (ii), and Assumption \ref{as:eigen}(i), we have
	\begin{align*}
	    \left\| \mathbb{S} \hat{\Psi}_{nK}^{-1} \hat{\mathbf{R}}_K^\top \hat{\bm{\Delta}}_K / n \right\|^2
	    = \text{tr} \{ \hat{\bm{\Delta}}_K^\top \hat{\mathbf{R}}_K \hat{\Psi}_{nK}^{-1} \mathbb{S}^\top \mathbb{S} \hat{\Psi}_{nK}^{-1} \hat{\mathbf{R}}_K^\top \hat{\bm{\Delta}}_K \} / n^2
	    \le O_P(n^{-1}) \cdot \| \hat{\bm{\Delta}}_K \|^2.
	\end{align*}
	Note that $\hat{\bm{\Delta}}_K = [\hat{\bm{\varUpsilon}}_K \circ (\mathbf{W}_K - \hat{\mathbf{W}}_K)]\theta^{(1)} + [(\bm{\varUpsilon}_K - \hat{\bm{\varUpsilon}}_K) \circ \mathbf{W}_K]\theta^{(1)}$.
	By the mean value theorem, it is easy to see that 
	\begin{align*}
	    \left\| [\hat{\bm{\varUpsilon}}_K \circ (\mathbf{W}_K - \hat{\mathbf{W}}_K)]\theta^{(1)} \right\|^2 
	    = \sum_{i = 1}^n \left( \sum_{j = 1}^S \hat \pi_{n, j} (P_{ji} - \hat P_{ji}) \cdot \nabla b_K(\bar P_{ji})^\top \alpha_j^{(1)} \right)^2 = O_P(1)
	\end{align*}
	from
	\begin{align*}
		\sup_{p \in [0, 1]} \left| \nabla b_K(p)^\top \alpha_j^{(1)} \right|
		\le \sup_{p \in [0, 1]} \left| \nabla b_K(p)^\top \alpha_j^{(1)} - \nabla g_j^{(1)}(p) \right| + \sup_{p \in [0, 1]} \left| \nabla g_j^{(1)}(p) \right|
		= O(K^{-\mu_1}) + O(1).
	\end{align*}
	The same argument shows that $\| [(\bm{\varUpsilon}_K - \hat{\bm{\varUpsilon}}_K) \circ \mathbf{W}_K]\theta^{(1)} \| = O_P(1)$.
	Thus, $\| \hat{\mathbf{\Delta}}_K \| = O_P(1)$, and we have the desired result.
	
	\bigskip
	
	(iv) By Lemma \ref{lem:matLLN}(iii), we have
	\begin{equation*}\label{eq:lemmaBK}
	\begin{split}
		\mathbb{S} \hat{\Psi}_{nK}^{-1} \hat{\mathbf{R}}_K^\top \mathbf{B}_K / n
		&= \mathbb{S} \Psi_{nK}^{-1} \mathbf{R}_K^\top \mathbf{B}_K / n + \mathbb{S} (\hat{\Psi}_{nK}^{-1} - \Psi_{nK}^{-1}) \mathbf{R}_K^\top \mathbf{B}_K / n + \mathbb{S} \hat \Psi_{nK}^{-1} (\hat{\mathbf{R}}_K - \mathbf{R}_K)^\top \mathbf{B}_K / n\\
		&= \mathbb{S} (\hat{\Psi}_{nK}^{-1} - \Psi_{nK}^{-1}) \mathbf{R}_K^\top \mathbf{B}_K / n + \mathbb{S} \hat \Psi_{nK}^{-1} (\hat{\mathbf{R}}_K - \mathbf{R}_K)^\top \mathbf{B}_K / n + O_P\left(\sqrt{\text{tr}\{\mathbb{S} \mathbb{S}^\top \}/n}\right).
	\end{split}
	\end{equation*}
	
	For the first term, observe that 
	\begin{align} \label{eq:differenceRB1}
	    \begin{split}
            \| \mathbb{S} (\hat{\Psi}_{nK}^{-1} - \Psi_{nK}^{-1}) \mathbf{R}_K^\top \mathbf{B}_K / n \| 
	        & \le \| \hat{\Psi}_{nK}^{-1} - \Psi_{nK}^{-1} \|_2 \| \mathbf{R}_K^\top \mathbf{B}_K / n \|  = O_P(\zeta_1(K) \sqrt{K} / n),
	    \end{split}
	\end{align} 
	by result (ii) and Markov's inequality.
	
	For the second term, observe that $\| \mathbb{S} \hat \Psi_{nK}^{-1} (\hat{\mathbf{R}}_K - \mathbf{R}_K)^\top \mathbf{B}_K / n \| \le O_P(1/n) \cdot \| (\hat{\mathbf{R}}_K - \mathbf{R}_K)^\top \mathbf{B}_K \|$ and
	\begin{align*}
		\underset{d_{SXK}\times 1}{(\hat{\mathbf{R}}_K - \mathbf{R}_K)^\top \mathbf{B}_K} =
		\begin{pmatrix}
			\sum_{i=1}^{n} (\hat \pi_{n, 1} \hat P_{1i} - \pi_1 P_{1i}) X_i B_{i, K} \\
			\vdots \\
			\sum_{i=1}^{n} (\hat \pi_{n, S} \hat P_{Si} - \pi_S P_{Si}) X_i B_{i, K} \\
			\sum_{i=1}^{n} [\hat \pi_{n, 1} b_K(\hat P_{1i}) - \pi_1 b_K(P_{1i})] B_{i, K} \\
			\vdots\\
			\sum_{i=1}^{n} [\hat \pi_{n, S} b_K(\hat P_{Si}) - \pi_S b_K(P_{Si})] B_{i, K} \\
		\end{pmatrix}
		.
	\end{align*}
	For each element of the right-hand side, applying the Taylor expansion to $\hat P_{ji} = F_j(Z_{ji}^\top \hat \gamma_{n,j})$ around $\gamma_j$ yields
	\begin{align*}
		& \sum_{i = 1}^n (\hat \pi_{n, j} \hat P_{ji} - \pi_j P_{ji})X_i B_{i, K}\\
		& \qquad = \sum_{i = 1}^n (\hat \pi_{n, j} - \pi_j) \hat P_{ji}X_i B_{i, K} + \sum_{i = 1}^n  \pi_j ( \hat P_{ji} - P_{ji})X_i B_{i, K}\\
		& \qquad = (\hat \pi_{n, j} - \pi_j) \sum_{i = 1}^n P_{ji} X_i B_{i, K} + \sum_{h = 1}^{\dim(Z_j)} (\hat \gamma_{n, jh} - \gamma_{jh}) \sum_{i = 1}^n  Z_{jhi} \pi_j f_j(Z_{ji}^\top \gamma_j) X_i B_{i, K} + O_P(1),
	\end{align*}
	and similarly
	\begin{align*}
		&\sum_{i = 1}^n [\hat \pi_{n, j} b_K(\hat P_{ji}) - \pi_j b_K(P_{ji})] B_{i, K} \\
		& \qquad = \sum_{i = 1}^n (\hat \pi_{n, j} - \pi_j) b_K(\hat P_{ji}) B_{i, K} + \sum_{i = 1}^n  \pi_j ( b_K(\hat P_{ji}) - b_K(P_{ji})) B_{i, K}\\
		&\qquad = (\hat \pi_{n, j} - \pi_j) \sum_{i = 1}^n b_K(P_{ji}) B_{i, K} + \sum_{h = 1}^{\dim(Z_j)} (\hat \gamma_{n, jh} - \gamma_{jh}) \sum_{i=1}^{n} Z_{jhi} \pi_j f_j(Z_{ji}^\top \gamma_j) \nabla b_K(P_{ji}) B_{i, K} + O_P(\zeta_2(K)).
	\end{align*}
	For expositional simplicity, assume that $\dim(Z_j) = 1$ for all $j$.
	Let $\hat{\bm{\Pi}}_{nK}$ and $\hat{\bm{\Gamma}}_{nK}$ be appropriate $d_{SXK} \times d_{SXK}$ diagonal matrices with diagonal elements $(\hat \pi_{n, j} - \pi_j)$ and $(\hat \gamma_{n, j} - \gamma_j)$, respectively, so that we can write
	\begin{align}\label{eq:taylor1}
		(\hat{\mathbf{R}}_K - \mathbf{R}_K)^\top \mathbf{B}_K 
		 = \sqrt{n}\hat{\bm{\Pi}}_{nK} \sum_{i=1}^{n} M_{i,K} + \sqrt{n} \hat{\bm{\Gamma}}_{nK} \sum_{i = 1}^n N_{i, K}
		+ O_P( \zeta_2(K) ),
	\end{align}
	where
	\begin{align*}
		M_{i, K} \coloneqq n^{-1/2}
		\begin{pmatrix}
			P_{1i} X_i B_{i, K}\\
			\vdots \\
			P_{Si} X_i B_{i, K}\\
			b_K(P_{1i}) B_{i, K}\\
			\vdots \\
			b_K(P_{Si}) B_{i, K}
		\end{pmatrix},
		\quad \text{and} \;\;
		N_{i, K} \coloneqq n^{-1/2}
		\begin{pmatrix}
			Z_{1i} \pi_1 f_1(Z_{1i}\gamma_1) X_i B_{i,K}\\
			\vdots \\
			Z_{Si} \pi_S f_S(Z_{Si} \gamma_S) X_i B_{i,K}\\
			Z_{1i} \pi_1 f_1(Z_{1i} \gamma_1) \nabla b_K(P_{1i}) B_{i,K}\\
			\vdots \\
			Z_{Si} \pi_S f_S(Z_{Si} \gamma_S) \nabla b_K(P_{Si}) B_{i,K}
		\end{pmatrix}.
	\end{align*}
	Note that $\bE [N_{i, K}] = \mathbf{0}_{d_{SXK}}$ by $\bE [B_{i, K} | X_i, \mathbf{Z}_i] = 0$ and $\bar N_K \coloneqq \max_{1 \le i \le n} \| N_{i, K} \| = O(\zeta_1(K) / \sqrt{n})$ under Assumptions \ref{as:parametricML}(i),(ii), \ref{as:outcome}(i), and \ref{as:series}.
	Further, let $\sigma_{nK}^2 \coloneqq \max \{ \| \sum_{i=1}^{n} \bE(N_{i, K} N_{i, K}^\top) \|_2, \| \sum_{i=1}^{n} \bE(N_{i, K}^\top N_{i, K}) \|_2 \}$.
	It is easy to see that $\sigma_{nK}^2 = O(\zeta_1^2(K))$.
	Observe that $\bar N_K \sqrt{\log(d_{SXK} + 1)} = O(\zeta_1(K) \sqrt{(\log K) / n}) = o(\sigma_{nK})$.
	Then, by Corollary 4.1 in \citet{chen2015optimal}, we obtain $\| \sum_{i=1}^{n} N_{i, K} \|_2 = O_P( \zeta_1(K) \sqrt{\log K} )$.
	In a similar manner, we can show that $\| \sum_{i=1}^{n} M_{i, K} \|_2 = O_P( \zeta_0(K) \sqrt{\log K} )$.
	Noting that $||\hat{\bm{\Pi}}_{nK}||_2 = O_P(n^{-1/2})$ and $|| \hat{\bm{\Gamma}}_{nK} ||_2 = O_P(n^{-1/2})$, we can show that the first and second terms on the right-hand side of \eqref{eq:taylor1} are $O_P( \zeta_0(K) \sqrt{ \log K} )$ and $O_P( \zeta_1(K) \sqrt{\log K} )$, respectively.
	Thus, 
	\begin{align} \label{eq:differenceRB2}
	    \mathbb{S} \hat \Psi_{nK}^{-1} (\hat{\mathbf{R}}_K - \mathbf{R}_K)^\top \mathbf{B}_K / n = O_P( \zeta_1(K) \sqrt{\log K} / n ) + O_P( \zeta_2(K) / n).
	\end{align}
	
	Summarizing these results, we obtain 
	\begin{equation*}
	\begin{split}
		\mathbb{S} \hat{\Psi}_{nK}^{-1} \hat{\mathbf{R}}_K^\top \mathbf{B}_K / n
		&= O_P(\zeta_1(K) \sqrt{K} / n) +  O_P( \zeta_1(K) \sqrt{\log K} / n ) + O_P( \zeta_2(K) / n) + O_P\left(\sqrt{\text{tr}\{\mathbb{S} \mathbb{S}^\top \}/n}\right).
	\end{split}
	\end{equation*}
	\bigskip

	(v), (vi) The proofs are similar to the proof of Lemma A.2 in \citet{hoshino2021treatment}.
	For (vi), note that $\bE [e^{(1)} | D, X, \mathbf{Z}] = \bE [ \bE [e^{(1)} | D, X, \mathbf{Z}, s] | D, X, \mathbf{Z} ] = 0$ since
	\begin{align*}
    	\bE [e^{(1)} | D, X, \mathbf{Z}, s = j] 
    	&= \bE \left[ \sum_{h = 1}^S \delta_h^{(1)} \left( \epsilon^{(1)} - g_h(P_h) / P_h  \right) \biggr| D, X, \mathbf{Z}, s = j \right]\\
    	&= \bE \left[ D \left( \epsilon^{(1)} - g_j(P_j) / P_j  \right) \biggr| D, X, \mathbf{Z}, s = j \right] = 0 \quad \text{for all $j$}.
	\end{align*}

\end{proof}

Here, for a generic random variable $T$ and $q \in \mathcal{C}(\text{supp}[T])$, we define
\begin{align*}
    \hat{\mathcal{P}}_{n,K,j}^{(d)}q &\coloneqq b_K(\cdot)^\top \mathbb{S}_{K, j} \left[ \hat{\Psi}_{nK}^{(d)} \right]^{-1} \frac{1}{n} \sum_{i = 1}^n \hat R_{i, K}^{(d)}q(T_i).
\end{align*}

\begin{lemma} \label{lem:opnorm}
    Suppose that Assumptions \ref{as:iid} -- \ref{as:error} hold.
    If $\zeta_0(K) \zeta_1(K) / \sqrt{n} = O(1)$ holds, then
    \begin{align*}
        \| \hat{\mathcal{P}}_{n,K,j}^{(1)} \|_{\infty} = \| \mathcal{P}_{n,K,j}^{(1)} \|_{\infty} + O_P(1),
    \end{align*}
    where
    \begin{align*}
    \| \hat{\mathcal{P}}_{n,K,j}^{(d)} \|_{\infty} \coloneqq \sup \left\{ \sup_{p \in [0, 1]} \left| \left( \hat{\mathcal{P}}_{n,K,j}^{(d)}q\right)(p) \right| : q \in \mathcal{C}(\text{supp}[T]), \sup_{t \in \text{supp}[T]} |q(t)| = 1 \right\}.
    \end{align*}
\end{lemma}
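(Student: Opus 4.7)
The plan is to reduce the claim to showing $\| \hat{\mathcal{P}}_{n,K,j}^{(1)} - \mathcal{P}_{n,K,j}^{(1)} \|_{\infty} = O_P(1)$, from which the reverse triangle inequality for the operator norm delivers the stated equality. To this end, I would fix any $q \in \mathcal{C}(\text{supp}[T])$ with $\sup_{t \in \text{supp}[T]} |q(t)| = 1$, set $v \coloneqq n^{-1} \sum_{i=1}^n R_{i,K}^{(1)} q(T_i)$ and $\hat v \coloneqq n^{-1} \sum_{i=1}^n \hat R_{i,K}^{(1)} q(T_i)$, and insert the intermediate term $b_K(p)^\top \mathbb{S}_{K,j} [\hat \Psi_{nK}^{(1)}]^{-1} v$ to obtain
\begin{align*}
	(\hat{\mathcal{P}}_{n,K,j}^{(1)} - \mathcal{P}_{n,K,j}^{(1)}) q(p)
	= \underbrace{b_K(p)^\top \mathbb{S}_{K,j} [\hat \Psi_{nK}^{(1)}]^{-1} (\hat v - v)}_{\eqqcolon\,(\mathrm{I})}
	+ \underbrace{b_K(p)^\top \mathbb{S}_{K,j} \left( [\hat \Psi_{nK}^{(1)}]^{-1} - [\Psi_{nK}^{(1)}]^{-1} \right) v}_{\eqqcolon\,(\mathrm{II})}.
\end{align*}

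For the first piece I would just use Cauchy-Schwarz. Since $\| \mathbb{S}_{K,j}^\top b_K(p) \| \le \zeta_0(K)$, and $\| [\hat \Psi_{nK}^{(1)}]^{-1} \|_2 = O_P(1)$ (which follows from Assumption \ref{as:eigen}(i) combined with Lemmas \ref{lem:matLLN}(ii) and \ref{lem:estmatLLN}(ii)), while $\|\hat v - v\| \le n^{-1/2} \|\hat{\mathbf R}_K^{(1)} - \mathbf R_K^{(1)}\| \cdot \sup_t |q(t)| = O_P(\zeta_1(K) / \sqrt n)$ by \eqref{eq:differenceR}, I obtain $\sup_p |(\mathrm{I})| = O_P( \zeta_0(K) \zeta_1(K) / \sqrt n) = O_P(1)$ under the hypothesis $\zeta_0(K) \zeta_1(K) / \sqrt n = O(1)$.

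The second piece is the main obstacle. I would apply the standard identity $[\hat\Psi_{nK}^{(1)}]^{-1} - [\Psi_{nK}^{(1)}]^{-1} = - [\hat\Psi_{nK}^{(1)}]^{-1} (\hat\Psi_{nK}^{(1)} - \Psi_{nK}^{(1)}) [\Psi_{nK}^{(1)}]^{-1}$ and then bound the four factors in spectral norm, yielding
\begin{align*}
	\sup_p |(\mathrm{II})| \le \zeta_0(K) \cdot \| [\hat \Psi_{nK}^{(1)}]^{-1} \|_2 \cdot \| \hat\Psi_{nK}^{(1)} - \Psi_{nK}^{(1)} \|_2 \cdot \| [\Psi_{nK}^{(1)}]^{-1} \|_2 \cdot \| v \|.
\end{align*}
The inverse operator norms are $O_P(1)$ as above, Lemma \ref{lem:estmatLLN}(i) gives $\|\hat\Psi_{nK}^{(1)} - \Psi_{nK}^{(1)}\|_2 = O_P(\zeta_1(K) / \sqrt n)$, and $\|v\| = O_P(\sqrt K)$ follows from $\bE \| R_K^{(1)} \|^2 = \text{tr}(\Psi_K^{(1)}) = O(K)$ by Assumption \ref{as:eigen}(i). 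The resulting crude bound is $O_P(\zeta_0(K) \zeta_1(K) \sqrt K / \sqrt n)$. The delicate point is that the assumption $\zeta_0(K) \zeta_1(K)/\sqrt n = O(1)$ alone only gives $O_P(\sqrt K)$ here, so one must additionally invoke Assumption \ref{as:series}(ii): for standard sieves such as B-splines and wavelets one has $\zeta_0(K) \zeta_1(K) \sqrt K \asymp \zeta_2(K)$, so $\zeta_2(K) / \sqrt n \to 0$ upgrades the bound to $o_P(1)$. If a tighter bound were ever needed beyond these standard bases, I would decompose $\hat \Psi_{nK}^{(1)} - \Psi_{nK}^{(1)}$ into cross-terms of the form $n^{-1} \sum_i (\hat R_i^{(1)} - R_i^{(1)}) R_i^{(1)\top}$ (plus its transpose and a quadratic remainder), and absorb the action of $b_K(p)^\top \mathbb{S}_{K,j} [\Psi_{nK}^{(1)}]^{-1}$ into the bounded operator $\| \mathcal{P}_{n,K,j}^{(1)} \|_\infty$, in the spirit of the kernel-style argument used in the proof of Lemma \ref{lem:estmatLLN}(iv).
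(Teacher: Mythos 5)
Your decomposition is just the mirror image of the one the paper uses (the paper inserts the intermediate term $b_K(p)^\top \mathbb{S}_{K,j}[\Psi_{nK}^{(1)}]^{-1}\hat v$ rather than $b_K(p)^\top \mathbb{S}_{K,j}[\hat\Psi_{nK}^{(1)}]^{-1}v$, so its two pieces are $b_K(p)^\top\mathbb{S}_{K,j}([\hat\Psi_{nK}^{(1)}]^{-1}-[\Psi_{nK}^{(1)}]^{-1})\hat v$ and $b_K(p)^\top\mathbb{S}_{K,j}[\Psi_{nK}^{(1)}]^{-1}(\hat v - v)$), and your treatment of term $(\mathrm{I})$ is exactly right. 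The problem is term $(\mathrm{II})$: the bound $\|v\| = O_P(\sqrt{K})$ obtained from $\bE\|R_K^{(1)}\|^2 = \mathrm{tr}(\Psi_K^{(1)}) = O(K)$ is needlessly crude, it leaves you with $O_P(\sqrt{K}) \neq O_P(1)$, and the patch you propose --- invoking $\zeta_0(K)\zeta_1(K)\sqrt{K} \asymp \zeta_2(K)$ --- is a property of B-splines that is nowhere among the lemma's hypotheses, so as written the argument does not prove the lemma under the stated assumptions.

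The fix is to exploit that $v$ is a weighted average against a uniformly bounded $q$ rather than treating $\|v\|$ entrywise. Writing $\mathbf{q} = (q(T_1),\dots,q(T_n))^\top$ so that $v = \mathbf{R}_K^{(1)\top}\mathbf{q}/n$, one has
\begin{align*}
\|v\|^2 = \frac{\mathbf{q}^\top \mathbf{R}_K^{(1)}\mathbf{R}_K^{(1)\top}\mathbf{q}}{n^2} \le \lambda_{\max}\left(\Psi_{nK}^{(1)}\right)\cdot\frac{\|\mathbf{q}\|^2}{n} \le \lambda_{\max}\left(\Psi_{nK}^{(1)}\right) = O_P(1),
\end{align*}
uniformly over $q$ with $\sup_t|q(t)|=1$, using Lemma \ref{lem:matLLN}(i) and Assumption \ref{as:eigen}(i). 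With this, your term $(\mathrm{II})$ becomes $\zeta_0(K)\cdot O_P(\zeta_1(K)/\sqrt{n})\cdot O_P(1) = O_P(1)$ directly under the hypothesis $\zeta_0(K)\zeta_1(K)/\sqrt{n} = O(1)$, with no appeal to $\zeta_2(K)$ or to a specific basis. This quadratic-form bound is precisely what the paper's argument relies on when it asserts $\|\mathbb{S}_{K,j}([\hat\Psi_{nK}^{(1)}]^{-1}-[\Psi_{nK}^{(1)}]^{-1})\hat{\mathbf{R}}_K^{(1)\top}\mathbf{q}/n\| = O_P(\zeta_1(K)/\sqrt{n})$; the rest of your proposal then goes through unchanged.
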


\begin{proof}
	The proof is the same as that of Lemma A.3 in \cite{hoshino2021treatment}.
\end{proof}

\begin{lemma}\label{lem:MTRnormal}
	Suppose that Assumptions \ref{as:iid} -- \ref{as:error} hold.
	For a given $p \in \text{supp}[P_j | D = 1]$, if $\| \nabla b_K(p) \| \to \infty$, $\sqrt{n} K^{-\mu_0} \to 0$, and $\sqrt{n} K^{-\mu_1} / \| \nabla b_K(p) \| \to 0$ hold, then
	\begin{align*}
		\text{(i)} &\quad \frac{\sqrt{n} \left( \tilde m^{(1)}_j(x, p) - m^{(1)}_j (x, p) \right)}{\sigma_{K,j}^{(1)}(p)} \overset{d}{\to} N(0, 1).
	\end{align*}
	If Assumption \ref{as:deriv}, $\zeta_0(K) \zeta_1(K) / \sqrt{n} = O(1)$, and $(\| \mathcal{P}_{n,K,j}^{(1)}\|_{\infty} + 1) \sqrt{K} / \| \nabla b_K(p) \| \to 0$ hold additionally, then
	\begin{align*}
		\text{(ii)} &\quad \frac{\sqrt{n} \left( \hat m^{(1)}_j(x, p) - m^{(1)}_j (x, p) \right)}{\sigma_{K,j}^{(1)}(p)} \overset{d}{\to} N(0, 1).
	\end{align*}
\end{lemma}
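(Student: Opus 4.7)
The plan is to decompose the MTR error into a dominant stochastic term plus negligible remainders and then apply a Lyapunov CLT for triangular arrays. Starting from the identity
\begin{align*}
\tilde m_j^{(1)}(x,p) - m_j^{(1)}(x,p) = \bigl(x^\top \mathbb{S}_{X,j} + \nabla b_K(p)^\top \mathbb{S}_{K,j}\bigr)\bigl(\tilde\theta_n^{(1)} - \theta^{(1)}\bigr) + \bigl[\nabla b_K(p)^\top \alpha_j^{(1)} - \nabla g_j^{(1)}(p)\bigr],
\end{align*}
the last bracketed term is the deterministic sieve approximation bias, which is $O(K^{-\mu_1})$ by Assumption~\ref{as:series}(i) and therefore vanishes once scaled by $\sqrt{n}/\sigma_{K,j}^{(1)}(p)$, since Assumption~\ref{as:eigen} together with $\|\nabla b_K(p)\|\to\infty$ implies $\sigma_{K,j}^{(1)}(p)\gtrsim\|\nabla b_K(p)\|$ and the rate $\sqrt{n}K^{-\mu_1}/\|\nabla b_K(p)\|\to 0$ is assumed. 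I would then plug in the decomposition of $\tilde\theta_n^{(1)}-\theta^{(1)}$ stated just before the proof of Theorem~\ref{thm:normality}. Lemma~\ref{lem:matLLN}(iv) applied with $\mathbb{S}=\mathbb{S}_{K,j}$ shows that the contribution from $\mathbf{r}_K^{(1)}$ is $O_P(K^{-\mu_0})$, negligible by $\sqrt{n}K^{-\mu_0}\to 0$; the $x^\top\mathbb{S}_{X,j}$ channel only contributes bounded quantities, again dominated by $\|\nabla b_K(p)\|$.

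After replacing $[\Psi_{nK}^{(1)}]^{-1}$ with $[\Psi_K^{(1)}]^{-1}$ via Lemma~\ref{lem:matLLN}(ii), the leading term becomes $n^{-1/2}\sum_{i=1}^n W_{ni}$ with
\begin{align*}
W_{ni} \;\coloneqq\; \nabla b_K(p)^\top \mathbb{S}_{K,j}\,[\Psi_K^{(1)}]^{-1} R_{i,K}^{(1)}\,\xi_{i,K}^{(1)}.
\end{align*}
Because $\bE[\xi_K^{(1)}\mid X,\mathbf{Z}]=0$ (combining $\bE[B_K^{(1)}\mid X,\mathbf{Z}]=0$ with the calculation in \eqref{eq:errormean1}), the $W_{ni}$ are IID mean-zero with $\bE W_{ni}^2 = [\sigma_{K,j}^{(1)}(p)]^2$. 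A Lyapunov CLT then applies: the fourth-moment bound in Assumption~\ref{as:error}, the uniform lower bound on $\lambda_{\min}(\Sigma_K^{(1)})$ in Assumption~\ref{as:eigen}(ii), and the envelope rate $\zeta_0(K)\sqrt{(K\log K)/n}\to 0$ from Assumption~\ref{as:series}(ii) together drive the Lyapunov ratio to zero.

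For part (ii), the feasible decomposition adds the plug-in term $[\hat\Psi_{nK}^{(1)}]^{-1}\hat{\mathbf R}_K^{(1)\top}\hat{\bm\Delta}_K^{(1)}/n$. A direct application of Lemma~\ref{lem:estmatLLN}(iii) yields only $O_P(n^{-1/2})$, which is insufficient once multiplied by $\sqrt{n}\|\nabla b_K(p)\|$. The remedy is to exploit Assumption~\ref{as:deriv} to trade the factor $\nabla b_K(p)$ for $O(\sqrt{K})$ times a $b_K$-projection, bounding the feasible term by $O(\sqrt{K})\cdot\|\hat{\mathcal P}_{n,K,j}^{(1)}\|_\infty\cdot\sup|\hat\Delta_K^{(1)}|$, where $\sup|\hat\Delta_K^{(1)}|=O_P(n^{-1/2})$ follows from the mean value theorem applied to $\hat P$ together with Assumption~\ref{as:parametricML}(iii); Lemma~\ref{lem:opnorm} then replaces $\|\hat{\mathcal P}_{n,K,j}^{(1)}\|_\infty$ by $\|\mathcal P_{n,K,j}^{(1)}\|_\infty+O_P(1)$, and the hypothesis $(\|\mathcal P_{n,K,j}^{(1)}\|_\infty+1)\sqrt{K}/\|\nabla b_K(p)\|\to 0$ kills the term. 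The remaining plug-in errors are treated analogously using Lemma~\ref{lem:estmatLLN}(iv)--(vi). The main obstacle throughout is verifying the Lyapunov condition uniformly in $K$ and managing the trade-off between the envelope order $\zeta_0(K)$ and the lower eigenvalue bound on $\Sigma_K^{(1)}$; the rest is careful bookkeeping.
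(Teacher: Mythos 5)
Your proposal is correct and follows essentially the same route as the paper's proof: the same decomposition into sieve bias, the $\mathbf{r}_K^{(1)}$ and $\mathbf{B}_K^{(1)}+\mathbf{e}^{(1)}$ channels, the lower bound $\sigma_{K,j}^{(1)}(p)\gtrsim\|\nabla b_K(p)\|$ from Assumption \ref{as:eigen}, replacement of $\Psi_{nK}^{(1)}$ by $\Psi_K^{(1)}$ via Lemma \ref{lem:matLLN}(ii), and a Lyapunov CLT with fourth-moment rate $O(\zeta_0^2(K)K/n)$; for part (ii) you also correctly identify that the naive bound on the plug-in term is only $O_P(1)$ after scaling and that Assumption \ref{as:deriv} plus the operator-norm argument of Lemma \ref{lem:opnorm} is the way out, exactly as in the paper.
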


\begin{proof}
	(i) First, by Assumption \ref{as:eigen}, we have
	\begin{align}\label{eq:sigmalbound}
	\begin{split}
	\sigma_{K,j}^2(p)
	= \nabla b_K(p)^\top  \mathbb{S}_{K,j} \Psi_K^{-1} \Sigma_K \Psi_K^{-1} \mathbb{S}_{K,j}^\top \nabla b_K(p)
	\ge \frac{\underbar{c}_\Sigma}{\bar{c}_\Psi^2} \cdot \| \nabla b_K(p) \|^2 > 0.
	\end{split}
	\end{align}
	
	Next, by Lemmas \ref{lem:matLLN}(iii)-(v), we have
	\begin{align*}
	    \left\| \tilde \beta_{n,j}^{(1)} - \beta_j^{(1)} \right\|
	   & \le \left\| \mathbb{S}_{X,j} \left[ \Psi_{nK}^{(1)} \right]^{-1} \mathbf{R}_K^{(1)\top}\mathbf{B}_K^{(1)} / n \right\| + \left\| \mathbb{S}_{X,j} \left[ \Psi_{nK}^{(1)} \right]^{-1} \mathbf{R}_K^{(1)\top}\mathbf{r}_K^{(1)} / n \right\| + \left\| \mathbb{S}_{X,j} \left[ \Psi_{nK}^{(1)} \right]^{-1} \mathbf{R}_K^{(1)\top}\mathbf{e}^{(1)} / n \right\|  \\
	    & = O_P(n^{-1/2}) + O_P(K^{-\mu_0}).
	\end{align*}
	Thus, by the definition of the infeasible estimator $\tilde m^{(1)}_j(x, p)$ and Assumption \ref{as:series}(i),
	\begin{align*}
	\tilde m^{(1)}_j(x, p)  -  m^{(1)}_j(x, p)
	& = x^\top \left( \tilde \beta_{n,j}^{(1)} - \beta_j^{(1)}\right) + \nabla b_K(p)^\top \tilde \alpha_{n,j}^{(1)} - \nabla g_j^{(1)}(p)\\
	& = \nabla b_K(p)^\top \left( \tilde \alpha_{n,j}^{(1)} - \alpha_j^{(1)} \right) + O_P(n^{-1/2}) + O_P(K^{-\mu_0}) + O(K^{-\mu_1}) \\
	& = A_{1n,j} + A_{2n,j} + O_P(n^{-1/2}) + O_P(K^{-\mu_0}) + O(K^{-\mu_1}),
	\end{align*}
	where $A_{1n,j} \coloneqq \nabla b_K(p)^\top \mathbb{S}_{K,j} \Psi_{nK}^{-1} \mathbf{R}_K^\top \bm{\xi}_K / n$ and $A_{2n,j} \coloneqq \nabla b_K(p)^\top \mathbb{S}_{K,j} \Psi_{nK}^{-1} \mathbf{R}_K^\top \mathbf{r}_K / n$.
	For $A_{2n,j}$, by Lemma \ref{lem:matLLN}(iv) and $\sqrt{n}K^{-\mu_0} \to 0$, we have
	\begin{align*}
	    |A_{2n,j}|
	    \le \| \nabla b_K(p) \| \cdot \| \mathbb{S}_{K,j} \Psi_{nK}^{-1} \mathbf{R}_K^\top \mathbf{r}_K / n \|
	    = \| \nabla b_K(p) \| \cdot O_P(K^{-\mu_0})
	    = \| \nabla b_K(p) \| \cdot o_P(n^{-1/2}).
	\end{align*}
	Define $A'_{1n,j} \coloneqq \nabla b_K(p)^\top  \mathbb{S}_{K,j} \Psi_K^{-1} \mathbf{R}_K^\top \bm{\xi}_K / n$.
	It is easy to see that 
	\begin{align*}
	    | A_{1n,j} - A'_{1n,j} | 
	    & \le \| \nabla b_K(p) \| \cdot \| \mathbb{S}_{K,j} ( \Psi_{nK}^{-1} - \Psi_K^{-1} ) \mathbf{R}_K^\top \bm{\xi}_K / n \| \\
	    & = \| \nabla b_K(p) \| \cdot O_P(\zeta_0(K) \sqrt{(K \log K)} / n) = \| \nabla b_K(p) \| \cdot  o_P(n^{-1/2}),
	\end{align*}
	by Lemma \ref{lem:matLLN}(ii), Markov's inequality, and Assumption \ref{as:series}(ii).
	Thus, by \eqref{eq:sigmalbound}, we obtain
	\begin{align}\label{eq:MTRlinear}
		\frac{\sqrt{n} \left( \tilde m^{(1)}_j(x, p) - m^{(1)}_j (x, p) \right)}{\sigma_{K,j}^{(1)}(p)} 
		=  \frac{\sqrt{n} ( A_{1n,j} + A_{2n,j})}{\sigma_{K,j}^{(1)}(p)} + o_P(1)
		=  \frac{\sqrt{n} A'_{1n,j}}{\sigma_{K,j}^{(1)}(p)} + o_P(1),
	\end{align}
	since we have assumed $\| \nabla b_K(p) \| \to \infty$, $\sqrt{n} K^{-\mu_0} \to 0$, and $\sqrt{n} K^{-\mu_1} / \| \nabla b_K(p) \| \to 0$.
	
	We now show the asymptotic normality of $\sqrt{n} A'_{1n,j} / \sigma_{K,j}^{(1)}(p)$.
	Let $\phi_{ji} \coloneqq \Pi_{K,j}(p) R_{i,K}^{(1)} \xi_{i,K}^{(1)} / \sqrt{n}$, where $\Pi_{K,j}(p) \coloneqq \nabla b_K(p)^\top  \mathbb{S}_{K,j} \Psi_K^{-1}  / \sigma_{K,j}^{(1)}(p)$, so that $\sum_{i = 1}^n \phi_{ji} =  \sqrt{n} A'_{1n,j} / \sigma_{K,j}^{(1)}(p)$.
	Since $\bE [B_K^{(1)}|X, \mathbf Z] = 0$ and $\bE [e^{(1)} | X, \mathbf Z] = 0$ as shown in \eqref{eq:errormean1}, we have $\bE [\phi_{ji}] = 0$ and $Var[\phi_{ji}] = n^{-1}$.
	Moreover, note that $\bE [(\xi_{i,K}^{(1)})^4 | X_i, \mathbf{Z}_i ] = O(1)$ holds by the $c_r$-inequality with Assumption \ref{as:error} and the uniform boundedness of $B_{i, K}^{(1)}$.
	Then, by the same argument as in the proof of Theorem 4.2 in \citet{hoshino2021treatment}, we obtain $\sum_{i = 1}^n \bE [\phi_{ji}^4] = O(\zeta_0^2(K) K / n ) = o(1)$ under Assumption \ref{as:series}(ii).
	Hence, result (i) follows from Lyapunov's central limit theorem.	

	\bigskip
	
	(ii) By Lemmas \ref{lem:estmatLLN}(iii) -- (vi), Assumption \ref{as:series}(ii), and $\sqrt{n} K^{-\mu_0} \to 0$, we have
	\begin{align*}
		\left\| \hat \beta_{n,j}^{(1)} - \beta_j^{(1)} \right\|
		& \le \left\| \mathbb{S}_{X,j} \left[ \hat{\Psi}_{nK}^{(1)} \right]^{-1} \hat{\mathbf{R}}_K^{(1)\top}\hat{\bm{\Delta}}_K^{(1)} / n \right\| 
		+ \left\| \mathbb{S}_{X,j} \left[ \hat{\Psi}_{nK}^{(1)} \right]^{-1} \hat{\mathbf{R}}_K^{(1)\top}\mathbf{r}_K^{(1)} / n \right\| 
		+ \left\| \mathbb{S}_{X,j} \left[ \hat{\Psi}_{nK}^{(1)} \right]^{-1} \hat{\mathbf{R}}_K^{(1)\top}\bm{\xi}_K^{(1)} / n \right\|  \\
	    & = O_P(n^{-1/2}) + \underbrace{ O_P(\zeta_1(K)\sqrt{K}/n) + O_P(\zeta_2(K)/n) + O_P(K^{-\mu_0}) }_{= \: o_P(n^{-1/2})}.
	\end{align*}
	Thus, by the definition of the feasible estimator $\hat m_j^{(1)}(x, p)$ and Assumption \ref{as:series}(i),
	\begin{align*}
		\hat m_j^{(1)}(x, p) - m_j^{(1)}(x, p)
		&= x^\top (\hat \beta_{n,j}^{(1)} - \beta_j^{(1)}) + \nabla b_K(p)^\top \hat \alpha_{n,j}^{(1)} - \nabla g_j^{(1)}(p)\\
		&= \nabla b_K(p)^\top (\hat \alpha_{n,j}^{(1)} - \alpha_j^{(1)}) + O_P(n^{-1/2}) + O_P(K^{-\mu_1})\\
		&= \mathfrak{A}_{1n, j} + \mathfrak{A}_{2n, j} + \mathfrak{A}_{3n, j} + O_P(n^{-1/2}) + O_P(K^{-\mu_1}),
	\end{align*}
	where
	\begin{align*}
		& \mathfrak{A}_{1n, j} \coloneqq \nabla b_K(p)^\top \mathbb{S}_{K,j} \hat{\Psi}_{nK}^{-1} \hat{\mathbf{R}}_K^\top \bm{\xi}_K / n,
		\qquad 
		\mathfrak{A}_{2n, j} \coloneqq \nabla b_K(p)^\top \mathbb{S}_{K,j} \hat{\Psi}_{nK}^{-1} \hat{\mathbf{R}}_K^\top \mathbf{r}_K / n,\\
		& \mathfrak{A}_{3n, j} \coloneqq \nabla b_K(p)^\top \mathbb{S}_{K,j} \hat{\Psi}_{nK}^{-1} \hat{\mathbf{R}}_K^\top \hat{\bm{\Delta}}_K / n.
	\end{align*}
	For $\mathfrak{A}_{1n, j}$, observe that
	\begin{align*}
		\mathfrak{A}_{1n, j}
		& = A_{1n, j} + \nabla b_K(p)^\top \mathbb{S}_{K,j} \left( \hat{\Psi}_{nK}^{-1} - \Psi_{nK}^{-1} \right) \mathbf{R}_K^\top \bm{\xi}_K / n + \nabla b_K(p)^\top \mathbb{S}_{K,j} \hat{\Psi}_{nK}^{-1} \left( \hat{\mathbf{R}}_K - \mathbf{R}_K \right)^\top \mathbf{B}_K / n \\
		& \quad +  \nabla b_K(p)^\top \mathbb{S}_{K,j} \hat{\Psi}_{nK}^{-1} \left( \hat{\mathbf{R}}_K - \mathbf{R}_K \right)^\top \mathbf{e} / n.
	\end{align*}
	By the same argument as in \eqref{eq:differenceRB1}, the second term on the right-hand side satisfies
	\begin{align*}
	    \left\| \nabla b_K(p)^\top \mathbb{S}_{K,j} \left( \hat{\Psi}_{nK}^{-1} - \Psi_{nK}^{-1} \right) \mathbf{R}_K^\top \bm{\xi}_K / n \right\| 
	    & \le \| \nabla b_K(p) \| \cdot \left\| \mathbb{S}_{K,j} \left( \hat{\Psi}_{nK}^{-1} - \Psi_{nK}^{-1} \right) \mathbf{R}_K^\top \bm{\xi}_K / n \right\| \\
	    & = \| \nabla b_K(p) \| \cdot O_P( \zeta_1(K) \sqrt{K} / n).
	\end{align*}
	Similarly, we can show that the third term is of order $\| \nabla b_K(p) \|\cdot\left\{ O_P( \zeta_1(K) \sqrt{\log K} / n) + O_P(\zeta_2(K) / n)\right\}$ by \eqref{eq:differenceRB2}.
	For the fourth term, recalling that $\bE [e^{(1)} | D, X, \mathbf{Z}]  = 0$, we have
	\begin{align*}
	& \bE \left[ \left\| \mathbb{S}_{K,j} \hat{\Psi}_{nK}^{-1} \left( \hat{\mathbf{R}}_K - \mathbf{R}_K \right)^\top \mathbf{e} / n \right\|^2 \Bigg| \{D_i , X_i, \mathbf Z_i \}_{i = 1}^n \right]\\
	& \quad = \text{tr} \left\{ \mathbb{S}_{K,j} \hat{\Psi}_{nK}^{-1} \left( \hat{\mathbf{R}}_K - \mathbf{R}_K \right)^\top  \bE [ \mathbf{e} \mathbf{e}^\top | \{D_i, X_i, \mathbf Z_i \}_{i = 1}^n ] \left( \hat{\mathbf{R}}_K - \mathbf{R}_K \right) \hat{\Psi}_{nK}^{-1} \mathbb{S}_{K,j}^\top \right\} / n^2 \\
	& \quad \le O(1/n^2) \cdot || \hat{\mathbf{R}}_K - \mathbf{R}_K ||^2 \cdot \text{tr} \{\mathbb{S}_{K,j} \hat{\Psi}_{nK}^{-1} \hat{\Psi}_{nK}^{-1} \mathbb{S}_{K,j}^\top \}
	= O_P( \zeta^2_1(K) K / n^2)
	\end{align*}
	by Assumption \ref{as:error} and \eqref{eq:differenceR}.
	Thus, by Markov's inequality,  we find that the fourth term is of order $ \| \nabla b_K(p) \| \cdot O_P( \zeta_1(K) \sqrt{K} / n)$.
	Combining these results yields 
	\begin{align*}
	    \mathfrak{A}_{1n, j} 
	    = A_{1n, j} + \| \nabla b_K(p) \| \cdot \left\{ O_P(\zeta_1(K) \sqrt{K} / n) + O_P( \zeta_2(K) / n)\right\}
	    = A_{1n,j} + \| \nabla b_K(p) \| \cdot o_P(n^{-1/2})
	\end{align*}
	under Assumption \ref{as:series}(ii).

	For $\mathfrak{A}_{2n, j}$, by Lemma \ref{lem:estmatLLN}(v) and $\sqrt{n} K^{-\mu_0} \to 0$, we have  $|\mathfrak{A}_{2n, j}| \le \| \nabla b_K(p) \| \cdot \| \mathbb{S}_{K,j} \hat{\Psi}_{nK}^{-1} \hat{\mathbf{R}}_K^\top \mathbf{r}_K / n \| = \| \nabla b_K(p) \| \cdot o_P(n^{-1/2})$.
	
	For $\mathfrak{A}_{3n, j}$, observe that $|\mathfrak{A}_{3n, j}| \le O(\sqrt{K}) \cdot \sup_{p \in [0,1]} |b_K(p)^\top \mathbb{S}_{K,j} \hat{\Psi}_{nK}^{-1} \hat{\mathbf{R}}_K^\top \hat{\bm{\Delta}}_K / n|$ by Assumption \ref{as:deriv}.
	Further, noting that 
	\begin{align*}
	\hat \Delta_K = (R_K - \hat R_K)^\top \theta^{(1)} 
	&= \sum_{h=1}^S \left[ \pi_h (P_h - \hat P_h) X^\top \beta_h^{(1)} + ( \pi_h - \hat \pi_{n, h} ) \hat P_h X^\top \beta_h^{(1)}\right.\\
	&\qquad\quad + \left. \pi_h (b_K(P_h) - b_K(\hat P_h))^\top \alpha_h^{(1)} + (\pi_h - \hat \pi_{n, h}) b_K(\hat P_h)^\top \alpha_h^{(1)}\right],    
	\end{align*}
	write
	\begin{align*}
		b_K(p)^\top \mathbb{S}_{K,j} \hat{\Psi}_{nK}^{-1} \hat{\mathbf{R}}_K^\top \hat{\bm{\Delta}}_K / n
		&= b_K(p)^\top \mathbb{S}_{K,j} \hat{\Psi}_{nK}^{-1} \left[ \frac{1}{n} \sum_{i = 1}^n \hat R_{i, K} \left( \sum_{h=1}^S \pi_h (P_{hi} - \hat P_{hi}) X_i^\top \beta_h^{(1)} \right) \right]\\
		& \qquad + b_K(p)^\top \mathbb{S}_{K,j} \hat{\Psi}_{nK}^{-1} \left[ \frac{1}{n} \sum_{i = 1}^n \hat R_{i, K} \left( \sum_{h=1}^S (\pi_h - \hat \pi_{n, h}) \hat P_{hi} X_i^\top \beta_h^{(1)} \right) \right]\\
		& \qquad + b_K(p)^\top \mathbb{S}_{K,j} \hat{\Psi}_{nK}^{-1} \left[ \frac{1}{n} \sum_{i = 1}^n \hat R_{i, K} \left( \sum_{h=1}^S \pi_h \left( b_K(P_{hi}) - b_K(\hat P_{hi}) \right)^\top \alpha_h^{(1)} \right) \right]\\
		& \qquad + b_K(p)^\top \mathbb{S}_{K,j} \hat{\Psi}_{nK}^{-1} \left[ \frac{1}{n} \sum_{i = 1}^n \hat R_{i, K} \left( \sum_{h=1}^S (\pi_h - \hat \pi_{n, h}) b_K(\hat P_{hi})^\top \alpha_h^{(1)} \right) \right]\\
		&\eqqcolon \mathfrak{B}_{1n,j}(p) + \mathfrak{B}_{2n,j}(p) + \mathfrak{B}_{3n,j}(p) + \mathfrak{B}_{4n,j}(p), \text{ say}.
	\end{align*}
	By Lemma \ref{lem:opnorm}, if $\zeta_0(K) \zeta_1(K) / \sqrt{n} = O(1)$,
	\begin{align*}
	    |\mathfrak{B}_{1n, j}(p)|
	    & = \left| b_K(p)^\top \mathbb{S}_{K,j} \hat{\Psi}_{nK}^{-1} \frac{1}{n} \sum_{i = 1}^n \hat R_{i, K}q_n(\mathbf{Z}_i,X_i) \right|\\
	    & = \left| \left( \hat{\mathcal{P}}_{n,K,j}^{(1)} q_n \right)(p) \right| \le \| \hat{\mathcal{P}}_{n,K,j}^{(1)}\|_{\infty} \cdot O_P(n^{-1/2}) = \left(\| \mathcal{P}_{n,K,j}^{(1)}\|_{\infty} + 1\right) \cdot O_P(n^{-1/2})
	\end{align*}
	for any $p \in [0,1]$, where the definition of $q_n(\mathbf{Z}_i,X_i)$ should be clear from the context. 
	Similarly, we can easily show that $|\mathfrak{B}_{2n, j}(p)|$, $|\mathfrak{B}_{3n, j}(p)|$, and $|\mathfrak{B}_{4n, j}(p)|$ are also of order $\left(\| \mathcal{P}_{n,K,j}^{(1)}\|_{\infty} + 1\right) \cdot O_P(n^{-1/2})$ uniformly in $p \in [0,1]$.
	Consequently, we have 
	\begin{align*}
	    \mathfrak{A}_{3n, j} = \left( \| \mathcal{P}_{n,K,j}^{(1)}\|_{\infty} + 1 \right) \cdot O_P(\sqrt{K / n}) = \| \nabla b_K(p) \| \cdot o_P(n^{-1/2})
	\end{align*}
	since we have assumed $(\| \mathcal{P}_{n,K,j}^{(1)}\|_{\infty} + 1) \sqrt{K} / \| \nabla b_K(p) \| \to 0$.
	Therefore, we have 
	\begin{align}\label{eq:MTRlinear2}
	\frac{\sqrt{n}  \left( \hat m_j^{(1)}(x, p) - m_j^{(1)}(x, p) \right) }{ \sigma_{K,j}^{(1)}(p)} 
	= \frac{\sqrt{n} A_{1n, j}}{\sigma_{K,j}^{(1)}(p)} + o_P(1),
	\end{align}
	and the result follows from the proof of result (i).
\end{proof}

\section{Appendix: Supplementary Technical Results}\label{sec:suppident}

\subsection{Identification of the Finite Mixture Probit Models}\label{subsec:mixture}

\subsubsection{Exogenous membership with constant membership probability}\label{subsubsec:exomember}

We provide an identification result for the finite mixture Probit model in which group membership is assumed to be exogenous.

\begin{assumption}\label{as:exomember}\hfil
	\begin{enumerate}[(i)]
		\item The treatment choice is generated by $D = \mathbf{1} \{ Z^\top \gamma_{zj} + \zeta_j \gamma_{\zeta j} \ge \epsilon_j^D \}$ if $s = j$, where $Z \in \mathbb{R}^{\dim(Z)}$ is a vector of common covariates among all groups, $\zeta_j \in \mathbb{R}$ is a group-specific continuous IV, $\epsilon_j^D \sim N(0, 1)$ independently of $(Z, \zeta)$ with denoting $\zeta = (\zeta_1, \dots, \zeta_S)^\top$, and $\gamma_{\zeta j} \neq 0$ for all $j \in \{ 1, \dots, S \}$.
		\item $s$ is independent of $(Z, \zeta, \epsilon_j^D)$ and $\pi_j = \Pr(s = j) > 0$ for all $j$.
		\item For each $z \in \text{supp}[Z]$, there exist $\mathrm{x}, \mathrm{x}', \mathrm{x}'', \mathrm{x}''' \in \text{supp}[\zeta \mid Z = z]$ such that $|\mathrm{x}_j| \neq |\mathrm{x}_j'|$, $|\mathrm{x}_j''| \neq |\mathrm{x}_j'''|$, and $\mathrm{x}_j + \mathrm{x}_j' \neq \mathrm{x}_j'' + \mathrm{x}_j'''$ for all $j$.
		\item $Z$ includes a constant and does not lie in a proper linear subspace of $\mathbb{R}^{\dim(Z)-1}$ a.s.
	\end{enumerate}
\end{assumption}

In Assumption \ref{as:exomember}(i), we formalize the finite mixture Probit model and require the existence of a continuous IV specific to each group.
Assumption \ref{as:exomember}(ii) restricts group membership to be exogenous. 
Assumption \ref{as:exomember}(iii) is a mild restriction on the support of the group-specific IV, and Assumption \ref{as:exomember}(iv) is a standard rank condition.

\begin{theorem}\label{thm:exomember}
	Under Assumption \ref{as:exomember}, the coefficients $\gamma_z = (\gamma_{z1}^\top, \dots, \gamma_{zS}^\top)^\top$ and $\gamma_{\zeta} = (\gamma_{\zeta 1}, \dots, \gamma_{\zeta S})^\top$ and the membership probabilities $\pi = (\pi_1, \dots, \pi_S)^\top$ are identified.
\end{theorem}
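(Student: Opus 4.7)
The plan is to exploit the separable structure
\begin{equation*}
    P(z,\zeta) \coloneqq \Pr(D=1 \mid Z=z,\,\zeta=\zeta) = \sum_{j=1}^S \pi_j\,\Phi\bigl(z^\top\gamma_{zj} + \gamma_{\zeta j}\,\zeta_j\bigr),
\end{equation*}
which follows from Assumption~\ref{as:exomember}(i)--(ii). Because each group-specific IV $\zeta_j$ enters only the $j$-th summand, differentiating $P$ with respect to $\zeta_j$ isolates group $j$ completely and annihilates every other mixture component; this additive separability is the engine of the whole argument.

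Writing $\alpha_j(z)\coloneqq z^\top\gamma_{zj}$ and using the identity $\phi'(u)=-u\phi(u)$, the first and second $\zeta_j$-partials
\begin{equation*}
    \frac{\partial P}{\partial \zeta_j} = \pi_j\gamma_{\zeta j}\,\phi\bigl(\alpha_j(z)+\gamma_{\zeta j}\zeta_j\bigr),
    \qquad
    \frac{\partial^2 P}{\partial \zeta_j^{2}} = -\pi_j\gamma_{\zeta j}^{2}\bigl(\alpha_j(z)+\gamma_{\zeta j}\zeta_j\bigr)\phi\bigl(\alpha_j(z)+\gamma_{\zeta j}\zeta_j\bigr)
\end{equation*}
are both directly identified from the observable $P$. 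Their ratio eliminates $\pi_j$ and the normal density,
\begin{equation*}
    \frac{\partial^2 P / \partial \zeta_j^{2}}{\partial P / \partial \zeta_j} = -\gamma_{\zeta j}\,\alpha_j(z) - \gamma_{\zeta j}^{2}\,\zeta_j,
\end{equation*}
producing an affine function of $\zeta_j$ whose slope identifies $\gamma_{\zeta j}^{2}$ and whose intercept identifies the product $\gamma_{\zeta j}\,\alpha_j(z)$. The identification then cascades: since $\pi_j>0$ and $\phi>0$, the sign of $\partial P/\partial \zeta_j$ equals the sign of $\gamma_{\zeta j}$, so combining that sign with $\gamma_{\zeta j}^{2}$ pins down $\gamma_{\zeta j}$; dividing the identified product by $\gamma_{\zeta j}$ recovers $\alpha_j(z)$ for every $z\in\text{supp}[Z]$, and Assumption~\ref{as:exomember}(iv) then yields $\gamma_{zj}$. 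Finally, substituting these known quantities back into $\partial P/\partial \zeta_j$ solves for $\pi_j$, and repeating for each $j$ completes the identification of $(\gamma_z,\gamma_\zeta,\pi)$.

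The main obstacle is justifying the differentiation step under Assumption~\ref{as:exomember}(iii), which only guarantees the existence of four points in $\text{supp}[\zeta\mid Z=z]$ rather than a continuum. I expect that a finite-difference counterpart of the argument will carry through: the inequalities $|\mathrm{x}_j|\neq|\mathrm{x}_j'|$ and $|\mathrm{x}_j''|\neq|\mathrm{x}_j'''|$ ensure that the quadratic index $(\alpha_j(z)+\gamma_{\zeta j}\mathrm{x}_j)^{2}$ varies nontrivially within each pair, ruling out the reflection symmetry of $\Phi$ as a source of observational equivalence, while $\mathrm{x}_j+\mathrm{x}_j'\neq \mathrm{x}_j''+\mathrm{x}_j'''$ guarantees that the two resulting second-difference equations form a non-singular linear system for the slope and intercept extracted above. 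Verifying this rank condition carefully and then exploiting the strict positivity of $\phi$ to divide out for $\pi_j$ is where most of the technical bookkeeping will sit; once that is in place, combining with Assumption~\ref{as:exomember}(iv) delivers the full identification claim.
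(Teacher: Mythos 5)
Your argument is correct and runs on the same engine as the paper's proof: differentiating $\Pr(D=1\mid Z=z,\zeta=\mathrm{x})$ with respect to the group-specific instrument isolates the $j$-th mixture component, the Gaussian structure reduces everything to a (log-)quadratic in the index, the sign of $\gamma_{\zeta j}$ is read off from the sign of the first derivative because $\pi_j\phi>0$, the rank condition in Assumption \ref{as:exomember}(iv) then delivers $\gamma_{zj}$, and $\pi_j$ is backed out last. The genuine difference is how you eliminate $\pi_j\phi(\cdot)$. The paper takes the ratio of the \emph{first} derivative at two points $\mathrm{x},\mathrm{x}'$ and logs it, obtaining linear equations of the form
\begin{align*}
2\log\left[\frac{\partial \Pr(D=1\mid Z=z,\zeta=\mathrm{x})/\partial\mathrm{x}_j}{\partial \Pr(D=1\mid Z=z,\zeta=\mathrm{x}')/\partial\mathrm{x}_j'}\right]
=\left[(\mathrm{x}_j')^2-\mathrm{x}_j^2\right]\gamma_{\zeta j}^2+2(\mathrm{x}_j'-\mathrm{x}_j)\,z^\top\gamma_{zj}\gamma_{\zeta j},
\end{align*}
whose nonsingularity is exactly what the four-point conditions in Assumption \ref{as:exomember}(iii) guarantee (the determinant of the resulting $2\times 2$ system is $2(\mathrm{x}_j'-\mathrm{x}_j)(\mathrm{x}_j'''-\mathrm{x}_j'')\left[(\mathrm{x}_j+\mathrm{x}_j')-(\mathrm{x}_j''+\mathrm{x}_j''')\right]$). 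You instead take the local log-derivative $\bigl(\partial^2 P/\partial\zeta_j^2\bigr)/\bigl(\partial P/\partial\zeta_j\bigr)=-\gamma_{\zeta j}\alpha_j(z)-\gamma_{\zeta j}^2\zeta_j$ and read off slope and intercept from two abscissae; your version is the infinitesimal form of the paper's, since the paper's log-ratio is precisely the integral of your ratio between $\mathrm{x}_j'$ and $\mathrm{x}_j$. Your route is marginally more economical on the support condition (any two distinct values of $\zeta_j$ determine an affine function, with no need for $|\mathrm{x}_j|\neq|\mathrm{x}_j'|$), but it pays by requiring the second derivative of the conditional choice probability to be an identified object, i.e., richer local variation in $\zeta_j$ than the four points Assumption \ref{as:exomember}(iii) literally posits. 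Your own proposed fallback of replacing derivatives by finite differences collapses your argument back into the paper's, so the concern you flag at the end is resolved exactly by the paper's version of the elimination step.
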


\begin{proof}
	By Assumptions \ref{as:exomember}(i)--(ii), it holds that $\Pr(D = 1 \mid Z = z, \zeta = \mathrm{x}) = \sum_{j = 1}^S \pi_j \Phi(z^\top \gamma_{zj} + \mathrm{x}_j \gamma_{\zeta j})$ .
	Then, for each $j$,
	\begin{align}\label{eq:derivativeProbit}
		\frac{\partial}{\partial \mathrm{x}_j} \Pr(D = 1 \mid Z = z, \zeta = \mathrm{x}) = \pi_j \phi(z^\top \gamma_{zj} + \mathrm{x}_j \gamma_{\zeta j}) \gamma_{\zeta j},
	\end{align}
	where $\phi$ denotes the standard normal density. 
	Hence, for another realization $\mathrm{x}' = (\mathrm{x}_1', \dots, \mathrm{x}_S')^\top$ of $\zeta$ such that $|\mathrm{x}_j| \neq |\mathrm{x}_j'|$, we have
	\begin{align*}
		\frac{\partial \Pr(D = 1 \mid Z = z, \zeta = \mathrm{x}) / \partial \mathrm{x}_j }{\partial \Pr(D = 1 \mid Z = z, \zeta = \mathrm{x}') / \partial \mathrm{x}_j' } 
		&= \frac{\phi(z^\top \gamma_{zj} + \mathrm{x}_j \gamma_{\zeta j})}{\phi(z^\top \gamma_{zj} + \mathrm{x}_j' \gamma_{\zeta j})}\\
		&= \exp \left( \frac{1}{2} \left[ (z^\top \gamma_{zj} + \mathrm{x}_j' \gamma_{\zeta j})^2 - (z^\top \gamma_{zj} + \mathrm{x}_j \gamma_{\zeta j})^2 \right]\right)\\
		&= \exp \left( \frac{1}{2} \left[ \left[ (\mathrm{x}_j')^2 - \mathrm{x}_j^2 \right]  \gamma_{\zeta j}^2 + 2(\mathrm{x}_j' - \mathrm{x}_j) z^\top \gamma_{zj} \gamma_{\zeta j} \right] \right).
	\end{align*}
	This implies that we can obtain the following linear equations with the parameters $\gamma_{\zeta j}^2$ and $z^\top \gamma_{zj} \gamma_{\zeta j}$:
	\begin{align*}
		2 \log \left[ \frac{\partial \Pr(D = 1 \mid Z = z, \zeta = \mathrm{x}) / \partial \mathrm{x}_j }{\partial \Pr(D = 1 \mid Z = z, \zeta = \mathrm{x}') / \partial \mathrm{x}_j' } \right]  & = \left[ (\mathrm{x}_j')^2 - \mathrm{x}_j^2 \right]  \gamma_{\zeta j}^2 + 2(\mathrm{x}_j' - \mathrm{x}_j) z^\top \gamma_{zj} \gamma_{\zeta j}, \\
		2 \log \left[ \frac{\partial \Pr(D = 1 \mid Z = z, \zeta = \mathrm{x}'') / \partial \mathrm{x}_j'' }{\partial \Pr(D = 1 \mid Z = z, \zeta = \mathrm{x}''') / \partial \mathrm{x}_j''' } \right]  & = \left[ (\mathrm{x}_j''')^2 - \mathrm{x}_j''^2 \right]  \gamma_{\zeta j}^2 + 2(\mathrm{x}_j''' - \mathrm{x}_j'') z^\top \gamma_{zj} \gamma_{\zeta j}.
	\end{align*}
	Note that the left-hand side terms can be identified from data.
	Thus, $\gamma_{\zeta j}^2$ and $z^\top \gamma_{zj} \gamma_{\zeta j}$ for each $z \in \text{supp}[Z]$ are identified by solving the system of linear equations under Assumption \ref{as:exomember}(iii).
	Note that $\gamma_{\zeta j}$ is identified from this result since the sign of $\gamma_{\zeta j}$ is known from \eqref{eq:derivativeProbit} (for this, notice that $\pi_j \phi$ is positive).
	Further, by Assumption \ref{as:exomember}(iv), $\gamma_{zj} \gamma_{\zeta j}$ is identified from the identification of $z^\top \gamma_{zj} \gamma_{\zeta j}$, and so is $\gamma_{zj}$.
	Finally, $\pi_j$ is also identified from \eqref{eq:derivativeProbit}.
	The above argument holds for any $j$, implying the identification of all $\gamma_{\zeta}$, $\gamma_{z}$, and $\pi$.
\end{proof}

\subsubsection{Endogenous membership with covariate-dependent membership probability}\label{subsubsec:endmember}

In line with the setup in Subsection \ref{subsec:hetero}, we consider the following finite mixture model with potentially endogenous group membership:

\begin{assumption}\label{as:endmember}\hfil
	\begin{enumerate}[(i)]
		\item The group membership and the treatment choice are determined by 
		\begin{align*}
			s &= 2 - \mathbf{1}\{ Z^\top \alpha_z + W_1 \alpha_w \ge \epsilon^s \},\\
			D &= \mathbf{1} \{ Z^\top \gamma_{zj} + \zeta_j \gamma_{\zeta j} \ge \epsilon_j^D \} \quad \text{if $s = j$,}
		\end{align*}
		where $Z \in \mathbb{R}^{\dim(Z)}$ is a vector of common covariates among the groups, $\zeta = (\zeta_1, \zeta_2) \in \mathbb{R}^2$ are group-specific continuous IVs, $W_1 \in \mathbb{R}$ is a continuous IV which affects group membership only, the error terms $(\epsilon^s, \epsilon_j^D)$ follow the standard bivariate normal distribution with correlation parameter $\rho_j$ independently of $(Z, \zeta, W_1)$, and $\gamma_{\zeta j} \neq 0$ for all $j \in \{1, 2\}$.
		\item Conditional on $(Z, \zeta)$, $W_1$ is distributed on the whole $\mathbb{R}$.
		The sign of $\alpha_w$ is known to be positive.
		\item For each $z \in \text{supp}[Z]$ and any sufficiently large $w_1 \in \mathbb{R}$, there exist $\mathrm{x}, \mathrm{x}', \mathrm{x}'', \mathrm{x}''' \in \text{supp}[\zeta \mid Z = z, W_1 = w_1]$ such that $|\mathrm{x}_j| \neq |\mathrm{x}_j'|$, $|\mathrm{x}_j''| \neq |\mathrm{x}_j'''|$, and $\mathrm{x}_j + \mathrm{x}_j' \neq \mathrm{x}_j'' + \mathrm{x}_j'''$ for all $j$.
		\item $(Z, W_1)$ include a constant and do not lie in a proper linear subspace of $\mathbb{R}^{\dim(Z)}$ a.s.
	\end{enumerate}
\end{assumption}

Assumption \ref{as:endmember}(i) allows for the dependence between $\epsilon_s$ and $\epsilon_j^D$ so that the group membership can be endogenous, although it requires the existence of a continuous IV that affects only the group membership for handling the endogeneity.
We use Assumption \ref{as:endmember}(ii) to separately identify the correlation parameter $\rho_j$ and the coefficients $\gamma_{zj}$ and $\gamma_{\zeta j}$ based on an identification-at-infinity argument, similar to the identification strategy often employed in the game econometrics literature (cf. \citealp{tamer2003incomplete}).
Assumptions \ref{as:endmember}(iii)--(iv) are analogous to Assumptions \ref{as:exomember}(iii)--(iv).

\begin{theorem}\label{thm:endmember}
	Under Assumption \ref{as:endmember}, the coefficients $\alpha_z$, $\alpha_w$, $(\gamma_{z1}, \gamma_{z2})$, and $(\gamma_{\zeta 1}, \gamma_{\zeta 2})$ and the correlation parameters $(\rho_1, \rho_2)$ are identified.
\end{theorem}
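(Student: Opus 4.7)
The plan is to identify the eight parameters through successive identification-at-infinity arguments: first use $W_1$ to isolate one latent group at a time, then use the group-specific instruments $\zeta_1, \zeta_2$ to shut down the bivariate-normal selection bias governed by $\rho_1, \rho_2$. Conditioning on $s$ in Assumption \ref{as:endmember}(i) and marginalizing jointly over $(\epsilon^s, \epsilon_j^D)$ yields the closed form
\begin{align*}
\Pr(D = 1 \mid Z, \zeta, W_1) = \Phi_2(A, B_1; \rho_1) + \Phi(B_2) - \Phi_2(A, B_2; \rho_2),
\end{align*}
where $A \coloneqq Z^\top \alpha_z + W_1 \alpha_w$, $B_j \coloneqq Z^\top \gamma_{zj} + \zeta_j \gamma_{\zeta j}$, and $\Phi_2(\cdot,\cdot;\rho)$ is the standard bivariate normal CDF with correlation $\rho$. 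This is the workhorse formula for the entire proof.

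First I would identify $(\gamma_{z1}, \gamma_{\zeta 1})$. Because $\alpha_w > 0$ is known and $W_1$ has full support on $\mathbb{R}$ conditional on $(Z, \zeta)$ by Assumption \ref{as:endmember}(ii), letting $w_1 \to +\infty$ drives $A \to +\infty$, so $\Phi_2(A, B_j; \rho_j) \to \Phi(B_j)$ and the conditional probability collapses to $\Phi(B_1)$. This limit is precisely a pure single-group Probit (Theorem \ref{thm:exomember} with one group), so the derivative-ratio construction from that proof, applied to the four points $\mathrm{x}, \mathrm{x}', \mathrm{x}'', \mathrm{x}'''$ guaranteed by Assumption \ref{as:endmember}(iii), yields $\gamma_{\zeta 1}^2$ and $z^\top \gamma_{z1}\, \gamma_{\zeta 1}$; the sign of $\gamma_{\zeta 1}$ comes from the sign of the partial derivative with respect to $\zeta_1$, and the rank condition in Assumption \ref{as:endmember}(iv) pins down $\gamma_{z1}$. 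A symmetric limit $w_1 \to -\infty$, which gives the limiting probability $\Phi(B_2)$, identifies $(\gamma_{z2}, \gamma_{\zeta 2})$.

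With $B_1$ and $B_2$ now observable functionals of $(Z, \zeta)$, I would identify $(\alpha_z, \alpha_w)$ by choosing directions of $\zeta_1$ and $\zeta_2$ (feasible since the signs of $\gamma_{\zeta 1}, \gamma_{\zeta 2}$ are now known) along which $B_1 \to +\infty$ and $B_2 \to -\infty$ simultaneously. The first drives $\Phi_2(A, B_1; \rho_1) \to \Phi(A)$ and the second drives $\Phi(B_2)$ and $\Phi_2(A, B_2; \rho_2)$ to zero, so $\Pr(D = 1 \mid \cdot) \to \Phi(A)$; inverting $\Phi$ identifies $A$ pointwise, and Assumption \ref{as:endmember}(iv) delivers $(\alpha_z, \alpha_w)$. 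With $A$ in hand, driving $B_2 \to -\infty$ alone leaves $\Pr(D = 1 \mid \cdot) \to \Phi_2(A, B_1; \rho_1)$, a known function of the single unknown $\rho_1$; strict monotonicity of $\Phi_2(a, b; \cdot)$ in its correlation argument identifies $\rho_1$. The symmetric limit $B_1 \to -\infty$ gives $\Pr(D = 1 \mid \cdot) \to \Phi(B_2) - \Phi_2(A, B_2; \rho_2)$ and identifies $\rho_2$.

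The hardest part will be rigorously justifying the limit operations. The $W_1 \to +\infty$ direction is handled cleanly by Assumptions \ref{as:endmember}(ii)--(iii), but the symmetric $W_1 \to -\infty$ step requires an analogous four-point support condition (implicit or symmetric to Assumption \ref{as:endmember}(iii)) so that the Theorem \ref{thm:exomember}-style derivative-ratio reduction still goes through. The later $\zeta_j \to \pm \infty$ limits in the identification of $(\alpha_z, \alpha_w)$ and $(\rho_1, \rho_2)$ additionally require $\text{supp}[\zeta_j \mid Z, W_1]$ to be unbounded in the relevant directions, which is in the spirit of the continuous-instrument assumption but merits explicit verification. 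A secondary bookkeeping issue is tracking signs throughout so that the intended one-sided limits of $B_1$ and $B_2$ can be attained simultaneously under the identified sign pattern of $(\gamma_{\zeta 1}, \gamma_{\zeta 2})$.
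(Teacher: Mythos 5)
Your first step (drive $A = Z^\top\alpha_z + W_1\alpha_w \to +\infty$ via $w_1$ to collapse the mixture to $\Phi(B_1)$, then run the derivative-ratio argument of Theorem \ref{thm:exomember} on the four support points) is exactly the paper's route to $(\gamma_{z1},\gamma_{\zeta 1})$, and the $w_1 \to -\infty$ limit is what the paper means by the ``symmetric'' case for $(\gamma_{z2},\gamma_{\zeta 2})$. The gap is in the second half. Your identification of $(\alpha_z,\alpha_w)$ and $(\rho_1,\rho_2)$ rests on sending $B_1 \to +\infty$ and $B_2 \to -\infty$ by moving $\zeta_1,\zeta_2$, but Assumption \ref{as:endmember} gives large support only for $W_1$ (part (ii)); for $\zeta$ it guarantees merely the existence of four points satisfying the inequalities in part (iii), not an unbounded support. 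You flag this yourself, but flagging it does not close it: under the stated assumptions the limits $\zeta_j \to \pm\infty$ are simply not available, so the steps that invert $\Phi$ at $\Phi(A)$ and exploit monotonicity of $\Phi_2(a,b;\cdot)$ in $\rho$ cannot be executed.

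The paper avoids any large-support requirement on $\zeta$ by exploiting the factorized derivative
\begin{align*}
\frac{\partial}{\partial \mathrm{x}_1} p(z,\mathrm{x},w_1) = \gamma_{\zeta 1}\,\phi\bigl(z^\top\gamma_{z1}+\mathrm{x}_1\gamma_{\zeta 1}\bigr)\,\Phi\left( \frac{t(z,w_1)-\rho_1\bigl[z^\top\gamma_{z1}+\mathrm{x}_1\gamma_{\zeta 1}\bigr]}{\sqrt{1-\rho_1^2}} \right),
\end{align*}
in which differentiating with respect to $\mathrm{x}_1$ already annihilates the group-2 term (since $B_2$ does not depend on $\mathrm{x}_1$), so no limit in $\zeta_2$ is ever needed. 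Once $(\gamma_{z1},\gamma_{\zeta 1})$ are known, dividing by $\gamma_{\zeta 1}\phi(\cdot)$ and applying $\Phi^{-1}$ yields the identified quantity $L(z,\mathrm{x},w_1) = \bigl(t(z,w_1)-\rho_1[z^\top\gamma_{z1}+\mathrm{x}_1\gamma_{\zeta 1}]\bigr)/\sqrt{1-\rho_1^2}$ at \emph{finite} covariate values; differencing $L$ across two of the four $\mathrm{x}$-points isolates $\rho_1/\sqrt{1-\rho_1^2}$ and hence $\rho_1$, and the remaining linear equation $\sqrt{1-\rho_1^2}\,L + \rho_1[z^\top\gamma_{z1}+\mathrm{x}_1\gamma_{\zeta 1}] = z^\top\alpha_z + w_1\alpha_w$ identifies $(\alpha_z,\alpha_w)$ from the rank condition (iv). To repair your proof you would either have to add an explicit unbounded-support assumption on $\zeta\mid(Z,W_1)$, or switch to this derivative-plus-$\Phi^{-1}$ argument for the last two blocks of parameters.
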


\begin{proof}
	We first show that $\gamma_{z1}$ and $\gamma_{\zeta 1}$ are identified (the identification of $\gamma_{z 2}$ and $\gamma_{\zeta 2}$ is symmetric and thus omitted). 
	Let $p(z, \mathrm{x}, w_1) \coloneqq \Pr(D = 1 \mid Z = z, \zeta = \mathrm{x}, W_1 = w_1)$ and $t(z,w_1) \coloneqq  z^\top \alpha_z + w_1 \alpha_w$.
	Under Assumption \ref{as:endmember}(i), observe that
	\begin{align*}
		p(z, \mathrm{x}, w_1)
		& = \Pr(D = 1 \mid s = 1, Z = z, \zeta = \mathrm{x}, W_1 = w_1) \Pr(s = 1 \mid Z = z, \zeta = \mathrm{x}, W_1 = w_1) \\
		& \quad + \Pr(D = 1 \mid s = 2, Z = z, \zeta = \mathrm{x}, W_1 = w_1) \Pr(s = 2 \mid Z = z, \zeta = \mathrm{x}, W_1 = w_1) \\
		& = \Pr(\epsilon_1^D \le z^\top \gamma_{z1} + \mathrm{x}_1 \gamma_{\zeta 1} \mid \epsilon^s \le t(z,w_1) ) \Pr( \epsilon^s \le t(z,w_1) ) \\
		& \quad + \Pr(\epsilon_2^D \le z^\top \gamma_{z2} + \mathrm{x}_2 \gamma_{\zeta 2} \mid \epsilon^s > t(z,w_1)) \Pr( \epsilon^s > t(z,w_1)).
	\end{align*}
	It is easy to see that the conditional density of $\epsilon_1^D$ given $\epsilon^s \le t(z,w_1)$ is obtained by
	\begin{align*}
		f_{\epsilon_1^D}(e_1 \mid \epsilon^s \le t(z,w_1)) 
		= \frac{f_{\epsilon_1^D}(e_1)}{\Pr(\epsilon^s \le t(z,w_1))} \Pr(\epsilon^s \le t(z,w_1) \mid \epsilon_1^D = e_1)  = \frac{\phi(e_1)}{\Phi(t(z,w_1))} \Phi\left( \frac{t(z,w_1) - \rho_1 e_1}{ \sqrt{1 - \rho_1^2}} \right).
	\end{align*}
	Similarly, the conditional density of $\epsilon_2^D$ given $\epsilon^s > t(z,w_1)$ is obtained by
	\begin{align*}
		f_{\epsilon_2^D}(e_2 \mid \epsilon^s > t(z,w_1)) 
		= \frac{\phi(e_2)}{1 - \Phi(t(z,w_1))} \left( 1 - \Phi\left( \frac{t(z,w_1) - \rho_2 e_2}{ \sqrt{1 - \rho_2^2} } \right) \right).
	\end{align*}
	Thus, we have
	\begin{align*}
		p(z, \mathrm{x}, w_1)
		= \int_{-\infty}^{z^\top \gamma_{z1} + \mathrm{x}_1 \gamma_{\zeta 1}} \phi(e_1) \Phi\left( \frac{t(z,w_1) - \rho_1 e_1}{\sqrt{1 - \rho_1^2} } \right) \mathrm{d}e_1
		+ \int_{-\infty}^{z^\top \gamma_{z2} + \mathrm{x}_2 \gamma_{\zeta 2}} \phi(e_2) \left( 1 - \Phi\left( \frac{t(z,w_1) - \rho_2 e_2}{\sqrt{1 - \rho_2^2} } \right) \right) \mathrm{d}e_2.
	\end{align*}
	Taking the partial derivative with respect to $\mathrm{x}_1$ leads to
	\begin{align}\label{eq:derivativeProbit2}
		\frac{\partial}{\partial \mathrm{x}_1} p(z, \mathrm{x}, w_1)
		= \gamma_{\zeta 1} \cdot \phi(z^\top \gamma_{z1} + \mathrm{x}_1 \gamma_{\zeta 1}) \Phi\left( \frac{t(z,w_1) - \rho_1 [z^\top \gamma_{z1} + \mathrm{x}_1 \gamma_{\zeta 1}]}{\sqrt{1 - \rho_1^2}} \right).
	\end{align}
	Then, under Assumption \ref{as:endmember}(ii), we have
	\begin{align*}
		\lim_{w_1 \to \infty} \frac{\partial}{\partial \mathrm{x}_1} p(z, \mathrm{x}, w_1) = \gamma_{\zeta 1} \cdot \phi(z^\top \gamma_{z1} + \mathrm{x}_1 \gamma_{\zeta 1}).
	\end{align*}
	For another realization $\mathrm{x}'$ of $\zeta$ such that $|\mathrm{x}_1| \neq |\mathrm{x}_1'|$, we have
	\begin{align*}
		&\frac{\lim_{w_1 \to \infty} \partial p(z, \mathrm{x}, w_1) / \partial \mathrm{x}_1}{\lim_{w_1 \to \infty} \partial p(z, \mathrm{x}', w_1) / \partial \mathrm{x}_1'} 
		= \exp \left( \frac{1}{2} \left[ \left[ (\mathrm{x}_1')^2 - \mathrm{x}_1^2 \right]  \gamma_{\zeta 1}^2 + 2(\mathrm{x}_1' - \mathrm{x}_1) z^\top \gamma_{z1} \gamma_{\zeta 1} \right] \right)\\
		& \Longrightarrow 2 \log \left[ \frac{\lim_{w_1 \to \infty} \partial p(z, \mathrm{x}, w_1) / \partial \mathrm{x}_1 }{\lim_{w_1 \to \infty} \partial p(z, \mathrm{x}', w_1) / \partial \mathrm{x}_1' } \right]  = \left[ (\mathrm{x}_1')^2 - \mathrm{x}_1^2 \right]  \gamma_{\zeta 1}^2 + 2(\mathrm{x}_1' - \mathrm{x}_1) z^\top \gamma_{z1} \gamma_{\zeta 1}.
	\end{align*}
	Thus, under Assumptions \ref{as:endmember}(iii)--(iv), the same argument as in the proof of Theorem \ref{thm:exomember} gives the identification of $\gamma_{z 1}$ and $\gamma_{\zeta 1}$.
	
	To examine identification of $\rho_1$, we rearrange \eqref{eq:derivativeProbit2} as follows:
	\begin{align} \label{eq:defL}
		\begin{split}
			& \frac{1}{\gamma_{\zeta 1} \cdot \phi(z^\top \gamma_{z1} + \mathrm{x}_1 \gamma_{\zeta 1})} \left[ \frac{\partial}{\partial \mathrm{x}_1} p(z, \mathrm{x}, w_1) \right] = \Phi\left( \frac{t(z,w_1) - \rho_1 [z^\top \gamma_{z1} + \mathrm{x}_1 \gamma_{\zeta 1}]}{\sqrt{1 - \rho_1^2}} \right) \\
			& \Longrightarrow \underbrace{\Phi^{-1} \left( \frac{1}{\gamma_{\zeta 1} \cdot \phi(z^\top \gamma_{z1} + \mathrm{x}_1 \gamma_{\zeta 1})} \left[ \frac{\partial}{\partial \mathrm{x}_1} p(z, \mathrm{x}, w_1) \right] \right)}_{\eqqcolon L(z, \mathrm{x}, w_1)} = \frac{t(z,w_1) - \rho_1 [z^\top \gamma_{z1} + \mathrm{x}_1 \gamma_{\zeta 1}]}{\sqrt{1 - \rho_1^2}}.
		\end{split}
	\end{align}
	Then, under Assumption \ref{as:endmember}(ii), we have
	\begin{align*}
		L(z, \mathrm{x}, w_1) - L(z, \mathrm{x}', w_1) = \frac{\rho_1(\mathrm{x}_1' - \mathrm{x}_1) \gamma_{\zeta 1}}{\sqrt{1 - \rho_1^2}} 
		\Longrightarrow \frac{L(z, \mathrm{x}, w_1) - L(z, \mathrm{x}', w_1)}{(\mathrm{x}_1' - \mathrm{x}_1) \gamma_{\zeta 1}}  = \frac{\rho_1}{\sqrt{1 - \rho_1^2}}.
	\end{align*}
	Noting that the left-hand side is already identified, by solving the above equation, we can identify $\rho_1$.
	The identification of $\rho_2$ can be established in the same manner.
	
	Finally, to identify $\alpha_w$ and $\alpha_z$, we further rearrange \eqref{eq:defL} as follows:
	\begin{align*}
		\sqrt{1 - \rho_1^2} L(z, \mathrm{x}, w_1) + \rho_1[z^\top \gamma_{z1} + \mathrm{x}_1 \gamma_{\zeta 1}] = z^\top \alpha_z + w_1 \alpha_w.
	\end{align*}
	Noting that the left-hand side is an identified quantity, the identification of $\alpha_z$ and $\alpha_w$ is achieved by Assumption \ref{as:endmember}(iv).
\end{proof}

\subsection{Identification of PRTE}\label{subsec:prte}

Consider a counterfactual policy that changes $\mathbf{P}$ but does not affect $Y^{(d)}$, $X$, $\epsilon_j^D$, and $s$.
Let $\mathbf{P}^\star = (P_1^\star, \dots, P_S^\star)$ be a counterfactual version of $\mathbf{P}$ whose distribution is known and $D^\star$ be the treatment status under $\mathbf{P}^\star$.
As in Assumption \ref{as:IV}(i), we assume that $\mathbf{P}^\star$ is independent of $(\epsilon^{(d)}, \epsilon_j^D, s)$ given $X$.
Denote the outcome after the policy as $Y^\star$.
The group-wise PRTE is defined as 
\begin{align*}
	\text{PRTE}_j \coloneqq \bE [Y^\star | X = x, s = j] - \bE [Y | X = x, s = j].
\end{align*}
We first focus on the identification of $\bE [Y^\star | X = x,  s = j]$.
By Assumptions \ref{as:IV}(i) and \ref{as:membership}(i), 
\begin{align*}
	\bE [D^\star Y^{(1)} | X = x, \mathbf{P}^\star = \mathbf{p}^\star, s = j ] = \int_0^1 \mathbf{1}\{p_j^\star \ge v_j\} m_j^{(1)}(x, v_j) \mathrm{d}v_j.
\end{align*}
Similarly, we can show that 
\begin{align*}
	\bE [(1 - D^\star) Y^{(0)} | X = x, \mathbf{P}^\star = \mathbf{p}^\star, s = j ] = \int_0^1 \mathbf{1}\{p_j^\star < v_j\} m_j^{(0)}(x, v_j) \mathrm{d}v_j.
\end{align*}
As a result, by the law of iterated expectations, we obtain
\begin{align*}
	\bE [Y^\star | X = x,  s = j] 
	&=  \bE \left[ \bE [D^\star Y^{(1)} + (1 - D^\star)Y^{(0)} | X = x, \mathbf{P}^\star, s = j] \Big|  X = x, s = j \right]\\
	&= \int_0^1 \Big( \Pr(P_j^\star \ge v_j | X = x) m_j^{(1)}(x, v_j) + \Pr(P_j^\star < v_j | X = x) m_j^{(0)}(x, v_j) \Big) \mathrm{d}v_j.
\end{align*}
Further, $\bE [Y | X = x,  s = j]$ can be identified in an exactly analogous manner.
Thus, we can identify the PRTE through the MTR functions. 

\section{Appendix: Additional Mote Carlo Experiments} \label{sec:appendix:simulation}

\subsection{Other specifications of the finite mixture model}

In this experiment, in addition to the finite mixture Probit specification in Section \ref{sec:simulation}, we consider two alternative specifications: (i) the finite mixture Logit model, where $\epsilon_j^D \sim \Lambda(0, 1)$ for both $j \in \{ 1, 2 \}$, where $\Lambda(0, 1)$ denotes the standard logistic distribution, and (ii) the Probit-Logit mixture model, where $\epsilon_1^D \sim N(0, 1)$ and $\epsilon_2^D \sim \Lambda(0, 1)$. 

The simulation results are presented in Tables \ref{table:mc2} and \ref{table:mc3}.
The performances of the MTE estimation and ML estimation are satisfactory, particularly when $n = 4000$.
These results might imply that the normal distribution assumption is not necessary for identifying the parameters of the finite mixture model.

\subsection{Violation of Assumption \ref{as:IV}(ii)}

We consider a situation in which there are no group-specific IVs such that Assumption \ref{as:IV}(ii) is actually violated.
Specifically, we generate the treatment variable $D = \mathbf{1}\{ Z^\top \gamma_j \ge \epsilon_j^D \}$ for $s = j$, where $Z = (1, X_1, \zeta_1, \zeta_2)^\top$, $X_1 \sim N(0, 1)$, $\zeta_j \sim N(0, 1)$, and $\epsilon_j^D \sim N(0, 1)$ for both $j \in \{ 1, 2 \}$.
Notice that $Z$ is common to both groups.
We set $\gamma_1  = (0, -0.5, 0.5, -0.5)^\top$ and $\gamma_2 = (0, 0.5, -0.5, 0.5)^\top$.
We conduct the same estimation procedure as in Section \ref{sec:simulation}, in which we treat $\zeta_j$ improperly as a group-$j$-specific IV.

Table \ref{table:mc4} presents the simulation results.
We focus on feasible MTE estimation because the infeasible one is of no interest in this case.
Because $Z$ does not possess identification power for the group-wise MTE, the MTE estimation exhibits poor performance in this situation, which corroborates our theory.

\subsection{Misspecified number of unobserved groups}\label{subsec:misspec}

Considering the same two-group model as in Section \ref{sec:simulation}, we estimate the MTE and the finite mixture Probit model with improperly setting $S = 1$ or $3$.
Specifically, we estimate the misspecified treatment choice equation $D = \mathbf{1} \{ Z_j^\top \gamma_j \ge \epsilon_j^D \}$ for $s = j$, where $Z_j = (1, X_1, \zeta_j)^\top$, $\zeta_j \sim N(0, 1)$, and $\epsilon_j^D \sim N(0, 1)$ for $j \in \{ 1, \dots, S \}$ with $S \in \{1, 3\}$.
In addition, for each estimated model (including the correct model), we compute the values of AIC and BIC to examine whether we can correctly select the model with $S = 2$ based on these information criteria.

The estimation results are presented in Tables \ref{table:mc5} and \ref{table:mc6}.
As the true value of each estimator cannot be computed in this situation, we report the mean and standard deviation (SD) of each estimator, instead of its bias and RMSE.
For comparison, Table \ref{table:true1} provides each parameter value for the correctly specified model under $S = 2$.
As the infeasible MTE estimation is of no interest in this case, we provide the simulation results only for the feasible one.
When setting $S = 1$, the results are clearly uninformative for the true model parameters (except for the common intercept in the treatment choice model $\gamma_{11}$).
In contrast, interestingly, the estimation with setting $S = 3$ seems to work for groups 1 and 2.
However, the estimates for ``group 3'' can be unreasonably large in magnitude with a large SD.
This result reflects the fact that this group is fictitious.

The result of model selection is summarized in Table \ref{table.AIC.BIC}.
From this table, interestingly, we can observe that the optimal models based on BIC are often too parsimonious when $n$ is not large.
However, if $n$ is sufficiently large, the case with $S =2$ is chosen as the best model with a sufficiently high probability in terms of both AIC and BIC.

\begin{table}[!p]
	\caption{Supplementary simulation results: the finite mixture Logit model} \label{table:mc2}
	\begin{subtable}{\textwidth}
		\caption{MTE estimation}
		{\footnotesize
			\begin{center}
				\begin{tabular}{rrrrcrrrrcrrrr}
					\hline\hline
					& \multicolumn{3}{c}{\bfseries }&\multicolumn{1}{c}{\bfseries }&\multicolumn{4}{c}{\bfseries Group 1}&\multicolumn{1}{c}{\bfseries }&\multicolumn{4}{c}{\bfseries Group 2}\tabularnewline
					\cline{6-9} \cline{11-14}
					& \multicolumn{1}{c}{$n$}&\multicolumn{1}{c}{$\tilde K$}&\multicolumn{1}{c}{ridge}&\multicolumn{1}{c}{}&\multicolumn{1}{c}{MTE1.1}&\multicolumn{1}{c}{MTE1.2}&\multicolumn{1}{c}{MTE1.3}&\multicolumn{1}{c}{MTE1.4}&\multicolumn{1}{c}{}&\multicolumn{1}{c}{MTE2.1}&\multicolumn{1}{c}{MTE2.2}&\multicolumn{1}{c}{MTE2.3}&\multicolumn{1}{c}{MTE2.4}\tabularnewline
					\hline
					\multicolumn{14}{l}{\bfseries Bias for the feasible estimator}\tabularnewline
					&$1000$&$1$&$0$&&$ 0.285$&$-0.042$&$ 0.028$&$ 0.656$&&$-0.269$&$-0.127$&$ 0.212$&$ 0.379$\tabularnewline
					&$1000$&$1$&$1$&&$-0.150$&$-0.058$&$-0.126$&$-0.223$&&$-0.340$&$-0.153$&$-0.089$&$-0.190$\tabularnewline
					&$4000$&$1$&$0$&&$ 0.088$&$ 0.014$&$-0.052$&$ 0.175$&&$ 0.081$&$ 0.077$&$ 0.073$&$ 0.192$\tabularnewline
					&$4000$&$1$&$1$&&$-0.122$&$ 0.003$&$-0.067$&$-0.119$&&$-0.269$&$-0.055$&$ 0.003$&$-0.111$\tabularnewline
					&$4000$&$2$&$0$&&$ 0.153$&$-0.016$&$-0.017$&$ 0.138$&&$ 0.130$&$ 0.135$&$ 0.092$&$ 0.390$\tabularnewline
					&$4000$&$2$&$1$&&$-0.152$&$-0.007$&$-0.048$&$-0.155$&&$-0.261$&$-0.042$&$-0.004$&$-0.145$\tabularnewline
					\hline
					\multicolumn{14}{l}{\bfseries Bias for the infeasible estimator}\tabularnewline
					&$1000$&$1$&$0$&&$ 0.060$&$ 0.001$&$-0.009$&$ 0.040$&&$-0.221$&$-0.031$&$ 0.121$&$ 0.236$\tabularnewline
					&$1000$&$1$&$1$&&$-0.371$&$-0.136$&$-0.130$&$-0.352$&&$-0.371$&$-0.154$&$-0.133$&$-0.309$\tabularnewline
					&$4000$&$1$&$0$&&$-0.006$&$ 0.011$&$-0.010$&$-0.071$&&$ 0.018$&$ 0.002$&$ 0.001$&$ 0.018$\tabularnewline
					&$4000$&$1$&$1$&&$-0.265$&$-0.036$&$-0.031$&$-0.249$&&$-0.279$&$-0.069$&$-0.056$&$-0.238$\tabularnewline
					&$4000$&$2$&$0$&&$-0.027$&$ 0.027$&$-0.035$&$-0.026$&&$-0.047$&$ 0.033$&$-0.055$&$ 0.096$\tabularnewline
					&$4000$&$2$&$1$&&$-0.323$&$ 0.022$&$-0.060$&$-0.268$&&$-0.316$&$-0.018$&$-0.071$&$-0.271$\tabularnewline
					\hline
					\multicolumn{14}{l}{\bfseries RMSE for the feasible estimator}\tabularnewline
					&$1000$&$1$&$0$&&$ 3.533$&$ 1.549$&$ 1.758$&$ 4.657$&&$13.014$&$ 5.375$&$ 6.792$&$14.376$\tabularnewline
					&$1000$&$1$&$1$&&$ 0.785$&$ 0.708$&$ 0.735$&$ 0.908$&&$ 1.105$&$ 1.046$&$ 1.027$&$ 1.090$\tabularnewline
					&$4000$&$1$&$0$&&$ 1.707$&$ 0.624$&$ 0.724$&$ 2.072$&&$ 2.600$&$ 1.144$&$ 1.354$&$ 3.542$\tabularnewline
					&$4000$&$1$&$1$&&$ 0.652$&$ 0.395$&$ 0.440$&$ 0.864$&&$ 0.798$&$ 0.607$&$ 0.570$&$ 0.874$\tabularnewline
					&$4000$&$2$&$0$&&$ 2.474$&$ 0.842$&$ 0.980$&$ 2.921$&&$ 3.481$&$ 1.441$&$ 1.667$&$ 4.962$\tabularnewline
					&$4000$&$2$&$1$&&$ 0.651$&$ 0.562$&$ 0.699$&$ 0.877$&&$ 0.845$&$ 0.851$&$ 0.768$&$ 0.946$\tabularnewline
					\hline
					\multicolumn{14}{l}{\bfseries RMSE for the infeasible estimator}\tabularnewline
					&$1000$&$1$&$0$&&$ 3.305$&$ 0.942$&$ 1.204$&$ 4.328$&&$ 4.868$&$ 1.618$&$ 1.397$&$ 5.686$\tabularnewline
					&$1000$&$1$&$1$&&$ 0.805$&$ 0.613$&$ 0.660$&$ 1.082$&&$ 1.052$&$ 0.902$&$ 0.891$&$ 1.065$\tabularnewline
					&$4000$&$1$&$0$&&$ 1.581$&$ 0.448$&$ 0.574$&$ 1.964$&&$ 2.558$&$ 0.836$&$ 0.691$&$ 2.787$\tabularnewline
					&$4000$&$1$&$1$&&$ 0.714$&$ 0.342$&$ 0.388$&$ 0.986$&&$ 0.909$&$ 0.533$&$ 0.502$&$ 0.985$\tabularnewline
					&$4000$&$2$&$0$&&$ 2.143$&$ 0.944$&$ 1.033$&$ 2.827$&&$ 3.643$&$ 1.535$&$ 1.432$&$ 3.592$\tabularnewline
					&$4000$&$2$&$1$&&$ 0.713$&$ 0.653$&$ 0.749$&$ 1.059$&&$ 0.917$&$ 0.999$&$ 0.893$&$ 1.016$\tabularnewline
					\hline
				\end{tabular}
			\end{center}
			Note: The column labeled ``ridge'' indicates whether the ridge regression is used (1 for ``yes'' and 0 for ``no'').}
	\end{subtable}
	
	\bigskip \bigskip
	
	\begin{subtable}{\textwidth}
		\caption{ML estimation of the finite mixture Logit model}
		{\footnotesize
			\begin{center}
				\begin{tabular}{rrcrrrcrrrcrr}
					\hline\hline
					\multicolumn{2}{c}{\bfseries }&&\multicolumn{3}{c}{\bfseries Group 1}&&\multicolumn{3}{c}{\bfseries Group 2}&&\multicolumn{2}{c}{\bfseries Membership}\tabularnewline
					\cline{4-6} \cline{8-10} \cline{12-13}
					&\multicolumn{1}{c}{$n$}&&\multicolumn{1}{c}{$\gamma_{11}$}&\multicolumn{1}{c}{$\gamma_{12}$}&\multicolumn{1}{c}{$\gamma_{13}$}&&\multicolumn{1}{c}{$\gamma_{21}$}&\multicolumn{1}{c}{$\gamma_{22}$}&\multicolumn{1}{c}{$\gamma_{23}$}&&\multicolumn{1}{c}{$\pi_1$}&\multicolumn{1}{c}{$\pi_2$}\tabularnewline
					\hline
					\multicolumn{13}{l}{\textbf{Bias}}\tabularnewline
					&$1000$&&$-0.005$&$-0.119$&$0.195$&&$-0.003$&$0.118$&$-0.246$&&$-0.025$&$0.025$\tabularnewline
					&$4000$&&$-0.003$&$-0.027$&$0.056$&&$-0.002$&$0.037$&$-0.103$&&$-0.008$&$0.008$\tabularnewline
					\hline
					\multicolumn{13}{l}{\textbf{RMSE}}\tabularnewline
					&$1000$&&$ 0.589$&$ 0.594$&$0.425$&&$ 0.778$&$0.849$&$ 0.567$&&$ 0.147$&$0.147$\tabularnewline
					&$4000$&&$ 0.306$&$ 0.306$&$0.178$&&$ 0.444$&$0.440$&$ 0.281$&&$ 0.115$&$0.115$\tabularnewline
					\hline
				\end{tabular}
			\end{center}
		}
	\end{subtable}
\end{table}

\begin{table}[!p]
	\caption{Supplementary simulation results: the Probit-Logit mixture model} \label{table:mc3}
	\begin{subtable}{\textwidth}
		\caption{MTE estimation}
		{\footnotesize
			\begin{center}
				\begin{tabular}{rrrrcrrrrcrrrr}
					\hline\hline
					& \multicolumn{3}{c}{\bfseries }&\multicolumn{1}{c}{\bfseries }&\multicolumn{4}{c}{\bfseries Group 1}&\multicolumn{1}{c}{\bfseries }&\multicolumn{4}{c}{\bfseries Group 2}\tabularnewline
					\cline{6-9} \cline{11-14}
					& \multicolumn{1}{c}{$n$}&\multicolumn{1}{c}{$\tilde K$}&\multicolumn{1}{c}{ridge}&\multicolumn{1}{c}{}&\multicolumn{1}{c}{MTE1.1}&\multicolumn{1}{c}{MTE1.2}&\multicolumn{1}{c}{MTE1.3}&\multicolumn{1}{c}{MTE1.4}&\multicolumn{1}{c}{}&\multicolumn{1}{c}{MTE2.1}&\multicolumn{1}{c}{MTE2.2}&\multicolumn{1}{c}{MTE2.3}&\multicolumn{1}{c}{MTE2.4}\tabularnewline
					\hline
					\multicolumn{14}{l}{\bfseries Bias for the feasible estimator}\tabularnewline
					&$1000$&$1$&$0$&&$ 0.330$&$-0.032$&$-0.222$&$ 0.150$&&$-0.298$&$ 0.068$&$ 0.617$&$ 1.455$\tabularnewline
					&$1000$&$1$&$1$&&$ 0.030$&$-0.018$&$-0.128$&$-0.097$&&$-0.245$&$-0.116$&$-0.070$&$-0.104$\tabularnewline
					&$4000$&$1$&$0$&&$ 0.064$&$-0.002$&$-0.090$&$ 0.029$&&$ 0.000$&$ 0.025$&$ 0.078$&$ 0.269$\tabularnewline
					&$4000$&$1$&$1$&&$-0.035$&$-0.005$&$-0.067$&$-0.036$&&$-0.278$&$-0.063$&$-0.006$&$-0.102$\tabularnewline
					&$4000$&$2$&$0$&&$ 0.107$&$ 0.012$&$-0.104$&$-0.015$&&$-0.027$&$ 0.045$&$ 0.082$&$ 0.530$\tabularnewline
					&$4000$&$2$&$1$&&$-0.043$&$ 0.023$&$-0.093$&$-0.064$&&$-0.272$&$-0.061$&$-0.014$&$-0.124$\tabularnewline
					\hline
					\multicolumn{14}{l}{\bfseries Bias for the infeasible estimator}\tabularnewline
					&$1000$&$1$&$0$&&$ 0.051$&$ 0.002$&$-0.018$&$ 0.003$&&$-0.204$&$-0.013$&$ 0.115$&$ 0.175$\tabularnewline
					&$1000$&$1$&$1$&&$-0.194$&$-0.049$&$-0.041$&$-0.165$&&$-0.442$&$-0.195$&$-0.171$&$-0.371$\tabularnewline
					&$4000$&$1$&$0$&&$ 0.016$&$ 0.010$&$ 0.001$&$-0.013$&&$-0.012$&$-0.006$&$ 0.001$&$ 0.012$\tabularnewline
					&$4000$&$1$&$1$&&$-0.093$&$ 0.003$&$ 0.009$&$-0.073$&&$-0.325$&$-0.088$&$-0.069$&$-0.268$\tabularnewline
					&$4000$&$2$&$0$&&$ 0.006$&$ 0.023$&$-0.035$&$ 0.010$&&$-0.076$&$ 0.019$&$-0.045$&$ 0.081$\tabularnewline
					&$4000$&$2$&$1$&&$-0.124$&$ 0.051$&$-0.037$&$-0.066$&&$-0.377$&$-0.044$&$-0.074$&$-0.315$\tabularnewline
					\hline
					\multicolumn{14}{l}{\bfseries RMSE for the feasible estimator}\tabularnewline
					&$1000$&$1$&$0$&&$ 1.664$&$ 0.925$&$ 0.993$&$ 1.945$&&$ 8.714$&$ 4.825$&$ 8.553$&$16.465$\tabularnewline
					&$1000$&$1$&$1$&&$ 0.724$&$ 0.601$&$ 0.612$&$ 0.854$&&$ 1.015$&$ 0.992$&$ 0.980$&$ 1.022$\tabularnewline
					&$4000$&$1$&$0$&&$ 0.774$&$ 0.399$&$ 0.457$&$ 0.889$&&$ 2.598$&$ 1.083$&$ 1.202$&$ 3.642$\tabularnewline
					&$4000$&$1$&$1$&&$ 0.526$&$ 0.324$&$ 0.359$&$ 0.670$&&$ 0.712$&$ 0.576$&$ 0.550$&$ 0.859$\tabularnewline
					&$4000$&$2$&$0$&&$ 0.957$&$ 0.625$&$ 0.704$&$ 1.061$&&$ 3.663$&$ 1.433$&$ 1.523$&$ 5.105$\tabularnewline
					&$4000$&$2$&$1$&&$ 0.547$&$ 0.462$&$ 0.567$&$ 0.713$&&$ 0.744$&$ 0.860$&$ 0.773$&$ 0.894$\tabularnewline
					\hline
					\multicolumn{14}{l}{\bfseries RMSE for the infeasible estimator}\tabularnewline
					&$1000$&$1$&$0$&&$ 1.631$&$ 0.692$&$ 0.818$&$ 1.910$&&$ 3.903$&$ 1.456$&$ 1.359$&$ 5.447$\tabularnewline
					&$1000$&$1$&$1$&&$ 0.696$&$ 0.491$&$ 0.516$&$ 0.947$&&$ 0.979$&$ 0.816$&$ 0.805$&$ 0.962$\tabularnewline
					&$4000$&$1$&$0$&&$ 0.769$&$ 0.333$&$ 0.394$&$ 0.924$&&$ 1.967$&$ 0.751$&$ 0.665$&$ 2.659$\tabularnewline
					&$4000$&$1$&$1$&&$ 0.529$&$ 0.290$&$ 0.316$&$ 0.707$&&$ 0.827$&$ 0.503$&$ 0.489$&$ 0.932$\tabularnewline
					&$4000$&$2$&$0$&&$ 0.904$&$ 0.778$&$ 0.847$&$ 1.142$&&$ 2.761$&$ 1.448$&$ 1.388$&$ 3.402$\tabularnewline
					&$4000$&$2$&$1$&&$ 0.544$&$ 0.565$&$ 0.662$&$ 0.800$&&$ 0.856$&$ 0.939$&$ 0.876$&$ 0.938$\tabularnewline
					\hline
				\end{tabular}
			\end{center}
			Note: The column labeled ``ridge'' indicates whether the ridge regression is used (1 for ``yes'' and 0 for ``no'').}
	\end{subtable}
	
	\bigskip \bigskip
	
	\begin{subtable}{\textwidth}
		\caption{ML estimation of the Probit-Logit mixture model}
		{\footnotesize
			\begin{center}
				\begin{tabular}{rrcrrrcrrrcrr}
					\hline\hline
					\multicolumn{2}{c}{\bfseries }&&\multicolumn{3}{c}{\bfseries Group 1}&&\multicolumn{3}{c}{\bfseries Group 2}&&\multicolumn{2}{c}{\bfseries Membership}\tabularnewline
					\cline{4-6} \cline{8-10} \cline{12-13}
					&\multicolumn{1}{c}{$n$}&&\multicolumn{1}{c}{$\gamma_{11}$}&\multicolumn{1}{c}{$\gamma_{12}$}&\multicolumn{1}{c}{$\gamma_{13}$}&&\multicolumn{1}{c}{$\gamma_{21}$}&\multicolumn{1}{c}{$\gamma_{22}$}&\multicolumn{1}{c}{$\gamma_{23}$}&&\multicolumn{1}{c}{$\pi_1$}&\multicolumn{1}{c}{$\pi_2$}\tabularnewline
					\hline
					\multicolumn{13}{l}{\textbf{Bias}}\tabularnewline
					&$1000$&&$ 0.000$&$-0.038$&$0.192$&&$-0.016$&$0.048$&$-0.231$&&$-0.027$&$0.027$\tabularnewline
					&$4000$&&$-0.004$&$-0.008$&$0.057$&&$ 0.009$&$0.036$&$-0.066$&&$-0.016$&$0.016$\tabularnewline
					\hline
					\multicolumn{13}{l}{\textbf{RMSE}}\tabularnewline
					&$1000$&&$ 0.372$&$ 0.389$&$0.419$&&$ 0.751$&$0.870$&$ 0.575$&&$ 0.147$&$0.147$\tabularnewline
					&$4000$&&$ 0.169$&$ 0.170$&$0.169$&&$ 0.376$&$0.445$&$ 0.240$&&$ 0.105$&$0.105$\tabularnewline
					\hline
				\end{tabular}
			\end{center}
		}
	\end{subtable}
\end{table}

\begin{table}[!p]
	\caption{Supplementary simulation results: violation of Assumption \ref{as:IV}(ii)} \label{table:mc4}
	\begin{subtable}{\textwidth}
		\caption{MTE estimation}
		{\footnotesize
			\begin{center}
				\begin{tabular}{rrrrcrrrrcrrrr}
					\hline\hline
					& \multicolumn{3}{c}{\bfseries }&\multicolumn{1}{c}{\bfseries }&\multicolumn{4}{c}{\bfseries Group 1}&\multicolumn{1}{c}{\bfseries }&\multicolumn{4}{c}{\bfseries Group 2}\tabularnewline
					\cline{6-9} \cline{11-14}
					& \multicolumn{1}{c}{$n$}&\multicolumn{1}{c}{$\tilde K$}&\multicolumn{1}{c}{ridge}&\multicolumn{1}{c}{}&\multicolumn{1}{c}{MTE1.1}&\multicolumn{1}{c}{MTE1.2}&\multicolumn{1}{c}{MTE1.3}&\multicolumn{1}{c}{MTE1.4}&\multicolumn{1}{c}{}&\multicolumn{1}{c}{MTE2.1}&\multicolumn{1}{c}{MTE2.2}&\multicolumn{1}{c}{MTE2.3}&\multicolumn{1}{c}{MTE2.4}\tabularnewline
					\hline
					\multicolumn{14}{l}{\bfseries Bias for the feasible estimator}\tabularnewline
					&$1000$&$1$&$0$&&$ 5.628$&$ 3.411$&$ 3.776$&$ 6.724$&&$ 4.482$&$ 2.624$&$ 1.950$&$ 0.797$\tabularnewline
					&$1000$&$1$&$1$&&$-0.038$&$-0.146$&$-0.534$&$-1.112$&&$ 2.292$&$ 1.967$&$ 1.657$&$ 1.125$\tabularnewline
					&$4000$&$1$&$0$&&$ 4.736$&$ 2.695$&$ 2.191$&$ 4.662$&&$ 7.362$&$ 3.747$&$ 1.213$&$-1.658$\tabularnewline
					&$4000$&$1$&$1$&&$ 0.709$&$ 0.458$&$-0.485$&$-1.435$&&$ 4.746$&$ 3.405$&$ 2.277$&$ 0.492$\tabularnewline
					&$4000$&$2$&$0$&&$ 4.285$&$ 1.904$&$ 1.776$&$ 2.358$&&$ 7.704$&$ 3.883$&$ 1.225$&$-1.859$\tabularnewline
					&$4000$&$2$&$1$&&$ 0.796$&$ 0.250$&$-0.130$&$-1.211$&&$ 4.663$&$ 3.453$&$ 1.587$&$ 0.015$\tabularnewline
					\hline
					\multicolumn{14}{l}{\bfseries RMSE for the feasible estimator}\tabularnewline
					&$1000$&$1$&$0$&&$34.018$&$21.140$&$36.779$&$76.664$&&$16.030$&$14.000$&$23.558$&$38.531$\tabularnewline
					&$1000$&$1$&$1$&&$ 2.000$&$ 1.892$&$ 1.997$&$ 2.334$&&$ 3.481$&$ 3.259$&$ 3.102$&$ 2.924$\tabularnewline
					&$4000$&$1$&$0$&&$10.576$&$ 4.941$&$ 6.140$&$13.946$&&$11.333$&$ 5.810$&$ 4.683$&$ 9.351$\tabularnewline
					&$4000$&$1$&$1$&&$ 1.797$&$ 1.411$&$ 1.594$&$ 2.530$&&$ 5.301$&$ 4.099$&$ 3.279$&$ 2.501$\tabularnewline
					&$4000$&$2$&$0$&&$10.468$&$ 4.142$&$ 4.919$&$11.869$&&$11.418$&$ 5.665$&$ 5.038$&$10.252$\tabularnewline
					&$4000$&$2$&$1$&&$ 1.946$&$ 1.586$&$ 1.887$&$ 2.700$&&$ 5.220$&$ 4.530$&$ 3.388$&$ 2.491$\tabularnewline
					\hline
				\end{tabular}
			\end{center}
			Note: The column labeled ``ridge'' indicates whether the ridge regression is used (1 for ``yes'' and 0 for ``no'').}
	\end{subtable}
	
	\bigskip \bigskip
	
	\begin{subtable}{\textwidth}
		\caption{ML estimation of the finite mixture Probit model}
		{\footnotesize
			\begin{center}
				\begin{tabular}{rrcrrrcrrrcrr}
					\hline\hline
					\multicolumn{2}{c}{\bfseries }&&\multicolumn{3}{c}{\bfseries Group 1}&&\multicolumn{3}{c}{\bfseries Group 2}&&\multicolumn{2}{c}{\bfseries Membership}\tabularnewline
					\cline{4-6} \cline{8-10} \cline{12-13}
					&\multicolumn{1}{c}{$n$}&&\multicolumn{1}{c}{$\gamma_{11}$}&\multicolumn{1}{c}{$\gamma_{12}$}&\multicolumn{1}{c}{$\gamma_{13}$}&&\multicolumn{1}{c}{$\gamma_{21}$}&\multicolumn{1}{c}{$\gamma_{22}$}&\multicolumn{1}{c}{$\gamma_{23}$}&&\multicolumn{1}{c}{$\pi_1$}&\multicolumn{1}{c}{$\pi_2$}\tabularnewline
					\hline
					\multicolumn{13}{l}{\textbf{Bias}}\tabularnewline
					&$1000$&&$0.007$&$-0.193$&$-0.266$&&$ 0.012$&$ 0.272$&$0.059$&&$-0.024$&$0.024$\tabularnewline
					&$4000$&&$0.015$&$-0.008$&$-0.326$&&$-0.021$&$-0.018$&$0.228$&&$-0.027$&$0.027$\tabularnewline
					\hline
					\multicolumn{13}{l}{\textbf{RMSE}}\tabularnewline
					&$1000$&&$0.623$&$ 0.576$&$ 0.352$&&$ 0.855$&$ 0.899$&$0.480$&&$ 0.155$&$0.155$\tabularnewline
					&$4000$&&$0.344$&$ 0.302$&$ 0.337$&&$ 0.482$&$ 0.451$&$0.287$&&$ 0.129$&$0.129$\tabularnewline
					\hline
				\end{tabular}
			\end{center}
		}
	\end{subtable}
\end{table}

\begin{table}[!p]
	\caption{Supplementary simulation results: $S = 1$} \label{table:mc5}
	\begin{subtable}{\textwidth}
		\caption{MTE estimation}
		{\footnotesize  
			\begin{center}
				\begin{tabular}{rrrrcrrrr}
					\hline\hline
					&\multicolumn{3}{c}{\bfseries }&\multicolumn{1}{c}{\bfseries }&\multicolumn{4}{c}{\bfseries Group 1}\tabularnewline
					\cline{6-9}
					&\multicolumn{1}{c}{$n$}&\multicolumn{1}{c}{$\tilde K$}&\multicolumn{1}{c}{ridge}&\multicolumn{1}{c}{}&\multicolumn{1}{c}{MTE1.1}&\multicolumn{1}{c}{MTE1.2}&\multicolumn{1}{c}{MTE1.3}&\multicolumn{1}{c}{MTE1.4}\tabularnewline
					\hline
					\multicolumn{9}{l}{\bfseries Mean of the feasible estimator}\tabularnewline
					&$1000$&$1$&$0$&&$3.506$&$1.465$&$1.064$&$ 2.369$\tabularnewline
					&$1000$&$1$&$1$&&$1.714$&$1.221$&$0.918$&$ 0.813$\tabularnewline
					&$4000$&$1$&$0$&&$3.308$&$1.469$&$1.038$&$ 2.006$\tabularnewline
					&$4000$&$1$&$1$&&$2.263$&$1.334$&$0.986$&$ 1.218$\tabularnewline
					&$4000$&$2$&$0$&&$3.437$&$0.747$&$1.365$&$-1.151$\tabularnewline
					&$4000$&$2$&$1$&&$1.812$&$0.935$&$1.239$&$ 0.714$\tabularnewline
					\hline
					\multicolumn{9}{l}{\bfseries SD of the feasible estimator}\tabularnewline
					&$1000$&$1$&$0$&&$3.286$&$0.847$&$1.165$&$ 4.838$\tabularnewline
					&$1000$&$1$&$1$&&$0.428$&$0.375$&$0.384$&$ 0.686$\tabularnewline
					&$4000$&$1$&$0$&&$1.587$&$0.395$&$0.529$&$ 2.296$\tabularnewline
					&$4000$&$1$&$1$&&$0.516$&$0.236$&$0.285$&$ 0.899$\tabularnewline
					&$4000$&$2$&$0$&&$2.417$&$0.580$&$0.589$&$ 3.335$\tabularnewline
					&$4000$&$2$&$1$&&$0.384$&$0.491$&$0.509$&$ 0.690$\tabularnewline
					\hline
			\end{tabular}\end{center}
			Note: The column labeled ``ridge'' indicates whether the ridge regression is used (1 for ``yes'' and 0 for ``no'').}
	\end{subtable}
	
	\bigskip \bigskip 
	
	\begin{subtable}{\textwidth}
		\caption{ML estimation of the Probit model} 
		{\footnotesize
			\begin{center}
				\begin{tabular}{rrcrrr}
					\hline\hline
					&\multicolumn{1}{c}{\bfseries }&\multicolumn{1}{c}{\bfseries }&\multicolumn{3}{c}{\bfseries Group 1}\tabularnewline
					\cline{4-6}
					&\multicolumn{1}{c}{$n$}&\multicolumn{1}{c}{}&\multicolumn{1}{c}{$\gamma_{11}$}&\multicolumn{1}{c}{$\gamma_{12}$}&\multicolumn{1}{c}{$\gamma_{13}$}\tabularnewline
					\hline
					\multicolumn{6}{l}{\bfseries Mean}\tabularnewline
					&$1000$&&$0.000$&$-0.086$&$0.255$\tabularnewline
					&$4000$&&$0.000$&$-0.084$&$0.252$\tabularnewline
					\hline
					\multicolumn{6}{l}{\bfseries SD}\tabularnewline
					&$1000$&&$0.040$&$ 0.040$&$0.042$\tabularnewline
					&$4000$&&$0.020$&$ 0.021$&$0.021$\tabularnewline
					\hline
		\end{tabular}\end{center}}
	\end{subtable}
\end{table}

\begin{landscape}
	\begin{table}[!p]
		\caption{Supplementary simulation results: $S = 3$} \label{table:mc6}
		\begin{subtable}{\textwidth}
			\caption{MTE estimation}
			{\footnotesize 
				\begin{center}
					\begin{tabular}{rrrrcrrrrcrrrrcrrrr}
						\hline\hline
						&\multicolumn{3}{c}{\bfseries }&\multicolumn{1}{c}{\bfseries }&\multicolumn{4}{c}{\bfseries Group 1}&\multicolumn{1}{c}{\bfseries }&\multicolumn{4}{c}{\bfseries Group 2}&\multicolumn{1}{c}{\bfseries }&\multicolumn{4}{c}{\bfseries Group 3}\tabularnewline
						\cline{6-9} \cline{11-14} \cline{16-19}
						&\multicolumn{1}{c}{$n$}&\multicolumn{1}{c}{$\tilde K$}&\multicolumn{1}{c}{ridge}&\multicolumn{1}{c}{}&\multicolumn{1}{c}{MTE1.1}&\multicolumn{1}{c}{MTE1.2}&\multicolumn{1}{c}{MTE1.3}&\multicolumn{1}{c}{MTE1.4}&\multicolumn{1}{c}{}&\multicolumn{1}{c}{MTE2.1}&\multicolumn{1}{c}{MTE2.2}&\multicolumn{1}{c}{MTE2.3}&\multicolumn{1}{c}{MTE2.4}&\multicolumn{1}{c}{}&\multicolumn{1}{c}{MTE3.1}&\multicolumn{1}{c}{MTE3.2}&\multicolumn{1}{c}{MTE3.3}&\multicolumn{1}{c}{MTE3.4}\tabularnewline
						\hline
						\multicolumn{19}{l}{\bfseries Mean of the feasible estimator}\tabularnewline
						&$1000$&$1$&$0$&&$1.184$&$0.800$&$0.671$&$0.933$&&$-0.082$&$0.428$&$0.618$&$0.531$&&$-0.268$&$-0.091$&$ 0.515$&$  2.025$\tabularnewline
						&$1000$&$1$&$1$&&$0.957$&$0.839$&$0.776$&$0.844$&&$ 0.211$&$0.311$&$0.363$&$0.371$&&$-0.232$&$-0.229$&$-0.247$&$ -0.250$\tabularnewline
						&$4000$&$1$&$0$&&$0.928$&$0.779$&$0.735$&$0.939$&&$ 0.176$&$0.348$&$0.445$&$0.381$&&$ 0.621$&$ 0.526$&$ 1.945$&$  5.591$\tabularnewline
						&$4000$&$1$&$1$&&$0.850$&$0.783$&$0.750$&$0.865$&&$ 0.207$&$0.316$&$0.381$&$0.329$&&$-0.284$&$-0.260$&$-0.299$&$ -0.326$\tabularnewline
						&$4000$&$2$&$0$&&$0.969$&$0.865$&$0.641$&$0.959$&&$ 0.261$&$0.394$&$0.393$&$0.435$&&$ 0.728$&$ 0.375$&$ 0.462$&$ -0.581$\tabularnewline
						&$4000$&$2$&$1$&&$0.863$&$0.859$&$0.667$&$0.858$&&$ 0.217$&$0.362$&$0.366$&$0.314$&&$-0.222$&$-0.266$&$-0.286$&$ -0.315$\tabularnewline
						\hline
						\multicolumn{19}{l}{\bfseries SD of the feasible estimator}\tabularnewline
						&$1000$&$1$&$0$&&$1.609$&$1.084$&$1.037$&$1.660$&&$ 2.349$&$1.721$&$1.816$&$2.576$&&$37.150$&$22.749$&$23.658$&$ 45.045$\tabularnewline
						&$1000$&$1$&$1$&&$0.805$&$0.666$&$0.647$&$0.781$&&$ 0.934$&$0.891$&$0.882$&$0.926$&&$ 4.840$&$ 4.841$&$ 4.842$&$  4.843$\tabularnewline
						&$4000$&$1$&$0$&&$0.818$&$0.464$&$0.503$&$0.820$&&$ 1.208$&$0.797$&$0.712$&$1.186$&&$27.763$&$23.265$&$44.625$&$115.593$\tabularnewline
						&$4000$&$1$&$1$&&$0.595$&$0.378$&$0.382$&$0.611$&&$ 0.699$&$0.534$&$0.477$&$0.684$&&$ 2.894$&$ 2.889$&$ 2.899$&$  2.909$\tabularnewline
						&$4000$&$2$&$0$&&$0.928$&$0.692$&$0.765$&$0.968$&&$ 1.498$&$1.012$&$1.032$&$1.488$&&$32.591$&$24.520$&$39.174$&$ 73.612$\tabularnewline
						&$4000$&$2$&$1$&&$0.613$&$0.506$&$0.594$&$0.660$&&$ 0.772$&$0.654$&$0.627$&$0.778$&&$ 2.927$&$ 2.934$&$ 2.956$&$  2.999$\tabularnewline
						\hline
				\end{tabular}\end{center}
				Note: The column labeled ``ridge'' indicates whether the ridge regression is used (1 for ``yes'' and 0 for ``no'').}
		\end{subtable}
		
		\bigskip \bigskip
		
		\begin{subtable}{\textwidth}
			\caption{ML estimation of the finite mixture Probit model} 
			{\footnotesize
				\begin{center}
					\begin{tabular}{rrcrrrcrrrcrrrcrrr}
						\hline\hline
						&\multicolumn{1}{c}{\bfseries }&\multicolumn{1}{c}{\bfseries }&\multicolumn{3}{c}{\bfseries Group 1}&\multicolumn{1}{c}{\bfseries }&\multicolumn{3}{c}{\bfseries Group 2}&\multicolumn{1}{c}{\bfseries }&\multicolumn{3}{c}{\bfseries Group 3}&\multicolumn{1}{c}{\bfseries }&\multicolumn{3}{c}{\bfseries Membership }\tabularnewline
						\cline{4-6} \cline{8-10} \cline{12-14} \cline{16-18}
						&\multicolumn{1}{c}{$n$}&\multicolumn{1}{c}{}&\multicolumn{1}{c}{$\gamma_{11}$}&\multicolumn{1}{c}{$\gamma_{12}$}&\multicolumn{1}{c}{$\gamma_{13}$}&\multicolumn{1}{c}{}&\multicolumn{1}{c}{$\gamma_{21}$}&\multicolumn{1}{c}{$\gamma_{22}$}&\multicolumn{1}{c}{$\gamma_{23}$}&\multicolumn{1}{c}{}&\multicolumn{1}{c}{$\gamma_{31}$}&\multicolumn{1}{c}{$\gamma_{32}$}&\multicolumn{1}{c}{$\gamma_{33}$}&\multicolumn{1}{c}{}&\multicolumn{1}{c}{$\pi_1$}&\multicolumn{1}{c}{$\pi_2$}&\multicolumn{1}{c}{$\pi_3$}\tabularnewline
						\hline
						\multicolumn{18}{l}{\bfseries Mean}\tabularnewline
						&$1000$&&$ 0.005$&$-0.393$&$0.948$&&$-0.050$&$0.282$&$-1.111$&&$-0.005$&$-0.090$&$ 0.000$&&$0.451$&$0.323$&$0.226$\tabularnewline
						&$4000$&&$-0.007$&$-0.474$&$0.670$&&$ 0.007$&$0.426$&$-0.751$&&$-0.017$&$-0.041$&$-0.010$&&$0.500$&$0.331$&$0.169$\tabularnewline
						\hline
						\multicolumn{18}{l}{\bfseries SD}\tabularnewline
						&$1000$&&$ 0.478$&$ 0.592$&$0.474$&&$ 0.766$&$0.957$&$ 0.633$&&$ 1.365$&$ 1.511$&$ 1.008$&&$0.110$&$0.100$&$0.107$\tabularnewline
						&$4000$&&$ 0.189$&$ 0.232$&$0.187$&&$ 0.319$&$0.415$&$ 0.294$&&$ 0.849$&$ 1.046$&$ 0.531$&&$0.085$&$0.075$&$0.087$\tabularnewline
						\hline
			\end{tabular}\end{center}}
		\end{subtable}
	\end{table}
\end{landscape}

\begin{table}
	\caption{True parameter values} \label{table:true1}
	\begin{subtable}{\textwidth}
		\caption{MTE}
		{\footnotesize
			\begin{center}
				\begin{tabular}{rrrrcrrrr}
					\hline \hline
					\multicolumn{4}{c}{\bfseries Group 1} & & \multicolumn{4}{c}{\bfseries Group 2} \tabularnewline \cline{1-4} \cline{6-9}
					MTE1.1 & MTE1.2 & MTE1.3 & MTE1.4 & & MTE2.1 & MTE2.2 & MTE2.3 & MTE2.4 \tabularnewline \hline 
					$0.84$ & $0.76$ & $0.76$ & $0.84$ & & $0.34$ & $0.26$ & $0.26$ & $0.34$ \tabularnewline \hline
				\end{tabular}
		\end{center}}
	\end{subtable}
	
	\bigskip \bigskip
	
	\begin{subtable}{\textwidth}
		\caption{The finite mixture Probit model}
		{\footnotesize
			\begin{center}
				\begin{tabular}{rrrcrrrcrr}
					\hline\hline
					\multicolumn{3}{c}{\bfseries Group 1}&&\multicolumn{3}{c}{\bfseries Group 2}&&\multicolumn{2}{c}{\bfseries Membership}\tabularnewline
					\cline{1-3} \cline{5-7} \cline{9-10}
					\multicolumn{1}{c}{$\gamma_{11}$}&\multicolumn{1}{c}{$\gamma_{12}$}&\multicolumn{1}{c}{$\gamma_{13}$}&&\multicolumn{1}{c}{$\gamma_{21}$}&\multicolumn{1}{c}{$\gamma_{22}$}&\multicolumn{1}{c}{$\gamma_{23}$}&&\multicolumn{1}{c}{$\pi_1$}&\multicolumn{1}{c}{$\pi_2$}\tabularnewline
					\hline
					$0$ & $-0.5$ & $0.5$ && $0$ & $0.5$ & $-0.5$ && $0.6$ & $0.4$ \tabularnewline 
					\hline
				\end{tabular}
		\end{center}}
	\end{subtable}
\end{table}

\begin{table}[!tbp]
	{\footnotesize
		\caption{Supplementary simulation results: model selection} \label{table.AIC.BIC} 
		\begin{center}
			\begin{tabular}{rcrrrcrrr}
				\hline\hline
				\multicolumn{1}{c}{\bfseries }&\multicolumn{1}{c}{\bfseries }&\multicolumn{3}{c}{\bfseries AIC}&\multicolumn{1}{c}{\bfseries }&\multicolumn{3}{c}{\bfseries BIC}\tabularnewline
				\cline{3-5} \cline{7-9}
				\multicolumn{1}{c}{$n$}&\multicolumn{1}{c}{}&\multicolumn{1}{c}{$S=1$}&\multicolumn{1}{c}{$S=2$}&\multicolumn{1}{c}{$S=3$}&\multicolumn{1}{c}{}&\multicolumn{1}{c}{$S=1$}&\multicolumn{1}{c}{$S=2$}&\multicolumn{1}{c}{$S=3$}\tabularnewline
				\hline
				$1000$&&$0.5\%$&$85.8\%$&$13.7\%$&&$59.9\%$&$40.1\%$&$0\%$ \tabularnewline
				$4000$&&$0\%$&$88.5\%$&$11.5\%$&&$0.1\%$&$99.9\%$&$0\%$ \tabularnewline
				\hline
	\end{tabular}\end{center}
	}
\end{table}

\end{document}